\newtheorem{claim}{Claim}[section]
\newtheorem{remark}{Remark}[section]
\newtheorem{definition}{Definition}[section]
\newtheorem{corollary}{Corollary}[section]
\newtheorem{lemma}{Lemma}[section]
\newtheorem{theorem}{Theorem}[section]
\newcommand{\ind}{\mathds{1}}
\newcommand{\tmix}{t_\mathrm{mix}}
\newcommand{\E}{\mathbb{E}}
\title{Estimating Hitting Times Locally At Scale}
\author{
  Themistoklis Haris \\
  Department of Computer Science\\
  Boston University
  \And 
  Fabian Spaeh\\
  Celonis\footnotemark[1]
  \And 
  Spyros Dragazis\\
  Department of Computer Science\\
  Boston University
  \And 
  Charalampos Tsourakakis\\
  Department of Computer Science\\
  Boston University
  %
}
\begin{document}

\maketitle

\begin{abstract}
    Hitting times provide a fundamental measure of distance in random processes, quantifying the expected number of steps for a random walk starting at node $u$ to reach node $v$. 
    They have broad applications across domains such as network centrality analysis, ranking and recommendation systems, and epidemiology.  
    
    In this work, we develop local algorithms for estimating hitting times between a pair of vertices $u,v$ without accessing the full graph, overcoming scalability issues of prior global methods. 
    Our first algorithm 
    uses the key insight  that hitting time computations can be truncated at the meeting time of two independent random walks from 
    $u$ and $v$. 
    This leads to an efficient estimator analyzed via the Kronecker product graph and Markov Chain Chernoff bounds. 
    We also present an algorithm extending the work of \citet{peng2021local} that introduces a novel adaptation of the spectral cutoff technique to account for the asymmetry of hitting times. 
    This adaptation captures the directionality of the underlying random walk and requires non-trivial modifications to ensure accuracy and efficiency. 
    In addition to the algorithmic upper bounds, we also provide tight asymptotic lower bounds.  
    
    %
    We also reveal a connection between hitting time estimation and distribution testing, and validate our algorithms using experiments on both real and synthetic data\footnote{The code is available at \url{https://github.com/tkhar/hitting-time-at-scale}.}.

\end{abstract}

\footnotetext[2]{Work done partially while at Boston University.}



    
    


\section{Introduction}
Markov chains are a fundamental framework for modeling random processes, with widespread applications across scientific domains including bioinformatics~\citep{krogh1998introduction}, economics~\citep{chib1996markov}, and network science~\citep{xia2019random}. They are particularly prominent in modern machine learning, natural language processing~\citep{almutiri2022markov}, and the analysis of large-scale social networks~\citep{amati2018social}, where the underlying graphs can be massive. In such settings, storing or processing the entire graph globally is often infeasible due to memory or computational constraints. This motivates the development of algorithms that access and process only a small portion of the graph—so-called \emph{local} algorithms.

In this work, we study the problem of computing the \emph{hitting time} statistic between two vertices in an Markov chain. The input is an undirected graph~$G$, where we consider simple random walks: from any vertex, the next step is chosen uniformly at random among its neighbors. Given two vertices $u$ and $v$, the hitting time $H_G(u,v)$ is defined as the expected number of steps required for a random walk starting at $u$ to reach $v$ for the first time. This quantity is a fundamental measure in the analysis of random processes and has been extensively studied from various theoretical perspectives~\citep{port1967hitting, patel2016hitting, aldous1989hitting, janson2012hitting, sauerwald2019random, oliveira2019random, feige1995tight, cohen2016faster}. It also serves as a crucial building block in several algorithmic applications, including recommender systems~\citep{cooper2014random} and learning mixtures of Markov chains~\citep{spaeh2024ultra}.

Despite its importance, computing hitting times exactly is computationally expensive, requiring up to $O(n^3)$ time in the worst case~\citep{xia2019random}. To address this, several approaches have been developed that either approximate $H_G(u,v)$~\citep{cohen2016faster} or impose structural assumptions on the input graph to enable more efficient computation~\citep{von2010hitting}. However, these methods are inherently \emph{global}: they require access to the entire graph to produce an estimate. This global nature poses a significant scalability challenge, making such algorithms impractical for massive graphs with millions or billions of nodes.

In this paper, we introduce a suite of algorithms that estimate hitting times \emph{locally}, by executing a small number of short random walks centered around the vertices $u$ and $v$. We rigorously analyze these algorithms and formally prove that they achieve approximation guarantees comparable to those of global methods—while operating with significantly lower time complexity and without requiring access to the full graph. 

\paragraph{Related work} Our work is closely related to that of~\citet{peng2021local}, who design local algorithms for computing the \emph{effective resistance} $R_{\text{eff}}(u,v)$ -- a distance measure that captures the voltage difference between $u$ and $v$ when one unit of current is injected at $u$ and edges act as electrical resistors. 
Their approach is based on spectrally decomposing the random walk transition matrix, interpreting $R_{\text{eff}}(u,v)$ as a light-tailed sum that can be efficiently truncated.
Effective resistance is linked to the hitting time as $H_G(u,v)+H_G(v,u) = 2m R_{\text{eff}}(u,v)$. As a result, the hitting time can be efficiently estimated using the effective resistance in graphs where $H_G(u,v) \approx H_G(v,u)$, such as vertex transitive graphs. However, real-world networks exhibit an asymmetric and skewed hitting time distribution, meaning that new ideas beyond the work of \citet{peng2021local} need to be introduced.
%
%

We also build directly on the work of~\citet{cohen2016faster}, who derive an identity expressing the hitting time in terms of the Laplacian of the graph~$G$. While powerful, their algorithm remains global in nature. Finally, a key component of our method is the connection between hitting time and the notion of \emph{meeting time}—the expected time for two independent random walks, starting at $u$ and $v$, to meet. Related concepts such as the \emph{coalescence}  time have been studied extensively and
are known for many specific graph families~\citep{kanade2023coalescence, cooper2013coalescing, oliveira2012coalescence}. We utilize this connection from an algorithmic lens, using it to obtain an efficient and practical estimator.

\paragraph{Our Results}
We summarize our main contributions below:
\begin{enumerate}
    \item We design and analyze an efficient local algorithm for estimating hitting times, based on a novel analysis of meeting times (\autoref{alg:ht-meeting}). We demonstrate its practical effectiveness through extensive experiments on both synthetic and real-world datasets (\autoref{sec:experiments}).
    
    \item We extend the spectral cutoff technique introduced by~\citet{peng2021local} to derive an alternative local algorithm for hitting time estimation (\autoref{alg:local-hitting-times}). While generally less efficient than our meeting-based method, it can outperform it in certain settings.
    
    \item We provide a detailed theoretical study of the trade-off between approximation and sample complexity in hitting time estimation, including both upper and lower bounds (\autoref{sec:walk-sampling-main}).
    
    \item We uncover a theoretical connection between hitting time estimation and sublinear-time property testing (\autoref{sec:local-algs-via-testing}).
\end{enumerate}
\section{Preliminaries}
Let $G = (V,E)$ be an undirected graph with $(n,m) = (|V|,|E|)$ and $A \in \{0,1\}^{n\times n}$ be its adjacency matrix.
Let $D$ be a diagonal matrix containing the degrees of the vertices in $V$ in its diagonal. Let $P = D^{-1} A$ be the row-stochastic transition matrix of $G$. Throughout, we assume that $G$ is connected and the associated random walk aperiodic.
It is known that the transition matrix has a full spectrum $\lambda_1 = 1 > \lambda_2\geq \cdots \geq \lambda_n > -1$. A random walk in $G$ is a sequence of vertices $(X_t)_{t \ge 0}$
such $X_t \in V$ and 
$$\Pr[X_{t+1} = v \mid X_t = u] = \frac 1 {\deg(u)}
= P_{uv}
$$
We also let $\pi$ be the stationary distribution of $G$, where $\pi_v = \frac{\deg v}{m}$. It can be shown that any initial distribution over the vertices eventually converges to $\pi$:
\begin{definition}[Mixing Time]
The \textbf{$\varepsilon$-mixing time} of $G$ is defined as:
$$
\tmix = \tmix(\varepsilon) = \min\{t:\max_{x\in\Delta_n}||xP^t-\pi||_{TV} \leq \varepsilon\}
$$ 
\end{definition}

\vspace{1mm}
\begin{definition}[Hitting Time]
Let $u, v \in V$. The \textbf{hitting time} $H_G(u,v)$ is the expected number of steps required to reach $v$ from $u$. In other words, if we let $T = \min \{t \ge 0 : X_t = v \}$ then
\[
H_G(u,v)=\E[T \mid X_0 = u] .
\]
\end{definition}
The following useful Lemma expresses $H_G(u,v)$ in terms of the Laplacian $I - P^\top$. Let us use
$\mathbf e_v \in \mathbb R^n$
to denote the $v$-th vector of the standard basis and $\chi_{uv} = \mathbf e_u - \mathbf e_v$.
\begin{lemma}[\citep{cohen2016faster}]
\label{lemma:hitting-time-laplacian-formula}
It is true that:
\begin{align}
    H_G(u,v) = \left(\mathbf{1}-\frac{1}{s_v}\mathbf{e}_v\right)^\top (I-P^\top)^+ \chi_{uv}
\end{align}
\end{lemma}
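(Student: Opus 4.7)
The plan is to start from the first-step (one-step) recurrence satisfied by hitting times, reduce the problem to solving a linear system $(I-P)h=b$ for a suitable vector $b$, and then invoke properties of the Moore--Penrose pseudoinverse. Fix the target $v$ and let $h\in\mathbb{R}^n$ be the vector with entries $h_w=H_G(w,v)$. The standard conditioning on the first step gives $h_w=1+\sum_x P_{wx}h_x$ for $w\neq v$, together with the boundary condition $h_v=0$. Rewriting these equations compactly yields
\[
(I-P)\,h \;=\; \mathbf{1} - c\,\mathbf{e}_v
\]
for some scalar $c$ absorbing the deviation at coordinate $v$.

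Next I would pin down $c$ using a consistency/range condition. Since $G$ is connected and aperiodic, $\operatorname{null}(I-P^\top)=\operatorname{span}(\pi)$ (with $\pi$ the unique stationary distribution), and correspondingly $\operatorname{range}(I-P)=\pi^\perp$. For $(I-P)^+$ to recover a solution, the right-hand side must lie in $\pi^\perp$, i.e. $\pi^\top(\mathbf{1}-c\,\mathbf{e}_v)=0$. Solving gives $c=1/s_v$, where $s_v$ is the normalization coming from $\pi$ at $v$ (so $s_v=\pi_v$ under the normalization used in the preliminaries). This is the step that fixes the peculiar coefficient $1/s_v$ in the statement.

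Once $c$ is identified, the vector $h$ is determined by $(I-P)^+(\mathbf{1}-\tfrac{1}{s_v}\mathbf{e}_v)$ up to an additive vector in $\operatorname{null}(I-P)=\operatorname{span}(\mathbf{1})$. Concretely,
\[
h \;=\; (I-P)^+\!\left(\mathbf{1}-\tfrac{1}{s_v}\mathbf{e}_v\right)+a\,\mathbf{1}
\]
for some $a\in\mathbb{R}$. Using $h_v=0$, I get $H_G(u,v)=h_u-h_v=\chi_{uv}^\top h$, and the $a\mathbf{1}$ term cancels since $\chi_{uv}^\top\mathbf{1}=0$. Therefore
\[
H_G(u,v) \;=\; \chi_{uv}^\top (I-P)^+\!\left(\mathbf{1}-\tfrac{1}{s_v}\mathbf{e}_v\right).
\]
Finally, transposing via the identity $(I-P)^+ = ((I-P^\top)^+)^\top$ yields the claimed form.

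The main obstacle I expect is the range/pseudoinverse bookkeeping: one must argue carefully that $\operatorname{null}(I-P^\top)=\operatorname{span}(\pi)$ and $\operatorname{null}(I-P)=\operatorname{span}(\mathbf{1})$ (both rely on aperiodicity and connectedness), and then use orthogonality against $\pi$ to force the correct value of $c$, together with orthogonality against $\mathbf{1}$ (via $\chi_{uv}$) to kill the homogeneous component. The rest is essentially a rearrangement of the first-step recurrence.
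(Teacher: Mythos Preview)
The paper does not prove this lemma; it is quoted verbatim as a result of \citet{cohen2016faster} and used as a black box (e.g.\ in the proof of Lemma~\ref{lemma:hitting-time-decomposition}). There is therefore no ``paper's own proof'' to compare against.

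That said, your argument is correct and self-contained. The one-step recurrence indeed gives $(I-P)h=\mathbf{1}-c\,\mathbf e_v$ with the $v$-th coordinate unspecified; since $h$ exists, the right-hand side automatically lies in $\operatorname{range}(I-P)=\pi^\perp$, and orthogonality to $\pi$ pins down $c=1/\pi_v$ (which is what the paper's $s_v$ denotes). The pseudoinverse then recovers $h$ modulo $\operatorname{span}(\mathbf 1)=\ker(I-P)$, and pairing with $\chi_{uv}$ kills that ambiguity. The final transpose step uses the general identity $(A^+)^\top=(A^\top)^+$, which holds for any Moore--Penrose pseudoinverse. One small remark: you do not actually need aperiodicity here---connectedness alone guarantees that $\ker(I-P)=\operatorname{span}(\mathbf 1)$ and $\ker(I-P^\top)=\operatorname{span}(\pi)$ are one-dimensional, since the eigenvalue $1$ of $P$ is simple for an irreducible chain regardless of period.
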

We define the \textbf{effective resistance} $R_{\text{eff}}(u,v)$ in terms of the hitting time, although other equivalent definitions also exist
\citep{spielman2019spectral}:
\begin{align}
R_{\text{eff}}(u,v) = \frac{1}{2m}(H_G(u,v)+H_G(v,u))
\end{align}
We shall make scarce use of the \textit{Kronecker Product} between two graphs. 
\begin{definition}[Kronecker Product]
The \textbf{Kronecker product} of two graphs $G = (V_G,E_G)$ and $H=(V_H,E_H)$ is defined as a graph $G\times H = (V_G\times V_H,E_{G\times H})$ where $(\{u,v\},\{w,z\})\in E_{G\times H}$ if and only if $(u,w) \in E_G$ and $(v,z) \in E_H$.
\end{definition}
Additional preliminary definitions and results are shown in \autoref{appx:additional-prelims}.

\section{A Meeting Time Perspective}
\label{sec:hitting-time-meeting}
Our primary contribution is an algorithm that estimates the hitting time by relating it to the \textit{meeting time} of two parallel random walks starting at $u$ and $v$. We simulate the walks step-by-step, accumulating terms in the hitting time sum until they meet. Due to the cancellation structure of the infinite sum defining $H_G(u,v)$, this truncated computation suffices. Our algorithm is given as pseudocode in \autoref{alg:ht-meeting}.

\begin{algorithm}
\textbf{Input:} Graph $G = (V, E)$,
vertices $u, v \in V$, mixing time $t_{\text{mix}}$, accuracy $\varepsilon$ \\
\textbf{Output:} Estimate $\widetilde H_{uv}$ of the
hitting time $H_G(u,v)$\\
Define
\[
    t_{\mathrm{max}} = \frac {100\cdot \tmix\ln\left(\frac{n}{\pi_G(u)\pi_G(v)}\right)} {\| \pi_G \|_2^2}
    \quad\textrm{ and }\quad
    \ell = \frac{2t_{\mathrm{max}}^2 \ln n}{\pi_G^2(v) \varepsilon^2}
\] \\
Initialize random walks $X^{(i)}_0 = u$
for $i \in I$ with $I = \{1, 2, \dots, \ell\}$ \\
Initialize random walks $Y^{(j)}_0 = v$
for $j \in J$ with $J = \{1, 2, \dots, \ell\}$ \\
Let $\widetilde H_{uv} \gets 0$\\
\For{$t = 0, 1, 2, \dots, t_{\mathrm{max}}$}{
    Let $x_w := |\{ i \in I : X_t^{(i)} = w \}|$ and
    $y_w := |\{ j \in J : Y_t^{(j)} = w \}|$ for each $w \in V$.\\
    \For{$w \in V$}{
        $z = \min\{x_w, y_w\}$ \\
        Remove $z$ arbitrary indices $i \in I$ for which $X^{(i)}_t = w$
        and $z$ indices $j \in J$ for which $Y^{(j)} = w$. \\
    }
    Update
    \[
        \widetilde H_{uv} \gets \widetilde H_{uv} +
        \frac {y_v - x_v} {\ell\cdot\pi_G(v)}
    \] \\
    Advance the random walks $X^{(i)}$ for $i \in I$
    and $Y^{(j)}$ for $j \in J$
    by one step as defined by $G$. \\
}
\eIf{$I = \emptyset$}{
    \Return{$\widetilde H_{uv}$}
}{
    \Return{$\mathrm{Failure}$}
}
\caption{Estimating the Hitting Time via Meeting Times}
\label{alg:ht-meeting}
\end{algorithm}

To analyze \autoref{alg:ht-meeting}, we first bound the meeting time in probability using the mixing time:
\begin{lemma}
    \label{lem:meeting}
    Let $(X_t)_{t \ge 0}$ and $(Y_t)_{t \ge 0}$
    be two random walks starting at states
    $X_0 = u$ and $Y_0 = v$. Let
    $T = \min \{ t \ge 0 : X_t = Y_t \}$
    be the meeting time of the two random
    walks. Then,
    \[
        \Pr[T > t] \le
       O\left(
        \frac{1}{\sqrt{\pi_G(u) \pi_G(v)}}
        \exp\left(
        -\frac{\| \pi_G \|_2^2 t}{72 \tmix}
        \right)
        \right)
    \]
\end{lemma}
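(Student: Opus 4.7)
The plan is to lift the pair of walks to a single Markov chain on the product space and reduce the meeting event to a hitting event for that chain. Let $Z_t = (X_t, Y_t) \in V \times V$. Since $X_t$ and $Y_t$ are independent and each evolves by $P$, the chain $Z_t$ has transition matrix $P \otimes P$, which is exactly the simple random walk on the Kronecker graph $G \times G$. Its stationary distribution is the product measure $\pi_G \otimes \pi_G$, and since the eigenvalues of $P \otimes P$ are products $\lambda_i \lambda_j$ with $|\lambda_i| < 1$ for $i \ge 2$ (using aperiodicity of $G$), the product chain is ergodic with spectral gap at least that of $P$ itself, so $\tmix(Z) = O(\tmix)$.

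Next I would identify the meeting event with hitting the diagonal $D = \{(w,w) : w \in V\}$: clearly $T = \min\{t : Z_t \in D\}$, and the stationary measure of $D$ is
\[
    (\pi_G \otimes \pi_G)(D) = \sum_{w \in V} \pi_G(w)^2 = \| \pi_G \|_2^2 .
\]
Thus $T$ is the hitting time to a set of stationary mass $\|\pi_G\|_2^2$ in a chain that mixes in $O(\tmix)$ steps. Intuitively, once $Z_t$ mixes, each subsequent step lands in $D$ with probability roughly $\|\pi_G\|_2^2$, so $T$ should be geometrically distributed with rate $\Omega(\|\pi_G\|_2^2 / \tmix)$.

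To make this rigorous I would invoke a Markov chain Chernoff-type tail bound for the number of visits $N_t = \sum_{s=0}^{t-1} \mathbf{1}_{Z_s \in D}$ of a reversible chain to a set, started from a deterministic state. The standard bounds (Gillman, Lezaud, Chung--Lam) take the form
\[
    \Pr_{(u,v)}[N_t = 0] \;\le\; \left\| \frac{\delta_{(u,v)}}{\pi_G \otimes \pi_G} \right\|_2 \exp\!\left(-c\, \|\pi_G\|_2^2 \cdot \gamma(Z) \cdot t \right),
\]
where $\gamma(Z)$ is the spectral gap of $Z$. Two observations close the argument: first, the $L^2$ density prefactor equals $1/\sqrt{\pi_G(u)\pi_G(v)}$, which accounts for the leading factor in the lemma; second, $\gamma(Z) \ge \gamma(P) = \Omega(1/\tmix)$, and plugging in the explicit constant from the Chernoff bound yields the denominator $72\,\tmix$. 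Observing that $\{T > t\} = \{N_{t+1} = 0\}$ then concludes.

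I expect the main obstacle to be the bookkeeping in the last step: selecting a Chernoff bound whose prefactor is exactly the $L^2$ Radon--Nikodym density (rather than an $L^\infty$ or variance-based one) and whose constants yield the stated $72$ after relating the spectral gap of $P \otimes P$ to $\tmix$. A secondary technical point is ensuring the product chain is genuinely aperiodic and reversible so that these bounds apply as stated; aperiodicity of $G$ is assumed in the preliminaries and transfers to $P \otimes P$ since $\lambda_n^2 < 1$, and reversibility of $P$ gives reversibility of $P \otimes P$ under $\pi_G \otimes \pi_G$.
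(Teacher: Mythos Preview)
Your proposal is correct and follows essentially the same route as the paper: lift to the product chain on $G\times G$, observe that $T>t$ is the event that the diagonal (of stationary mass $\|\pi_G\|_2^2$) is not visited in $t$ steps, and apply the Markov chain Chernoff bound of Chung--Lam--Liu with $\delta=1$, whose $\|\phi\|_\pi$ prefactor for a point mass at $(u,v)$ is exactly $1/\sqrt{\pi_G(u)\pi_G(v)}$. The only cosmetic difference is that the paper's Chernoff bound (Theorem~\ref{thm:chernoff-markov}) is already stated in terms of the mixing time rather than the spectral gap, so your worry about converting $\gamma(Z)$ to $\tmix$ disappears once you invoke that version directly; the paper also simply asserts $\tmix(G\times G)=\tmix(G)$ rather than arguing via eigenvalue products.
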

\begin{proof}
The proof uses a Markov Chain concentration bound on the Kronecker product $G\times G$ and is included in \autoref{appx:lem-meeting-proof}.
\end{proof}
Next, we argue that if we make $t_{\max}$ large enough, all the random walks will, with high probability, meet before then.
\begin{lemma}
    \label{lem:ht-fail}
    Algorithm~\ref{alg:ht-meeting} outputs $\mathrm{Failure}$ with probability
    at most $\frac{1}{n}$.
\end{lemma}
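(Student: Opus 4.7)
My plan is to apply \autoref{lem:meeting} to each of the $\ell$ parallel pairs $(X^{(i)}, Y^{(i)})$ of independent random walks, combine the bounds via a union bound to show that with probability at least $1 - 1/n$ every pair meets within $t_{\max}$ steps, and then conclude that this event suffices to drive the algorithm's greedy matching to empty $I$.

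For a single pair, substituting the definition of $t_{\max}$ into \autoref{lem:meeting} makes the exponential factor equal to $(\pi_G(u)\pi_G(v)/n)^{100/72}$. Combined with the $1/\sqrt{\pi_G(u)\pi_G(v)}$ prefactor and the trivial bound $\pi_G(u)\pi_G(v) \le 1$, this yields $\Pr[T_i > t_{\max}] \le O(n^{-c})$ for a constant $c$ controlled by the $100$ in the definition of $t_{\max}$. Union-bounding over the $\ell$ pairs, and noting that $\ell$ is a fixed polynomial in $n$, $\tmix$, $1/\|\pi_G\|_2$, and $1/\varepsilon$, one can choose the leading constant in $t_{\max}$ large enough that $\ell \cdot O(n^{-c}) \le 1/n$.

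Finally, I would argue that on the event ``every original pair meets by $t_{\max}$'', the algorithm empties $I$. The core observation is that whenever $(X^{(i)}, Y^{(i)})$ coincides at a vertex $w$ at time $t$ with both walks still in $I_t$ and $J_t$, the counts $x_w$ and $y_w$ are both at least $1$, so the greedy rule at $w$ forces at least one removal on each side. The main obstacle, which I expect to be the core technical step, is that the algorithm may ``cross-match'' $X^{(i)}$ with some $Y^{(j)}$ for $j \neq i$, potentially wasting a later natural meeting of $(X^{(i)}, Y^{(i)})$. This would be handled by a monotone coupling against a simpler bookkeeping process that tracks only original pairs, together with an induction on $t$ showing that $|I_t|$ never exceeds the number of original pairs whose meeting time exceeds $t$. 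At $t = t_{\max}$, this count is zero on the good event, so $I = \emptyset$ and no Failure is returned.
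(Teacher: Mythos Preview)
There is a genuine gap in the combinatorial step. The event ``every diagonal pair $(X^{(i)},Y^{(i)})$ meets by time $t_{\max}$'' is \emph{not} sufficient to force the algorithm to empty $I$, and the invariant you propose, $|I_t|\le |\{i:T_{i,i}>t\}|$, is false in general. Take $\ell=2$ and the (non-eliminated) trajectories
\[
X^{(1)}:u,a,b,d,\qquad X^{(2)}:u,e,f,c,\qquad Y^{(1)}:v,g,b,h,\qquad Y^{(2)}:v,a,i,c
\]
at times $0,1,2,3$, with all named vertices distinct. Both diagonal pairs meet by time~$3$ ($T_{1,1}=2$, $T_{2,2}=3$), so your good event holds for $t_{\max}=3$. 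Yet at $t=1$ the algorithm cross-matches $X^{(1)}$ with $Y^{(2)}$ at vertex $a$, leaving $I=\{2\}$ and $J=\{1\}$; the surviving walks $X^{(2)}$ and $Y^{(1)}$ never coincide, so the algorithm returns $\mathrm{Failure}$. At $t=3$ your invariant reads $1\le 0$. No ``monotone coupling'' that looks only at diagonal meetings can repair this: once $Y^{(i)}$ is consumed by a cross-match, the later diagonal meeting of $(X^{(i)},Y^{(i)})$ is simply invisible to the algorithm.

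The paper avoids this by conditioning on the stronger event $\mathcal E$ that \emph{all} $\ell^2$ cross-pairs $(X^{(i)},Y^{(j)})$ meet by $t_{\max}$, and union-bounds over $\binom{\ell}{2}$ pairs rather than $\ell$. Under $\mathcal E$, if some $i\in I$ and $j\in J$ both survived to the end, then at the time $t\le t_{\max}$ when $X^{(i)}$ and $Y^{(j)}$ share a vertex $w$ the rule $z=\min(x_w,y_w)$ wipes out one entire side at $w$, eliminating either $i$ or $j$ --- a contradiction. Your union-bound arithmetic is fine (and indeed a factor $\ell$ tighter), but it is the all-pairs event, not the diagonal one, that actually suffices.
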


\begin{proof}
%
    Let us pretend that random walks in
    Algorithm~\ref{alg:ht-meeting} continue
    after meeting and are not eliminated.
    Let then $T_{i, j} = \min \{ t \ge 0 : X_t^{(i)}
    = Y_t^{(j)} \}
    \in \mathbb N_{\ge 0}$
    be the meeting time of the 
    of the $i$-th random walk starting from $u$
    and the $j$-th random walk starting from $v$. 

    Let $\mathcal{E}$ be the event corresponding to $T_{i,j} \leq t_{\max}$ for all $i,j \in [\ell]^2$. We claim that, conditioned on $\mathcal{E}$, \autoref{alg:effres-meeting} does not output $\mathrm{Failure}$. Consider a bipartite graph $G_B = (V_B,E_B)$ where $V_B = A\times B$ with $|A|=|B|=\ell$ correspond to the random walks $X$ and $Y$. The edges of this graph represent the meeting event of two random walks and they are provided in an online manner: in timestep $t=0,1,...,t_{\max}$ we see observe a set of random walks in $A$ meeting another set of random walks in $B$ for the first time. In other words, we observe all the edges $A_t \times B_t$ where $A_t \subseteq A$ and $B_t \subseteq B$. Once an edge has been given, it is not given again, so this is a partition of the edge set of the graph.

    Since $\mathcal{E}$ occurs, the bipartite graph is complete and $\{A_t\times B_t\}_{t=0}^{t_{\max}}$ is a partition of all the edges. Our algorithm tries to produce a matching $\sigma$ between $A$ and $B$ by greedily matching and removing vertices as the edges appear in batches. We output $\mathrm{Failure}$ if and only if we fail to output a matching. We claim that, given $\mathcal{E}$, we will always find a matching. 

    This can be seen via an inductive argument: we claim that before any timestep $t$ we have a valid matching over all vertices that have appeared as part of previously revealed edges. This easily holds at $t=0$ because the initial matching is empty. Suppose then that it holds for some $t$. When the edges $A_t \times B_t$ are revealed, consider the subsets $A_t'\subseteq A_t$ and $B_t'\subseteq B_t$ of unmatched nodes. There has to exist a perfect matching in $A_t' \times B_t'$ because all those edges are revealed. We pick any such matching arbitrarily and so our matching is successfully extended. 

    Finally, we need to bound the probability that $\mathcal{E}$ holds. We know that for any $i,j \in [\ell]$ we have:
    
    \[
        \Pr[T_{i, j} > t_{\mathrm{max}}] \le O\left( \sqrt{\pi(u) \pi(v)}
        \exp\left( - \frac{\mu t_{\mathrm{max}}}{72 t_{\mathrm{mix}}} \right) \right).
    \]
    Hence, by a union bound over ${\ell \choose 2}$ pairs of random walks:
    \begin{align*}
        \Pr[\neg \mathcal{E}]\le O\left( \frac{\ell^2}{ \sqrt{\pi_G(u) \pi_G(v)}}
        \exp\left( - \frac{\|\pi_G\|_2^2 t_{\mathrm{max}}}
        {72 t_{\mathrm{mix}}} \right) \right)
        = \frac{1}{n}
    \end{align*}
    by our choice of
    $t_{\mathrm{max}}$ and $\ell$, where we can assume $\ell \leq n^3$ as an upper bound.\qedhere
\end{proof}

Next, we express the hitting time as a function
of two random walks that stop upon meeting. To that end, we first show a structural result similar to \citep{peng2021local}: By spectrally decomposing the Laplacian matrix, we can express the hitting time as an infinite series, from which we can subsequently truncate the tail.
\begin{lemma}
    Let $W:=P^\top = AD^{-1}$. The following identity holds: 
    \begin{align}
        \label{eq:hitting-time-decomp}
        H_{G}(u,v) = \sum_{i=0}^{\infty}\left({\bf 1}-\frac{1}{\pi(v)}{\bf 1}_{v}\right)^{\top}W^{i}\chi_{uv}
    \end{align}
\end{lemma}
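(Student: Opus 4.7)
The plan is to start from the Cohen et al.~identity in \autoref{lemma:hitting-time-laplacian-formula}, namely $H_G(u,v) = \left(\mathbf{1} - \tfrac{1}{\pi(v)}\mathbf{e}_v\right)^\top (I - W)^+ \chi_{uv}$, and expand the pseudoinverse into a Neumann-like series in $W$. The essential difficulty is that $I - W$ is singular: $W$ has a simple eigenvalue $1$ with right eigenvector $\pi$ (because $W\pi = P^\top\pi = \pi$) and left eigenvector $\mathbf{1}$ (because $\mathbf{1}^\top W = \mathbf{1}^\top A D^{-1} = \mathbf{1}^\top$), so the naive series $\sum_i W^i$ diverges. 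To isolate that eigenspace I would set $\Pi := \pi\mathbf{1}^\top$ and $Q := W - \Pi$. The two identities above give $\Pi^2 = \Pi$ and $\Pi Q = Q\Pi = 0$, from which induction yields $W^i = \Pi + Q^i$ for every $i \ge 1$. Since $G$ is connected and aperiodic, the spectrum of $Q$ consists exactly of the non-unit eigenvalues of $W$, all lying in $(-1,1)$, so $\sum_{i \ge 0} Q^i = (I - Q)^{-1}$ converges absolutely.

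Next I would simplify the summand of the series in (\ref{eq:hitting-time-decomp}). Because $\mathbf{1}^\top W = \mathbf{1}^\top$ and $\mathbf{1}^\top \chi_{uv} = 0$, every $\mathbf{1}$-part drops out, leaving $\left(\mathbf{1} - \tfrac{1}{\pi(v)}\mathbf{e}_v\right)^\top W^i \chi_{uv} = -\tfrac{1}{\pi(v)} \mathbf{e}_v^\top W^i \chi_{uv}$. Moreover, $\Pi \chi_{uv} = \pi \cdot (\mathbf{1}^\top \chi_{uv}) = 0$, so $W^i \chi_{uv} = Q^i \chi_{uv}$ for $i \ge 1$ and $W^0 \chi_{uv} = \chi_{uv}$. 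Summing these, the partial sums converge to $z := (I - Q)^{-1} \chi_{uv}$, and the right-hand side of (\ref{eq:hitting-time-decomp}) evaluates to $\left(\mathbf{1} - \tfrac{1}{\pi(v)}\mathbf{e}_v\right)^\top z$.

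To close the loop, I would relate $z$ to $(I - W)^+ \chi_{uv}$. A direct computation gives $(I - W) z = (I - Q) z - \Pi z = \chi_{uv} - \pi (\mathbf{1}^\top z)$, and $\mathbf{1}^\top z = 0$ since $\mathbf{1}^\top Q = 0$ forces $\mathbf{1}^\top (I - Q)^{-1} = \mathbf{1}^\top$, combined with $\mathbf{1}^\top \chi_{uv} = 0$. Thus $z$ is a valid solution to $(I - W) z = \chi_{uv}$ and differs from $(I - W)^+ \chi_{uv}$ only by an element of $\ker(I - W) = \mathrm{span}(\pi)$. The decisive observation is that $\left(\mathbf{1} - \tfrac{1}{\pi(v)}\mathbf{e}_v\right)^\top \pi = 1 - 1 = 0$, so the choice of solution does not affect the value of the linear functional, and the claimed series therefore coincides with $H_G(u,v)$.

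The main obstacle is precisely this handling of the pseudoinverse: one must refrain from invoking the Neumann series on $W$ until after the eigenvalue-$1$ component has been split off, and one must verify that the specific left factor $\mathbf{1} - \tfrac{1}{\pi(v)}\mathbf{e}_v$ annihilates the kernel ambiguity created by replacing $(I - W)^+$ with $(I - Q)^{-1}$. Once the decomposition $W = \Pi + Q$ is in place, the rest reduces to elementary spectral bookkeeping.
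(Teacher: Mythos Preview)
Your proposal is correct and follows essentially the same strategy as the paper: peel off the eigenvalue-$1$ component of $W$ and expand the remainder as a convergent geometric/Neumann series, using $\mathbf{1}^\top\chi_{uv}=0$ to make the rank-one piece vanish. Your use of the oblique projector $\Pi=\pi\mathbf{1}^\top$ and $Q=W-\Pi$ is in fact cleaner than the paper's argument, which writes an orthonormal eigendecomposition $W=\sum_j \lambda_j w_j w_j^\top$ with $w_1=\tfrac{1}{\sqrt n}\mathbf{1}$ even though $W=AD^{-1}$ is not symmetric.
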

We prove this lemma shortly, in \autoref{sec:local-algo}. With it in place, our meeting time analysis can proceed.
\begin{lemma}
    \label{lem:ht-ident}
    Let $(X_t)_{t \ge 0}$
    and $(Y_t)_{t \ge 0}$ be two 
    random walks in $G$ starting from
    $X_0 = u$ and $Y_0 = v$, respectively.
    Let $T = \min\{ t \ge 0 : X_t = Y_t \}$
    be a random variable for the meeting time
    of the two random walks.
    The following holds:
    \[
        H_G(u,v)=\mathbb E\left[\sum_{t < T}
        \left( \ind_{[Y_t = v]}
        - \ind_{[X_t = v]} \right)\right] .
    \]
\end{lemma}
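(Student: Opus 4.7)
The plan is to combine the Laplacian series expansion~\eqref{eq:hitting-time-decomp} with a strong Markov cancellation at the meeting time $T$ applied to the product chain $(X_t, Y_t)$. First, I would unpack~\eqref{eq:hitting-time-decomp} probabilistically: since $W = P^\top$, the column $W^i \mathbf e_u$ is the law of $X_i$, so $(W^i \mathbf e_u)_w = \Pr[X_i = w \mid X_0 = u]$, and in particular $\mathbf 1^\top W^i(\mathbf e_u - \mathbf e_v) = 1-1 = 0$. Hence only the $\tfrac{1}{\pi(v)}\mathbf 1_v$ component of the left factor contributes, giving
\[
H_G(u,v) \;=\; \frac{1}{\pi(v)}\sum_{i=0}^{\infty}\bigl(\Pr[Y_i = v] - \Pr[X_i = v]\bigr) \;=\; \frac{1}{\pi(v)}\sum_{i=0}^{\infty}\mathbb E\bigl[\ind_{[Y_i=v]}-\ind_{[X_i=v]}\bigr].
\]
It then suffices to show that the terms at times $t \ge T$ contribute nothing in aggregate expectation, so that the infinite sum collapses to the expectation of the stopped sum.

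Second, I would carry out this cancellation using the strong Markov property of $(X_t, Y_t)$ at $T$. Conditioned on the $\sigma$-algebra $\mathcal F_T$ generated by $(X_s, Y_s)_{s\le T}$, both $(X_{T+k})_{k\ge 0}$ and $(Y_{T+k})_{k\ge 0}$ are simple random walks started at the common meeting vertex $Z := X_T = Y_T$, so
\[
\mathbb E\bigl[\ind_{[X_{T+k}=v]}\mid\mathcal F_T\bigr]\;=\;(P^k)_{Zv}\;=\;\mathbb E\bigl[\ind_{[Y_{T+k}=v]}\mid\mathcal F_T\bigr]
\]
for every $k\ge 0$; hence each post-meeting summand has conditional (and therefore unconditional) mean zero. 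To turn this into a rigorous tail cancellation I would truncate at a finite $N$ and split the pathwise sum at $\min(N, T-1)$:
\begin{align*}
\sum_{t=0}^{N}\mathbb E[\ind_{[Y_t=v]}-\ind_{[X_t=v]}] &= \mathbb E\!\Bigl[\sum_{t=0}^{\min(N,T-1)}(\ind_{[Y_t=v]}-\ind_{[X_t=v]})\Bigr] \\
&\quad + \mathbb E\!\Bigl[\ind_{[T\le N]}\sum_{k=0}^{N-T}(\ind_{[Y_{T+k}=v]}-\ind_{[X_{T+k}=v]})\Bigr].
\end{align*}
The second term vanishes by the previous display (the inner sum is finite, so the conditional exchange is legitimate). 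Letting $N\to\infty$, the left-hand side tends to $\pi(v) H_G(u,v)$ by the first step, while dominated convergence---using $T$ as an integrable dominator, supplied by the exponential-tail bound of \autoref{lem:meeting}---gives that the surviving expectation on the right tends to $\mathbb E[\sum_{t<T}(\ind_{[Y_t=v]}-\ind_{[X_t=v]})]$, completing the identity.

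The main obstacle is precisely this last limit exchange: the unrestricted path sum $\sum_{t}(\ind_{[Y_t=v]}-\ind_{[X_t=v]})$ is not absolutely summable, so Fubini is not available, and the strong-Markov cancellation can only be applied against \emph{finite} truncations. Leveraging $\mathbb E[T]<\infty$ via \autoref{lem:meeting} is what makes the limit pass through the expectation and delivers the claimed stopped-sum representation.
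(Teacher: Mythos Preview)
Your proposal is correct and follows essentially the same route as the paper: expand~\eqref{eq:hitting-time-decomp} into return-probability differences $\frac{1}{\pi(v)}\sum_t(\Pr[Y_t=v]-\Pr[X_t=v])$, then split at the meeting time $T$ and cancel the post-meeting tail via memorylessness/strong Markov. Your treatment is in fact more careful than the paper's, which simply writes $\mathbb E\bigl[\sum_{t=0}^\infty(\cdots)\bigr]$ and splits without addressing the exchange of sum and expectation, whereas you truncate at a finite $N$, kill the post-$T$ piece by conditioning on $\mathcal F_T$, and pass to the limit by dominated convergence with the integrable dominator $T$ supplied by \autoref{lem:meeting}.
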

\begin{proof}
    The hitting time formula of \autoref{lemma:hitting-time-laplacian-formula} can be written as an infinite sum. Algebraic manipulation then gives us that:
    \begin{align*}
        H_G(u,v) &= \sum_{t=0}^\infty
        \left( \mathbf 1 - \frac 1 {\pi_G(v)}\mathbf 1_v \right)^\top
        W^t \chi_{uv} \\
        &= \sum_{t=0}^\infty
        \Bigg(
        \underbrace{\sum_{w \in V} \Pr[u \to^t w]}_{=1}
        - \frac 1 {\pi_G(v)} \Pr[u \to^t v]
        - \underbrace{\sum_{w \in V} \Pr[v \to^t w]}_{=1}
        + \frac 1 {\pi_G(v)} \Pr[v \to^t v]
        \Bigg) \\
        &= \frac 1 {\pi_G(v)} \sum_{t=0}^\infty
        \left( \Pr[v \to^t v]
        - \Pr[u \to^t v] \right)\tag{because $G$ is connected}
    \end{align*}
    Due to memorylessness, $(X_t)_{t \ge T}$
    and $(Y_t)_{t \ge T}$ have the same distribution.
    Hence,
    \begin{align*}
        \pi_G(u) H_{uv}
        &= \sum_{t=0}^\infty
        \left( \Pr[v \to^t v]
        - \Pr[u \to^t v] \right) \\
        &= \mathbb E\left[\sum_{t=0}^\infty
        \left( \ind_{[Y_t = v]}
        - \ind_{[X_t = v]} \right)\right] \\
        &= \mathbb E\left[\sum_{t < T}
        \left( \ind_{[Y_t = v]}
        -\ind_{[X_t = v]} \right)\right] +
        \underbrace{\mathbb E\left[\sum_{t \ge T}
        \left( \ind_{[Y_t = v]}
        - \ind_{[X_t = v]} \right)\right]}_{=0} \\
        &= \mathbb E\left[\sum_{t < T}
        \left( \ind_{[Y_t = v]}
        - \ind_{[X_t = v]} \right)\right]
    \end{align*}
\end{proof}
We conclude our analysis by showing that our algorithm approximates $H_G(u,v)$ well.
\begin{lemma}
    \label{lem:ht-accur}
    If Algorithm~\ref{alg:ht-meeting} does not fail,
    it outputs an estimate $\widetilde{H}:=\widetilde{H}_G(u,v)$ where
    $|\widetilde{H}_G(u,v) - H_G(u,v)| \le \varepsilon$
    with probability at least $1-1/n$.
\end{lemma}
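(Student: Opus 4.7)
The plan is to decompose $\widetilde H_{uv}$ into $\ell$ pair-wise contributions, identify their mean via Lemma~\ref{lem:ht-ident}, and close the deviation with Hoeffding. Conditioned on the algorithm not failing, the greedy matching pairs each $X^{(i)}$ with some $Y^{(\sigma(i))}$ at a stopping time $\tau_i \le t_{\mathrm{max}}$, and regrouping $\sum_t (y_v - x_v)$ by matched pair gives
\[
    \widetilde H_{uv} \;=\; \frac{1}{\ell\,\pi_G(v)} \sum_{i=1}^{\ell} Z_i,
    \qquad
    Z_i \;:=\; \sum_{t=0}^{\tau_i - 1} \bigl( \ind_{[Y^{(\sigma(i))}_t = v]} - \ind_{[X^{(i)}_t = v]} \bigr).
\]

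For the mean, I would use that $\tau_i$ is a stopping time with respect to the joint filtration of all $2\ell$ walks and that by construction $X^{(i)}_{\tau_i} = Y^{(\sigma(i))}_{\tau_i}$. Conditioning on $\mathcal F_{\tau_i}$, the walks $X^{(i)}$ and $Y^{(\sigma(i))}$ restart from the same state and evolve independently with the same kernel, so every term of $\sum_{t \ge \tau_i}(\ind_{[Y^{(\sigma(i))}_t = v]} - \ind_{[X^{(i)}_t = v]})$ has zero conditional expectation. Extending the truncated sum to infinity costs nothing in expectation, and the infinite sum equals $\sum_{t \ge 0}(\Pr[v \to^t v] - \Pr[u \to^t v]) = \pi_G(v)\,H_G(u,v)$ by the argument in Lemma~\ref{lem:ht-ident}. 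Thus $\mathbb E[Z_i] = \pi_G(v)\,H_G(u,v)$, so $\mathbb E[\widetilde H_{uv}] = H_G(u,v)$ up to a tail $\sum_{t \ge t_{\mathrm{max}}}(\Pr[v \to^t v] - \Pr[u \to^t v])$ that is exponentially small in $t_{\mathrm{max}}/\tmix$ and hence at most $\varepsilon/2$ by the choice of $t_{\mathrm{max}}$.

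The main obstacle is concentration, because the $Z_i$ are not independent: the online matching couples which $Y$-walk gets assigned to which $X^{(i)}$ based on the other trajectories. I would remove this dependence by a coupling / exchangeability argument, replacing the greedy matching with the a priori pairing $(X^{(i)}, Y^{(i)})$ run until its own meeting time $T_i$. Since both families of $\ell$ walks are i.i.d., and since at every collision swapping the identities of two walks at the same vertex does not alter the joint law of the future (memorylessness), the distribution of $\sum_t(y_v - x_v)$ is invariant under the matching rule; under the a priori pairing the $Z_i$ become genuinely i.i.d. Because each $|Z_i| \le \tau_i \le t_{\mathrm{max}}$ on the non-failure event, a standard Hoeffding bound with the algorithm's choice $\ell = 2 t_{\mathrm{max}}^2 \ln n/(\pi_G^2(v)\varepsilon^2)$ gives $|\widetilde H_{uv} - \mathbb E[\widetilde H_{uv}]| \le \varepsilon/2$ with probability at least $1 - 2/n$. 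Combining the truncation and concentration errors by the triangle inequality, and absorbing the failure probability of Lemma~\ref{lem:ht-fail} into the constants (the success event already has probability $1 - 1/n$), yields $|\widetilde H_{uv} - H_G(u,v)| \le \varepsilon$ with probability at least $1 - 1/n$. The hardest step is making the exchangeability reduction precise; the rest reduces to routine applications of the strong Markov property, Lemma~\ref{lem:ht-ident}, and Hoeffding's inequality.
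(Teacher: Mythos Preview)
Your proof mirrors the paper's: decompose $\widetilde H_{uv}$ into $\ell$ per-pair contributions, pin down the mean with Lemma~\ref{lem:ht-ident}, bound each contribution by $t_{\mathrm{max}}$ on the non-failure event, and close with Hoeffding. The paper is terser—it writes $\widetilde H = \frac{1}{\ell\pi_G(v)}\sum_i(B_i-A_i)$, asserts $\mathbb E[\widetilde H]=H_G(u,v)$ from Lemma~\ref{lem:ht-ident}, and applies Hoeffding directly, without the exchangeability/swap reduction you use to justify treating the summands as i.i.d.\ (a dependence issue the paper leaves implicit).

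One detour to drop: your own stopping-time argument already yields $\mathbb E[Z_i]=\pi_G(v)\,H_G(u,v)$ \emph{exactly}, so there is no residual tail $\sum_{t\ge t_{\mathrm{max}}}\bigl(\Pr[v\to^t v]-\Pr[u\to^t v]\bigr)$ to budget for, and reserving $\varepsilon/2$ for it hurts you quantitatively—with the algorithm's $\ell=2t_{\mathrm{max}}^2\ln n/(\pi_G^2(v)\varepsilon^2)$, Hoeffding at deviation $\varepsilon/2$ gives only $2n^{-1/4}$, not the $1-2/n$ you claim. Apply Hoeffding with the full $\varepsilon$, as the paper does.
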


\begin{proof}
    Let us denote with $(X^{(i)}_t)_{t \ge 0}$ and
    $(Y^{(i)}_t)_{t \ge 0}$ the $i$-th random walk
    starting form $u$ and $v$, respectively.
    Let $A_i = |\{ 0 \le t < T : X^{(i)}_t = v \}|$
    be the number of times that the $i$-th random walk
    $X^{(i)}$ visits $v$ before it gets eliminated
    by meeting at time $T$. We define $B_i$
    accordingly for $Y^{(i)}$,
    which allows us to write the output of
    Algorithm~\ref{alg:ht-meeting} as
    \[
        \widetilde H = \frac 1 \ell \sum_{i=1}^\ell
        \frac 1 {\pi_G(v)} \left(
        B_i - A_i
        \right)
    \]
    and we immediately see that
    $\mathbb E [\widetilde H] = H_G(u,v)$ by
    Lemma~\ref{lem:ht-ident}.
    If the algorithm does not report failure, we know that
    $A_i, B_i \le t_{\mathrm{max}}$
    and hence $|B_i - A_i| \le 2 t_{\mathrm{max}}$.
    By a Hoeffding bound,
    \[
        \Pr\left[ |\widetilde H - H_G(u,v)| \ge \varepsilon \right]
        \le 2 \exp\left( -\frac{\pi^2_G(v) \ell \varepsilon^2}{2 t_{\mathrm{max}}^2} \right)
        = \frac{1}{n}
    \]
    for our choice of $\ell$.
\end{proof}

%
%


Finally, we arrive at the following Theorem:

\begin{theorem}
    \label{thm:hitting-time-meetings}
    There exists an algorithm that, given any $u,v \in V$ outputs, with probability at least $1-2/n$, an estimate $\widetilde{H}$ such that $|\widetilde{H}-H_G(u,v)|\leq \varepsilon$ and has total runtime 
    $$
    O\left(\frac{\tmix^3}{||\pi_G||_2^6\cdot\pi_G^2(v)\cdot \varepsilon^2}\log^3\left(\frac{n}{\pi_G(u)\cdot\pi_G(v)}\right)\right)
    $$
\end{theorem}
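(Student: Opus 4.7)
The plan is to bootstrap directly off the two preceding lemmas: \autoref{lem:ht-fail} controls the failure probability of Algorithm~\ref{alg:ht-meeting}, and \autoref{lem:ht-accur} bounds the deviation of the output conditional on non-failure. A union bound over these two events will give the stated $1-2/n$ success probability, and the runtime follows by reading off the loop structure of the algorithm.

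For correctness, I would let $\mathcal{F}$ denote the failure event and $\mathcal{A}$ the event that $|\widetilde H - H_G(u,v)| > \varepsilon$. Then \autoref{lem:ht-fail} gives $\Pr[\mathcal{F}] \le 1/n$ and \autoref{lem:ht-accur} gives $\Pr[\mathcal{A} \mid \lnot \mathcal{F}] \le 1/n$, so a union bound yields $\Pr[\mathcal{F} \cup \mathcal{A}] \le 2/n$, which is exactly the claimed bound.

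For the runtime, I would argue that the outer loop runs at most $t_{\max}+1$ iterations and that each iteration can be implemented in $O(\ell)$ time. The key point is that the counts $x_w, y_w$ only need to be maintained for vertices currently occupied by some active walk, which one can do with a hash map keyed on the current vertex; the greedy elimination then touches each surviving walk at most once, and advancing a walk is $O(1)$ assuming constant-time neighbor sampling. So the total cost is $O(\ell \cdot t_{\max})$. Substituting
\[
t_{\max} = \Theta\!\left(\frac{\tmix \log\bigl(n/(\pi_G(u)\pi_G(v))\bigr)}{\|\pi_G\|_2^2}\right), \qquad \ell = \Theta\!\left(\frac{t_{\max}^2 \log n}{\pi_G^2(v)\varepsilon^2}\right),
\]
and absorbing the extra $\log n$ into $\log\bigl(n/(\pi_G(u)\pi_G(v))\bigr)$ (which is at least $\log n$) yields the claimed expression up to the big-$O$.

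I do not expect any serious obstacle: correctness is a two-line union bound and the runtime is a product of the obvious loop counts. The only place a careless write-up could stumble is in estimating the per-iteration cost, since a literal reading of the pseudocode's inner \texttt{for $w \in V$} loop suggests an extra factor of $n$. The argument therefore needs to observe explicitly that only occupied vertices contribute to the elimination and count updates, so that the per-round cost stays $O(\ell)$ rather than $O(\ell + n)$.
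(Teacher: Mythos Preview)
Your proposal is correct and matches the paper's proof, which is literally just ``The runtime follows directly from Lemmata~\ref{lem:ht-fail} and~\ref{lem:ht-accur} with a union bound.'' One small quibble: your absorption step doesn't quite land where you claim, since $\ell\cdot t_{\max}$ carries a $\log^3\!\bigl(n/(\pi_G(u)\pi_G(v))\bigr)\cdot \log n$ factor, and folding $\log n \le \log\!\bigl(n/(\pi_G(u)\pi_G(v))\bigr)$ gives $\log^4$ rather than $\log^3$---but the paper itself provides no detail here, so you are at the same level of precision as the original.
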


\begin{proof}
    The runtime follows directly from
    Lemmata~\ref{lem:ht-fail} and \ref{lem:ht-accur}
    with a union bound.
\end{proof}

\subsection{Effective Resistance Calculation}
Having established \autoref{thm:hitting-time-meetings}, we can use it to calculate the effective resistance $R_{\text{eff}}(u,v)$.
\begin{algorithm}
\textbf{Input:} Graph $G = (V, E)$,
vertices $u, v \in V$, mixing time $\tmix$, accuracy $\varepsilon$ \\
\textbf{Output:} Estimate $\widetilde R$ of the
effective resistance $R_G(u, v)$\\
$\tilde H_{uv} \gets \textsc{HittingTime}(G, u, v, \tmix, \varepsilon m / 2)$ \\
$\tilde H_{vu} \gets \textsc{HittingTime}(G, v, u, \tmix, \varepsilon m / 2)$ \\
$\tilde R \gets \frac 1 {2m} (\tilde H_{uv} + \tilde H_{vu})$ \\
\Return{$\tilde R$}
\caption{Estimating the Effective Resistance via Meeting Times}
\label{alg:effres-meeting}
\end{algorithm}
\begin{corollary}
\label{cor:eff-res-via-meeting}
There exists an algorithm that, given any $u,v \in V$ outputs, with probability at least $1-1/n$ an estimate $\widehat{R}$ such that $|\widehat{R}-R_{\text{eff}}(u,v)|\leq \varepsilon$ and has runtime 
$$
    O\left(\frac{\tmix^3}{||\pi_G||_2^6\cdot\mu_{u,v}^2\cdot \varepsilon^2 m^2}\log^3\left(\frac{n}{\pi_G(u)\cdot\pi_G(v)}\right)\right)
    $$
where $\mu_{u,v} = \min\{\pi_G(u),\pi_G(v)\}$.
\end{corollary}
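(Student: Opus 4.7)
The plan is to invoke \autoref{thm:hitting-time-meetings} twice as a black box, once for the pair $(u,v)$ and once for $(v,u)$, and then combine the two estimates via the identity $R_{\text{eff}}(u,v) = \frac{1}{2m}(H_G(u,v) + H_G(v,u))$ exactly as \autoref{alg:effres-meeting} does. Calling the hitting-time subroutine with accuracy $\varepsilon m/2$ in each invocation, I obtain estimates $\widetilde H_{uv}$ and $\widetilde H_{vu}$ with individual additive error at most $\varepsilon m/2$ with probability at least $1-2/n$ each.

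Next, I would chain the errors: by the triangle inequality,
\[
\left| \widetilde R - R_{\text{eff}}(u,v) \right|
= \frac{1}{2m} \left| (\widetilde H_{uv} - H_G(u,v)) + (\widetilde H_{vu} - H_G(v,u)) \right|
\le \frac{1}{2m} \left( \frac{\varepsilon m}{2} + \frac{\varepsilon m}{2} \right) = \frac{\varepsilon}{2}.
\]
A union bound over the two calls yields an overall failure probability of at most $4/n$; this is asymptotically $O(1/n)$, and by choosing slightly sharper constants inside \autoref{alg:ht-meeting} (replacing $n$ by a constant multiple thereof in the log terms) one obtains the stated $1-1/n$ bound without changing the asymptotic runtime.

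For the runtime, I would plug $\varepsilon m/2$ into the bound of \autoref{thm:hitting-time-meetings}. The call that targets vertex $v$ contributes a factor $1/\pi_G^2(v)$ in the denominator, while the call targeting $u$ contributes $1/\pi_G^2(u)$; the total runtime is the sum of these, which is dominated up to a factor of two by $1/\min\{\pi_G(u),\pi_G(v)\}^2 = 1/\mu_{u,v}^2$. Substituting $\varepsilon \leftarrow \varepsilon m/2$ introduces the additional $m^2$ factor in the denominator, giving exactly the claimed expression.

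The main obstacle, while routine, is making the error-splitting and runtime accounting clean: one must verify that the two sources of inaccuracy (each hitting-time estimate being off by $\varepsilon m/2$) combine additively after dividing by $2m$, and that the symmetric structure of the hitting time bound under swapping $u \leftrightarrow v$ correctly produces the $\mu_{u,v}$ term. Since the logarithmic factor $\log(n/(\pi_G(u)\pi_G(v)))$ is symmetric in $u$ and $v$, it appears unchanged, and there is no asymmetry to worry about beyond the $\pi_G(v)^2$ factor, which is precisely what gets replaced by $\mu_{u,v}^2$ when we take the worse of the two calls.
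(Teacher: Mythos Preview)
Your proposal is correct and follows essentially the same approach as the paper: invoke \autoref{thm:hitting-time-meetings} twice via \autoref{alg:effres-meeting} with accuracy parameter $\Theta(\varepsilon m)$, combine the estimates through the identity $R_{\text{eff}}(u,v) = \frac{1}{2m}(H_G(u,v)+H_G(v,u))$, and bound the runtime by noting that the worse of the two calls is governed by $\mu_{u,v} = \min\{\pi_G(u),\pi_G(v)\}$. Your write-up is in fact more careful than the paper's own terse proof, since you explicitly track the union bound on the failure probabilities and explain how the $\mu_{u,v}^2$ factor arises from taking the dominant of the two calls.
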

\begin{proof}
Running Algorithm~\ref{alg:ht-meeting} with $\varepsilon'=\frac{\varepsilon}{2}\cdot 2m$ we get that:
\begin{align*}
    |\widehat{R}-R_{\text{eff}}(u,v)| = \frac{1}{2m}|(\widetilde{H}_{uv}-H_G(u,v))+(\widetilde{H}_{vu}-H_G(v,u))| \leq \varepsilon
\end{align*}
The runtime follows from the runtime of Algorithm~\ref{alg:ht-meeting}
\end{proof}



\section{Hitting Times via Spectral Cutoff}
\label{sec:local-algo}
In this section we describe yet another local algorithm for estimating the hitting time $H_G(u,v)$ between two vertices $u$ and $v$. We first prove the spectral decomposition result from earlier.
\begin{lemma}
\label{lemma:hitting-time-decomposition}
    Let $W:=P^\top = AD^{-1}$. The following identity holds: 
    \begin{align}
        \label{eq:hitting-time-decomp}
        H_{G}(u,v) = \sum_{i=0}^{\infty}\left({\bf 1}-\frac{1}{\pi(v)}{\bf 1}_{v}\right)^{\top}W^{i}\chi_{uv}
    \end{align}
\end{lemma}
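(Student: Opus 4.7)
The plan is to derive the identity by combining the formula of \autoref{lemma:hitting-time-laplacian-formula} with a convergent series expansion of the pseudoinverse $(I-W)^+$ acting on $\chi_{uv}$. The central subtlety will be that $\sum_{i\ge 0} W^i \chi_{uv}$ and $(I-W)^+\chi_{uv}$ need not coincide as vectors in general---they may differ by an element of $\mathrm{null}(I-W)$---but the left factor $\mathbf{1}-\frac{1}{\pi(v)}\mathbf{e}_v$ will annihilate this discrepancy, recovering the claimed scalar identity.

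First I would inspect the spectrum of $W = P^\top$. Under the standing assumptions that $G$ is connected and the walk aperiodic, $W$ has eigenvalue $1$ with multiplicity one and all remaining eigenvalues strictly in $(-1,1)$. The right eigenvector for $\lambda=1$ is $\pi$, since $W\pi = P^\top\pi = \pi$, while the corresponding left eigenvector is $\mathbf{1}$, since $\mathbf{1}^\top W = (P\mathbf{1})^\top = \mathbf{1}^\top$. Crucially, $\chi_{uv}$ is annihilated by this left eigenvector, $\mathbf{1}^\top \chi_{uv} = 0$, so it has no component along the top eigendirection of $W$. From this I would conclude that $W^i \chi_{uv} \to 0$ geometrically in $i$, and hence that the series $S := \sum_{i=0}^{\infty} W^i \chi_{uv}$ converges absolutely, entrywise.

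Next I would verify that $S$ solves the linear system $(I-W)S = \chi_{uv}$ by a standard telescoping identity: for each finite $N$, $(I-W)\sum_{i=0}^{N} W^i \chi_{uv} = \chi_{uv} - W^{N+1}\chi_{uv}$, whose second term vanishes as $N \to \infty$. On the other hand, since $\chi_{uv} \perp \mathbf{1}$, the vector $\chi_{uv}$ lies in the range of $I-W$, so $(I-W)^+ \chi_{uv}$ is another solution of the same system. Any two solutions differ by an element of $\mathrm{null}(I-W) = \mathrm{span}(\pi)$, whence $S = (I-W)^+ \chi_{uv} + c\,\pi$ for some scalar $c$.

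Finally, I would close the argument by computing $\bigl(\mathbf{1} - \tfrac{1}{\pi(v)}\mathbf{e}_v\bigr)^\top \pi = \mathbf{1}^\top \pi - \tfrac{\pi(v)}{\pi(v)} = 1 - 1 = 0$. Thus the ambiguous multiple of $\pi$ is killed when we contract with the left factor, and substituting $S$ for $(I-W)^+\chi_{uv}$ inside \autoref{lemma:hitting-time-laplacian-formula} yields exactly the desired infinite-sum expression for $H_G(u,v)$. I expect the main obstacle to be presenting this pseudoinverse step cleanly: $I - W$ is singular, so its formal ``geometric series'' is not literally an inverse, but the special geometry of $\chi_{uv}$ (orthogonal to $\mathbf{1}$) together with that of the left factor (orthogonal to $\pi$) is precisely what repairs the defect.
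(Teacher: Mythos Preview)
Your argument is correct. Both you and the paper start from \autoref{lemma:hitting-time-laplacian-formula} and expand $(I-W)^+\chi_{uv}$ as a geometric-type series, handling the singularity of $I-W$ via orthogonality; the core idea is the same. The execution differs, however. The paper writes an explicit spectral decomposition $W = \sum_j \lambda_j w_j w_j^\top$ with orthonormal $w_j$, expresses $(I-W)^+ = \sum_{s\ge 0}(W^s - w_1 w_1^\top)$, and then drops the rank-one correction via $w_1^\top \chi_{uv} = 0$ with $w_1 \propto \mathbf{1}$. You instead argue directly that $\sum_{i\ge 0} W^i \chi_{uv}$ converges (from $\mathbf{1}^\top \chi_{uv}=0$), observe that this sum and $(I-W)^+\chi_{uv}$ solve the same linear system and hence differ only by a multiple of $\pi$, and then kill that multiple using $\bigl(\mathbf{1}-\tfrac{1}{\pi(v)}\mathbf{e}_v\bigr)^\top \pi = 0$. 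Your route has the advantage of never invoking an orthonormal eigenbasis for $W=AD^{-1}$, which is not symmetric in general; it also makes transparent the complementary roles of the two orthogonality relations---$\chi_{uv}\perp\mathbf{1}$ to secure convergence, and the left factor $\perp\pi$ to remove the null-space ambiguity---whereas the paper's spectral presentation bundles these into a single step.
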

\begin{proof}
    Let $\lambda_1\geq \lambda_2\geq\cdots\geq \lambda_n$ be the spectrum of $W$, so the eigendecomposition of $W$ is:
    $$
    W = \sum\limits_{j=1}^n \lambda_j w_j w_j^\top
    $$
    where the vectors $w_1,...,w_n$ form an orthonormal basis. Note that $\lambda_1 = 1$, so we can write the pseudo-inverse of $I-W$ as:
    $$
    (I-W)^+ = \sum\limits_{j=2}^n \frac{1}{1-\lambda_j} w_j w_j^\top
    $$
    Then we have by the sum of an infinite geometric series that:
    $$
    (I-W)^+ = \sum\limits_{j=2}^n\sum\limits_{s=0}^\infty \lambda_j^s w_jw_j^\top=
    \sum\limits_{s=0}^\infty\sum\limits_{j=2}^n \lambda_j^s w_jw_j^\top=
    \sum\limits_{s=0}^\infty (W^s-w_1w_1^\top)
    $$
    Now we substitute back to the hitting time calculation of \autoref{lemma:hitting-time-laplacian-formula}:
    \begin{align}
    H_{G}(u,v) &=  \sum\limits_{j=0}^\infty \left(\mathbf{1}-\frac{1}{\pi(v)}\mathbf{e}_v\right)^\top(W^j-w_1w_1^\top) \chi_{uv}= \sum\limits_{j=0}^\infty \left(\mathbf{1}-\frac{1}{\pi(v)}\mathbf{e}_v\right)^\top W^j \chi_{uv}
    \end{align}
    since $w_1^\top\chi_{uv} = 0$ because $w_1 = \frac 1 {\sqrt{n}} \mathbf{1}_n$.
\end{proof}
Our local algorithm for estimating the hitting time $H_B(u,v)$ relies on \autoref{lemma:hitting-time-decomposition}. We identify a threshold $\ell$ such that truncating the sum $\sum_{j=\ell}^\infty(\mathbf{1}-\frac{1}{\pi(v)}\mathbf{e}_v)^\top W^j \chi_{uv}$ incurs additive error up to $\frac{\varepsilon}{2}$ from the total sum. Then, we estimate the rest using a collection of independent random walks.

\begin{algorithm}
    \caption{A ``Cutoff'' Algorithm for Estimating Hitting Times}
    \label{alg:local-hitting-times}
    \textbf{Inputs: }Adjacency matrix and degree query access to graph $G$, Vertices $u,v \in V$.\\
    \textbf{Output: }Estimate $\widehat{h}_G(u,v)$ of the hitting time $H_G(u,v)$.\\
    \textbf{Parameters: }Spectral gap $\lambda \in [-1,1), \varepsilon > 0$ \\
    \vspace{2mm}
    Let $\ell \gets \frac{\log\frac{n}{\varepsilon-\varepsilon\lambda}}{\log(1/\lambda)}$.\\
    Initialize $\widehat{h}_G(s,t) \gets 0$.\\
    \For{$i = 0$ to $\ell-1$}{
        Let $r \gets \frac{32\ell^2 \log(40\ell)}{\varepsilon^2 \pi(v)^2}$\\
        Execute $r$ random walks from $v$ of length $i$ and let $T_{u},T_{v}$ be the number of those that end in $u$ and $v$ respectively.\\
        Let $\widehat{p_{i,u}} \gets \frac{2m}{\deg(v)}\cdot \frac{T_u}{r}$ and $\widehat{p_{i,v}} \gets \frac{2m}{\deg(v)}\cdot \frac{T_v}{r}$\\
        Update $\widehat{h}_G(u,v)\gets \widehat{h}_G(u,v)+(\widehat{p_{i,v}}-\widehat{p_{i,u}})$
    }

    \textbf{Output} $\widehat{h}_G(u,v)$
\end{algorithm}

\begin{theorem}
\label{thm:cutoff-algo-analysis}
Algorithm \ref{alg:local-hitting-times} returns an estimate $\widehat{h}_G(s,t)$ such that with probability at least $9/10$ it holds that:
$$
\left|\widehat{h}_G(s,t)-H_G(s,t)\right| \leq \varepsilon
$$
Its runtime is bounded by
$$
O\left(\frac{\ell^4}{\varepsilon^2\cdot\pi_G(v)^2}\log\ell\right),\,\text{ where}\quad
\ell = \frac{\log\frac{n}{\varepsilon-\varepsilon\lambda}}{\log\frac{1}{\lambda}}
$$
for the spectral gap $\lambda :=\max\{|\lambda_2|,|\lambda_n|\}$ of $G$.
\end{theorem}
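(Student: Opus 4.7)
The plan is a truncation-plus-sampling argument driven by \autoref{lemma:hitting-time-decomposition}. I would use that identity to write $H_G(u,v) = \sum_{i \ge 0} T_i$ with $T_i = (\mathbf{1} - \pi(v)^{-1}\mathbf{e}_v)^\top W^i \chi_{uv}$, and split it into a head $\sum_{i=0}^{\ell-1} T_i$ (what the algorithm estimates) and a tail $\sum_{i \ge \ell} T_i$ (what is discarded). The final error $|\widehat{h}_G - H_G|$ is then controlled by the triangle inequality against these two sources: I will show the tail contributes at most $\varepsilon/2$ deterministically and the Monte Carlo estimation of the head contributes at most $\varepsilon/2$ with probability at least $9/10$.

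For the tail I would exploit the spectral decomposition of $W$. Since $\mathbf{1}$ is the left eigenvector of $W$ for $\lambda_1 = 1$ and $\mathbf{1}^\top \chi_{uv} = 0$, the $\lambda_1$-component of $T_i$ vanishes, so $|T_i| \le \lambda^i \cdot \|\mathbf{1} - \pi(v)^{-1}\mathbf{e}_v\|_2 \cdot \|\chi_{uv}\|_2$, where $\lambda = \max\{|\lambda_2|,|\lambda_n|\}$. Summing the geometric series gives
\[
\Big|\sum_{i \ge \ell} T_i\Big| \le \frac{\lambda^{\ell}}{1-\lambda}\cdot O\!\left(\sqrt{n}/\pi(v)\right),
\]
and the choice $\ell = \log\!\big(n/(\varepsilon(1-\lambda))\big)/\log(1/\lambda)$ absorbs the $\sqrt{n}/\pi(v)$ factor into the logarithm (using $\pi(v)\ge 1/(2m)$), forcing the tail to be at most $\varepsilon/2$. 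Establishing this clean $\ell$ is the main technical obstacle: the naive bound carries a $\sqrt{n}/\pi(v)$ prefactor that must be swallowed by the logarithm, and this is exactly where reversibility and the $\lambda_1$-component cancellation are essential.

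For the head, observe that $\mathbf{1}^\top W^i \chi_{uv}=0$ reduces each term to $T_i = \pi(v)^{-1}\!\bigl((W^i)_{v,v} - (W^i)_{v,u}\bigr)$, which is precisely (up to the reversibility identity $\pi(u)(P^i)_{u,v} = \pi(v)(P^i)_{v,u}$) what $\widehat{p}_{i,v} - \widehat{p}_{i,u}$ targets by endpoint-counting of length-$i$ walks. Each $\widehat{p}_{i,w}$ is an empirical average of $r$ Bernoulli random variables scaled by $\pi(v)^{-1}$, so $\widehat{p}_{i,v} - \widehat{p}_{i,u}$ is an average of $r$ variables in $[-\pi(v)^{-1}, \pi(v)^{-1}]$ with expectation $T_i$. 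A standard Hoeffding bound then yields
\[
\Pr\!\left[\,|\widehat{p}_{i,v}-\widehat{p}_{i,u} - T_i| > \varepsilon/(2\ell)\,\right] \le 2\exp\!\left(-\frac{r\,\varepsilon^2\pi(v)^2}{8\ell^2}\right),
\]
and the choice $r = 32\ell^2\log(40\ell)/(\varepsilon^2\pi(v)^2)$ makes this at most $1/(20\ell)$. Union-bounding over the $\ell$ indices bounds the total head error by $\varepsilon/2$ with probability $\ge 19/20$, so combined with the deterministic tail bound we obtain the stated $\varepsilon$-accuracy with probability $\ge 9/10$.

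Finally, the runtime bookkeeping is immediate: the algorithm performs $\ell$ outer iterations, each launching $r$ walks of length $i\le \ell$, giving total work $O(\ell\cdot r\cdot \ell)= O\!\bigl(\ell^{4}\log\ell\,/(\varepsilon^{2}\pi(v)^{2})\bigr)$, which matches the claimed bound. No further obstacles arise beyond the tail-bound absorption discussed above.
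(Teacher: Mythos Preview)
Your proposal is correct and follows essentially the same approach as the paper: a spectral tail bound exploiting that $\chi_{uv}$ is orthogonal to the top eigendirection of $W$, followed by per-index Hoeffding bounds that are union-bounded over the $\ell$ truncation levels, with the runtime read off as $O(\ell^2 r)$. The only cosmetic differences are your tail prefactor (an $O(\sqrt{n}/\pi(v))$ operator-norm estimate versus the paper's $n^{3/2}$ obtained by summing per-eigenvector contributions) and your application of Hoeffding to the difference $\widehat{p}_{i,v}-\widehat{p}_{i,u}$ directly rather than to each $\widehat{p}_{i,w}$ separately; both choices lead to the same $\ell$ and $r$ up to constants.
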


\begin{proof}
The proof is given in \autoref{appx:proof-cutoff-analysis}.
\end{proof}

\begin{remark}
Using this algorithm in the formula $R_{\text{eff}}(u,v) = \frac{1}{2m}(H_G(u,v)+H_G(v,u))$ gives us an algorithm with roughly the same runtime is Algorithm 1 of \citet{peng2021local}. 
\end{remark}

\subsection{A Connection with Property Testing}
\label{sec:local-algs-via-testing}
In Algorithm \ref{alg:local-hitting-times}, we define the ``cutoff'' point for the hitting time sum by spectrally decomposing the Laplacian matrix, as done by \citet{peng2021local}. However, this approach requires knowing the spectral gap $\lambda$, which can only be computed by analyzing the entire graph, and is related to the Markov Chain’s mixing time, which may be arbitrarily large.

To address this, we propose a local truncation method. Intuitively, if the states $u$ and $v$ are ``close'', random walks from both should converge quickly, even with a large mixing time. We introduce a local mixing time $t_{\min}^{(\varepsilon)}(u,v)$ and show that when the Markov Chain converges quickly, the cutoff point is determined by this local mixing time. This result follows from a convergence lemma for ergodic Markov Chains \citep{freedman2017convergence}. Though we focus on the effective resistance problem for simplicity, the theory extends to hitting times as well.

The challenge then becomes determining $t_{\min}^{(\varepsilon)}(u,v)$, which we solve using property testing algorithms \citep{batu2013testing, chan2014optimal} to perform a binary search for a good upper bound. These algorithms run in sublinear time, making our method also sublinear. Under mild convergence assumptions, this approach removes the need for $\lambda$ and $\tmix$, yielding a fully local algorithm. Details of this theoretical development can be found in Appendix \ref{appx:local-algs-via-testing}.

\section{The Optimality of Walk Sampling}
\label{sec:walk-sampling-main}
In \autoref{appx:walk-sampling} we use a Chernoff bound for Markov Chains to analyze possibly the simplest local algorithm for hitting time estimation: sample a number of arbitrarily long random walks and return the average hitting time. Our analysis yields the following result:

\begin{theorem}
Let $M$ be an ergodic Markov chain with stationary distribution $\pi$. Let $s$ and $t$ be two different states. There exists an algorithm calculating the hitting time $H_G(s,t)$ to within an absolute error of $\varepsilon$ with probability at least $\frac{1}{n}$ that uses 
$$
\widetilde{O}\left(\frac{\tmix^2}{\pi(t)^2 \varepsilon^2}\right)
$$
random walk samples of length at most $\widetilde{O}(\tmix/\pi(t))$.
\end{theorem}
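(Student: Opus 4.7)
The plan is to analyze the natural estimator that runs $k$ independent random walks of length $L$ from $s$, records the first-hit time of each (truncated to $L$ if $t$ is never visited), and returns the empirical mean. Formally, for each walk $(X_\tau^{(i)})_{\tau=0}^{L}$ with $X_0^{(i)} = s$, let $T_i = \min\{\tau \ge 0 : X_\tau^{(i)} = t\}$, set $\widehat T_i = \min(T_i, L)$, and output $\widetilde H = \tfrac 1 k \sum_{i=1}^k \widehat T_i$. I would choose $L$ and $k$ so that both the truncation bias $|H_G(s,t) - \mathbb E[\widehat T_i]|$ and the sampling fluctuation $|\widetilde H - \mathbb E[\widetilde H]|$ are at most $\varepsilon/2$, and then conclude by the triangle inequality.

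For the bias I would use the identity $H_G(s,t) - \mathbb E[\widehat T_i] = \sum_{\tau \ge L} \Pr[T_i > \tau]$ together with an exponential tail bound on $T_i$. Setting $L_0 = \Theta(\tmix/\pi(t))$, a Markov Chain Chernoff bound applied to the indicator $\ind_{[X_\tau = t]}$ shows that the number of visits to $t$ in $L_0$ steps, whose mean is $L_0\pi(t) = \Theta(\tmix)$, is strictly positive with probability at least $1/2$. Hence $\Pr[T_i > L_0] \le 1/2$, and iterating the Markov property in blocks of length $L_0$ yields $\Pr[T_i > jL_0] \le 2^{-j}$. The tail sum is at most $O(L_0 \cdot 2^{-L/L_0})$, so taking $L = L_0 \log(L_0/\varepsilon) = \widetilde O(\tmix/\pi(t))$ bounds the bias by $\varepsilon/2$.

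For the fluctuation, each $\widehat T_i$ lies in $[0, L]$, so Hoeffding's inequality gives $\Pr[|\widetilde H - \mathbb E[\widetilde H]| \ge \varepsilon/2] \le 2\exp(-k\varepsilon^2/(2L^2))$. Choosing $k = \widetilde O(L^2/\varepsilon^2) = \widetilde O(\tmix^2/(\pi(t)^2 \varepsilon^2))$ drives the failure probability below $1/n$, matching the claimed sample and walk-length bounds.

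The main obstacle is establishing the exponential tail bound on $T_i$ with the right dependence on both $\tmix$ and $\pi(t)$. A Markov Chain Chernoff inequality (e.g., Lezaud or Chung--Lam) gives this directly, but one must be careful to phrase it as a ``hit at least once'' statement rather than a two-sided deviation bound around the stationary frequency. An alternative route relies only on the definition of $\tmix$: after $\tmix$ steps from any starting state the distribution is within constant TV-distance of $\pi$, so the probability of visiting $t$ within a window of $O(\tmix/\pi(t))$ steps is constant, and iterating yields the same tail. Once this tail is in hand, the remaining bias--variance trade-off is routine.
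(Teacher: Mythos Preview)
Your proposal is correct and follows essentially the same route as the paper: both obtain an exponential tail $\Pr[T>i]\le C\exp(-c\,\pi(t)\,i/\tmix)$ from the Markov Chain Chernoff bound applied to $\ind_{[X_\tau=t]}$, use it to control the truncation bias, and finish with Hoeffding on the bounded samples. The paper applies the Chernoff bound once to the full length-$i$ walk from $s$ (getting the exponential in $i$ directly, with no block iteration), whereas your block argument needs the one-block bound uniformly over all intermediate states---this is fine, since your $\tmix$-based alternative covers it or one simply absorbs a $\log(1/\pi_{\min})$ into $L_0$; conversely, your explicit truncated estimator and the tail-sum identity for the bias are cleaner than the paper's condition-on-short-walks bookkeeping.
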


Could we have sampled fewer random walks in a similar fashion and aggregated them to obtain a good estimate? In \autoref{appx:lower-bound} we answer this question in the negative: the walk sampling algorithm has optimal sample complexity. Our proof involves a carefully constructed ``barbell''-like graph and utilizes anti-concentration arguments for the geometric and binomial distributions.

\begin{theorem}
\label{thm:main-lower-bound-main}
Suppose an algorithm is able to estimate $H_G(u,v)$ within additive error $\varepsilon$ with constant probability of success by taking the average over $r$ random walk samples of unbounded length. Then, in the worst case, we must have that:
$$
r = \Omega\left(\frac{\tmix^2}{\pi(v)^2\varepsilon^2}\right)
$$
\end{theorem}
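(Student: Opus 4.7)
The plan is to prove the lower bound via a variance-based anti-concentration argument applied to a carefully chosen graph family. Since the algorithm is restricted to output the empirical mean $\bar T_r = \frac{1}{r}\sum_{i=1}^r T_i$, where each $T_i$ is an i.i.d.\ hitting time from $u$ to $v$, one has $\E[\bar T_r] = H_G(u,v)$ and $\Var(\bar T_r) = \Var(T)/r$. By a one-sided anti-concentration bound (Paley--Zygmund applied to the non-negative deviation $(\bar T_r - \E\bar T_r)^2$, or a Berry--Esseen estimate for the normalized sum), there exists an absolute constant $c>0$ so that $\Pr[|\bar T_r - H_G(u,v)| \ge c\sqrt{\Var(T)/r}] \ge 1/3$. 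Consequently, for the algorithm to achieve additive error $\varepsilon$ with constant success probability, the sample count must satisfy $r \ge c^2 \Var(T)/\varepsilon^2$. The entire lower bound then reduces to exhibiting one graph family $G^\star$ in which $\Var(T) = \Omega(\tmix^2/\pi_{G^\star}(v)^2)$.

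For the construction, I would use a barbell-like instance: two dense blobs (cliques or constant-degree expanders) joined by a single bridge edge, with $u$ placed inside one blob and $v$ realized as a low-degree vertex (e.g., a pendant) attached to a vertex of the opposite blob. The two parameters of the family, namely the blob sizes and the degree of $v$, can be tuned independently to target prescribed values of $\pi_{G^\star}(v) = \deg(v)/(2m)$ and of $\tmix$, since the former is controlled by the degree structure near $v$ and the latter is dictated by the spectral gap at the bridge. Using the strong Markov property at successive returns to the bridge vertex, I would decompose $T = \sum_{i=1}^N X_i$, where $N$ counts the number of round-trips across the bridge before the walk first hits $v$ and each $X_i$ is the duration of a single round-trip. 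Both $N$ and each $X_i$ are (up to constants) geometric random variables, so anti-concentration of the geometric distribution gives $\Var(N) = \Omega((\E N)^2)$ and $\Var(X_i) = \Omega((\E X_i)^2)$. The total variance identity
\[
\Var(T) = \E[N]\,\Var(X_1) + \Var(N)\,(\E X_1)^2
\]
then combines with our parameter tuning to yield $\Var(T) = \Omega(\tmix^2/\pi_{G^\star}(v)^2)$, closing the argument with Step~1. A complementary view, which I would present as a sanity check, is to use a binomial anti-concentration bound on the number of visits to $v$ within a short window, translating concentration of visits into spread of the first-hit time.

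The main obstacle is calibrating the construction so that all three quantities $\pi_{G^\star}(v)$, $\tmix$, and $\Var(T)$ saturate \emph{simultaneously} to the target orders — naive barbells tend to give $\Var(T) = \Theta(1/\pi(v)^2)$, which is a factor of $\tmix^2$ short, so the bridge/pendant geometry must be chosen so that the round-trip time inherits the bottleneck scale. A secondary technical point is ensuring that the $X_i$'s in the decomposition are truly independent (this is handled by defining the cycles via fresh hitting times of the bridge vertex and invoking the strong Markov property) and that the anti-concentration constants compose cleanly when passing from the two geometric factors to the compound sum; this may require an explicit Berry--Esseen argument rather than a pure Paley--Zygmund computation, because $T$ is heavy-tailed and a CLT-based control is needed for the averaged statistic $\bar T_r$ to inherit the desired spread.
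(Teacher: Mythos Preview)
Your high-level plan---decompose the hitting time as a compound sum $T=\sum_{i=1}^{N}A_i$ with $N$ geometric, then argue anti-concentration of the empirical mean---is exactly the paper's strategy. But the construction you propose has a genuine gap, and it is precisely the one you flag as the ``main obstacle'' without resolving. Making $v$ a low-degree pendant on a two-blob barbell gives $\Var(T)=\Theta(1/\pi(v)^2)$, missing the required $\tmix^2$ factor; your suggestion to ``choose the bridge/pendant geometry so that the round-trip time inherits the bottleneck scale'' does not say how. The paper's fix is in the opposite direction from your intuition: it makes $v$ a \emph{high-degree} vertex---the center of a star with $n^2$ leaves---so that $\pi(v)=\Theta(1)$, and pushes all the hardness into $\tmix$. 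Concretely, the graph is a star $K_{1,n^2}$, a path of length $n$, and a clique $K_n$ glued in sequence, with $u$ at the clique end of the path and $v$ the star center. A gambler's-ruin computation on the path shows the number of returns to $u$ before first reaching $v$ is $\mathrm{Geom}(1/n)$, and each excursion into the clique costs $\Theta(n^2)$ steps; this yields $\Var(T)=\Theta(n^6)$ with $\tmix=\Theta(n^3)$ and $\pi(v)=\Theta(1)$, so the target $\tmix^2/\pi(v)^2=\Theta(n^6)$ is saturated.

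On the anti-concentration step, your Paley--Zygmund route would work once the construction is right (the compound geometric has bounded kurtosis), but the paper avoids fourth-moment bookkeeping altogether via a ``relaxation lemma'': using only that the excursion-length distribution $\mathcal A$ is right-skewed, it lower-bounds $\Pr[\bar T_r\le\mu-\varepsilon]$ by the corresponding probability for the \emph{scaled geometric part alone}, $\frac{1}{2}\Pr[\mu_{\mathcal A}\cdot\bar N_r\le\mu-\varepsilon]$, which is then handled by an explicit binomial lower-tail bound. This is cleaner than Berry--Esseen and sidesteps the heavy-tail concerns you raise.
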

\begin{figure}[h]
\centering
\begin{tikzpicture}
    \node[circle, fill=black, inner sep=2pt, label={[label distance=-0.5pt]270:$u_n$}] (u1) at (2, 0) {};

    \node[circle, fill=black, inner sep=2pt, label={[label distance=-2pt]180:1}] (v1) at (0.3, 1.5) {};
    \node[circle, fill=black, inner sep=2pt, label={[label distance=-2pt]180:2}] (v2) at (0.3, 0.5) {};
    \node[circle, fill=black, inner sep=2pt, label={[label distance=-2pt]180:$n^2$}] (vn) at (0.3, -1.5) {};
    \node[] at (0.3, -0.3) {\vdots};

    \draw (v1) -- (u1);
    \draw (v2) -- (u1);
    \draw (vn) -- (u1);

    \node[circle, fill=black, inner sep=2pt, label={[label distance=-2pt]90:$u_{n-1}$}] (u2) at (2.75, 0) {};
    \node[circle, fill=black, inner sep=2pt] (u3) at (3.5, 0) {};
    \node[circle, fill=black, inner sep=2pt, label={[label distance=-2pt]90:$u_{2}$}] (un1) at (5.5, 0) {};
    \node[circle, fill=black, inner sep=2pt, label={[label distance=-2pt]90:$u_{1}$}] (un) at (6.25, 0) {};

    \node[inner sep=14.5pt, draw=black, circle] (knr) at (8, 0) {$K_n$};

    \draw (u1) -- (u2) -- (u3) -- (4, 0);
    \draw (5, 0) -- (un1) -- (un) -- (knr);
    \node[] at (4.5, 0) {$\cdots$};
\end{tikzpicture}
\caption{A `barbell''-like graph: a star, a path and a clique
}
\label{fig:barbell}
\end{figure}
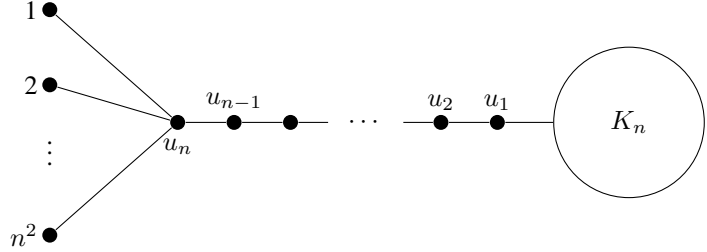
These results demonstrate that estimating hitting times is inherently computationally hard. Nevertheless, our local algorithms—despite having comparable theoretical runtime to the sampling approach—consistently outperform it in practice.

\section{Experimental Results}
\label{sec:experiments}
In this section, we perform a comparative study of different algorithms for estimating the hitting time $H_G(u,v)$ between vertices $u$ and $v$. Our experiments are conducted in Python 3.6 with Numba 0.53
on a 2.9 GHz Intel
Xeon Gold 6226R processor with 384GB of RAM. 

\begin{figure}[h]
    \centering
    \includegraphics[width=0.33\linewidth]{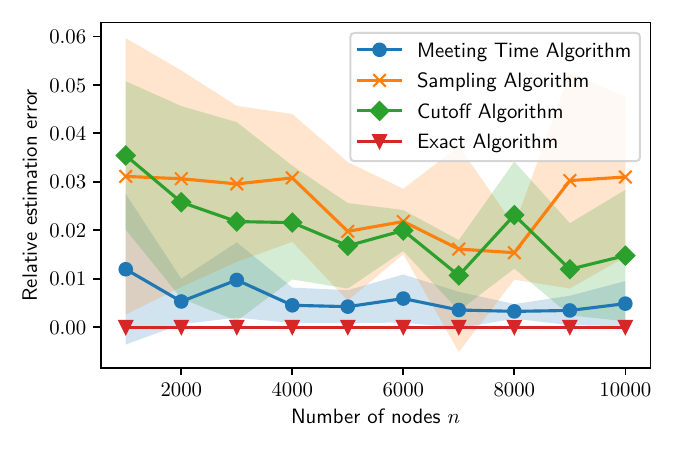}~
    \includegraphics[width=0.33\linewidth]{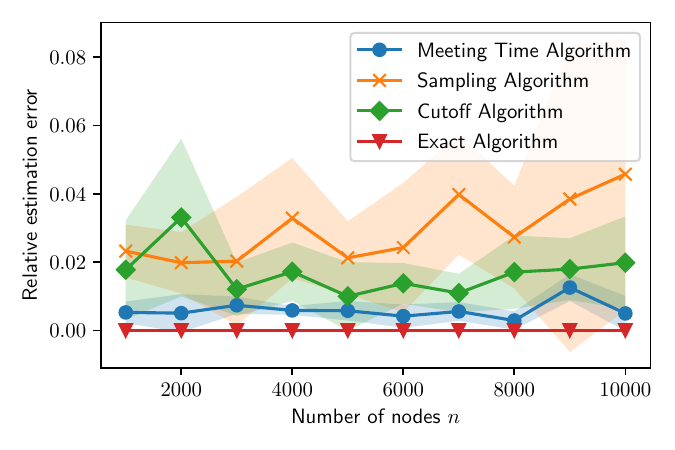}~
    \includegraphics[width=0.33\linewidth]{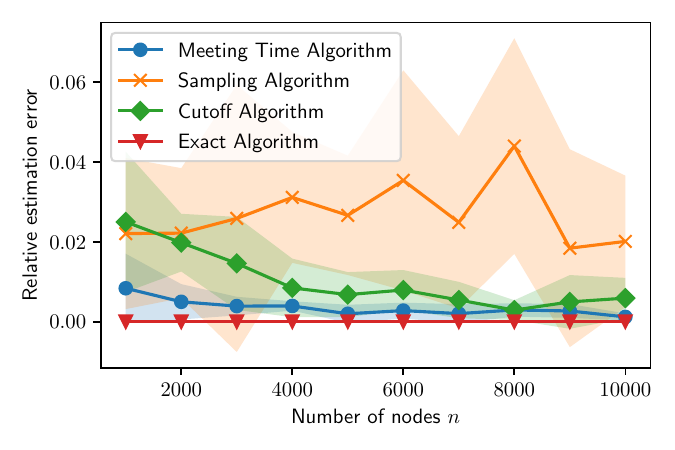} \\
    
    \includegraphics[width=0.33\linewidth]{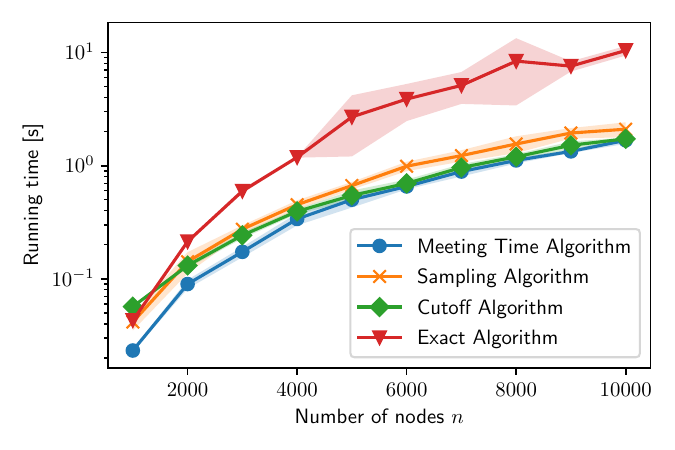}~
    \includegraphics[width=0.33\linewidth]{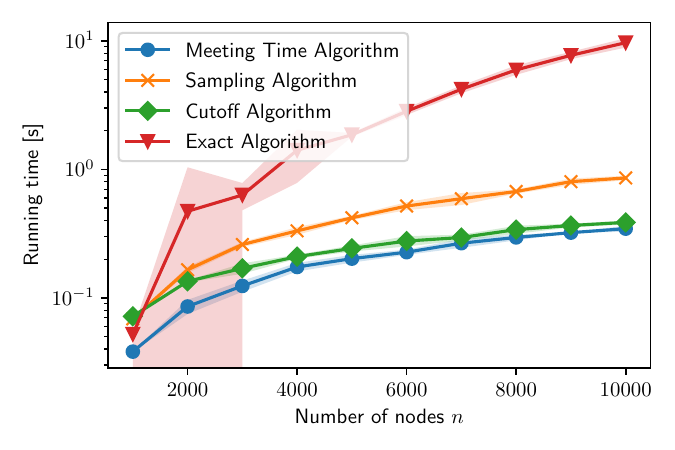}~
    \includegraphics[width=0.33\linewidth]{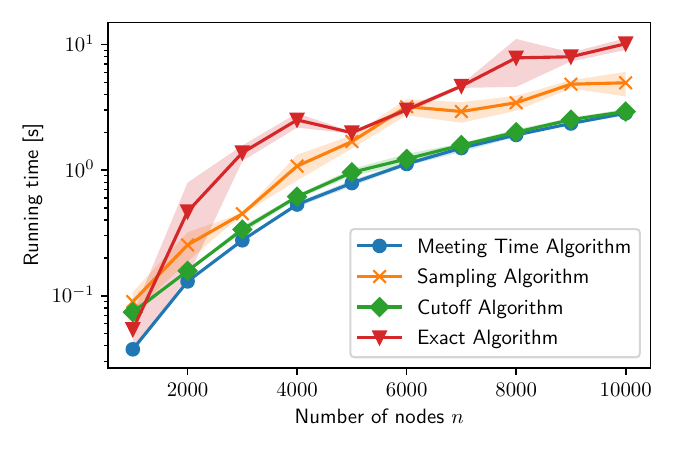} \\
    
    \includegraphics[width=0.33\linewidth]{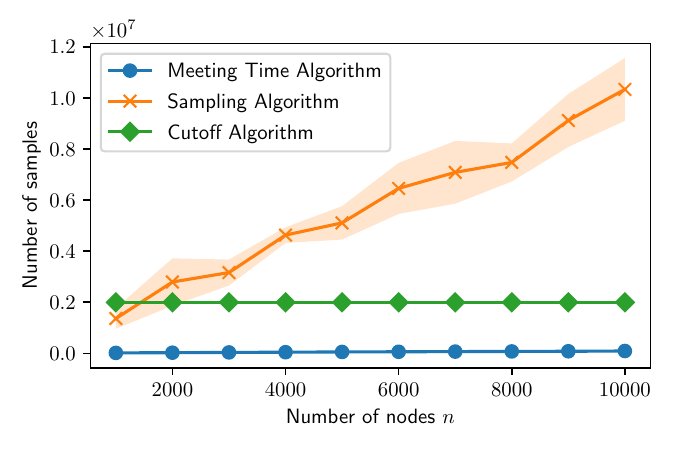}~
    \includegraphics[width=0.33\linewidth]{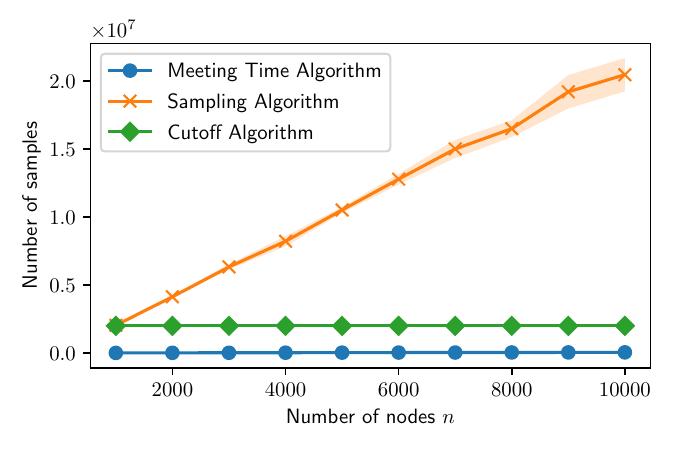}~
    \includegraphics[width=0.33\linewidth]{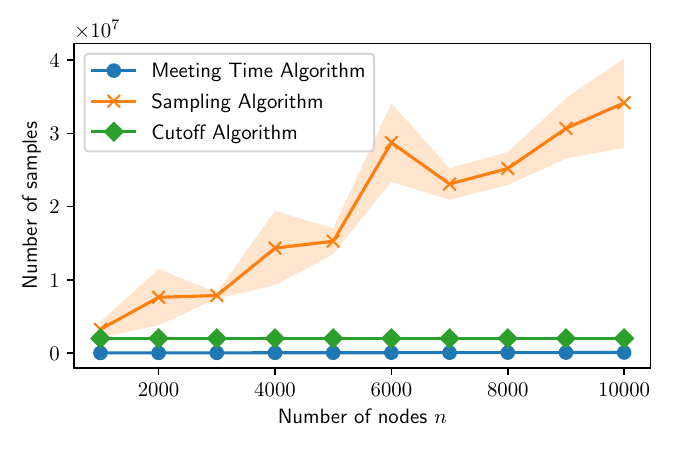}
    
    \caption{Hitting time estimation on synthetic networks as a function of the number of nodes. Depicted are: Barabasi-Albert graphs (\textit{left}), Erdos-Renyi graphs (\textit{middle}) and Stochastic Block Model (SBM) Graphs (\textit{right}).}
    \label{fig:synthetic-results-2}
\end{figure}

We compare between \autoref{alg:ht-meeting}, \autoref{alg:local-hitting-times}, the walk sampling algorithm and an exact solver which determines $H_G(u,v)$ by solving a linear system: Let $H \in \mathbb{R}^n$ be such that $H_w = H_G(w,v)$. Then, for every $u \in V$ we have: $H_u = H_G(u,v) = \sum_{w \in N(v)}\frac{1}{\deg(v)}H_w$.

Additional information on our experimental setting, experiments with parallelization, and other ablation studies can be found in \autoref{appx:additional-experiments}.

\paragraph{Synthetic Datasets} We first compare the performance and error of each of those algorithms on synthetic graph datasets. We focus on three random graph models: 
\textit{Erdős-Rényi}, where each pair of vertices forms an edge with probability $p$, \textit{Barabási-Albert}, where a preferential attachment mechanism is used, and the \textit{Stochastic Block Model} (SBM), where we model the emergence of community structures. We estimate the hitting time from the first to the last node. We compare the relative error, running time and number of sample walks.

Our experiments show that our algorithms estimate the hitting time with low relative error. Among the methods evaluated, the naive walk-sampling algorithm performs the worst, requiring more samples and yielding weaker error guarantees. In contrast, the meeting-time-based algorithm is highly efficient while also maintaining very low error. It also notably exhibits much lower variance in the error than both the sampling algorithm and the cutoff algorithm.

\paragraph{Real World Large Graphs}
We also run our algorithms on real world graphs,  specifically the American Football Division IA games graph
\citep{girvan2002community}  and the SNAP Facebook and Twitter graphs 
\citep{leskovec2012learning}.
To stress test our algorithms adequately, we randomly sample pairs
$u, v$ uniformly, but also proportionally
and inversely proportional according to the
product of degrees or Pagerank
centralities~\cite{page1999pagerank}:
Figure~\ref{fig:ht-corr}
in Appendix~\ref{appx:additional-experiments}
shows the correlation between
hitting times and the degree and
Pagerank centrality products.
We find that \autoref{alg:ht-meeting} consistently achieves low error with small variance.
Note also that we were not able to compute the true
hitting time for the Twitter via a (sparse) linear system.
Thus, we report the mean absolute and mean squared error
compared to 100 iterations of the meeting time algorithm,
as it is the only algorithm which is unbiased.
\begin{table}[h]
\caption{Relative estimation error
for different pair sampling strategies.}
\label{table:sample-table}
\smaller
\centering
\vspace{1mm}
\setlength{\tabcolsep}{5.5pt}
\renewcommand{\arraystretch}{1.1}
\begin{tabular}{ll|cc|cc|c}
\toprule
& \multirow{2}{0pt}{Algorithm}
& \multicolumn{2}{c|}{$\deg(u) \cdot \deg(v)$}
& \multicolumn{2}{c|}{$\mathrm{pagerank}(u) \cdot \mathrm{pagerank}(v)$}
& \multirow{2}{22pt}{uniform}
\\
&
& prop.
& inv-prop.
& prop
& inv-prop.
& 
\\
\midrule
\multirow{3}{6pt}{\rotatebox[origin=c]{90}{Football}}
& \bf Meeting Time Alg.
& $0.013 \pm 0.009$
& $0.012 \pm 0.009$
& $0.012 \pm 0.009$
& $0.011 \pm 0.009$
& $0.012 \pm 0.009$
\\
& Cutoff Alg.
& $1.061 \pm 0.778$
& $1.047 \pm 0.805$
& $1.061 \pm 0.806$
& $0.983 \pm 0.763$
& $0.994 \pm 0.778$
\\
& Sampling Alg.
& $0.025 \pm 0.019$
& $0.025 \pm 0.019$
& $0.026 \pm 0.019$
& $0.025 \pm 0.019$
& $0.025 \pm 0.019$
\\
\midrule
\multirow{3}{6pt}{\rotatebox[origin=c]{90}{Facebook}}
& \bf Meeting Time Alg.
& $0.011 \pm 0.010$
& $0.011 \pm 0.010$
& $0.016 \pm 0.014$
& $0.016 \pm 0.017$
& $0.012 \pm 0.011$
\\
& Cutoff Alg.
& $0.078 \pm 0.096$
& $0.108 \pm 0.129$
& $0.253 \pm 0.297$
& $0.193 \pm 0.357$
& $0.147 \pm 0.156$
\\
& Sampling Alg.
& $0.026 \pm 0.020$
& $0.025 \pm 0.021$
& $0.030 \pm 0.025$
& $0.026 \pm 0.021$
& $0.027 \pm 0.021$
\\
\midrule
\multirow{3}{6pt}{\rotatebox[origin=c]{90}{Twitter}}
& \bf Meeting Time Alg.
& $0.002 \pm 0.001$
& $0.002 \pm 0.002$
& $0.010 \pm 0.007$
& $0.009 \pm 0.007$
& $0.005 \pm 0.004$
\\
& Cutoff Alg.
& $0.022 \pm 0.016$
& $0.023 \pm 0.017$
& $0.162 \pm 0.126$
& $0.102 \pm 0.077$
& $0.086 \pm 0.066$
\\
& Sampling Alg.
& $0.025 \pm 0.019$
& $0.024 \pm 0.018$
& $0.026 \pm 0.019$
& $0.027 \pm 0.020$
& $0.024 \pm 0.018$
\\
\bottomrule
\end{tabular}
\end{table}
\section*{Limitations}
\label{sec:limitations}
Our local algorithms approximate $H_G(u,v)$, with accuracy depending on the graph's structure—a typical trade-off for efficiency. Our algorithms can be thus less efficient when the mixing time is large, though this also affects prior methods.
\section{Conclusion}
We presented theoretical and empirical results on efficient estimators for hitting times in random walks, bridging Markov chain theory with sublinear-time algorithms. Future research directions include establishing universal lower bounds, and exploring applications of such algorithms in domains like recommender systems.

\newpage
\bibliographystyle{plainnat}
\bibliography{bibliography}

\newpage
\appendix
\section{Additional Preliminaries}
\label{appx:additional-prelims}
In this Section we deposit additional preliminary definitions and lemmata essential to our main exposition.
\begin{definition}[Ergodic Markov Chains]
A Markov Chain is \textbf{ergodic} if it is \textit{aperiodic} and \textit{positive recurrent}. It is aperiodic when the gcd of all possible return times to any state is equal to $1$, and it is positive recurrent if the expected return time to any state is finite.
\end{definition}

We shall assume that the Markov chain defined on $G$ is ergodic, and that $G$ is connected with minimum vertex degree at least $1$. 
%
%
%
%
The following theorem describes the rate of convergence of an ergodic Markov Chain to its stationary distribution $\pi$.
\begin{theorem}[Theorem 4.9 from \citep{freedman2017convergence}]
\label{thm:convergence_rate}
If $P$ is ergodic, with stationary $\pi$, then there exist constants $0 < \alpha < 1$ and $C > 0$ such that:
\begin{align}
    \max\limits_{v \in V}||P^t(v,\cdot)-\pi||_{TV} \leq C\cdot \alpha^t
\end{align}
More specifically, let $r$ be the smallest integer for which $P^r_{uv} > 0$ for all $u,v \in V$ and $\theta=1-\min\limits_{u,v\in V}\frac{P^r_{uv}}{\pi(v)}$. Then, the constants $\alpha,C$ above are defined as:
$$
\alpha := \theta^{1/r}\quad\text{ and }\quad C=\frac{1}{\theta}
$$
We call $\alpha$ and $C$ the \textbf{convergence parameters} of the Markov Chain.
\end{theorem}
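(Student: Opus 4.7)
The plan is to prove the convergence bound via a Doeblin-type minorization argument, which gives a clean contraction in total-variation distance without an explicit coupling.

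First, I would invoke ergodicity to justify that $r$ as defined actually exists. Because $P$ is irreducible and aperiodic on a finite state space, $P$ is primitive: there exists some integer $r$ with $P^r_{uv} > 0$ for all $u,v \in V$. Hence the quantity $\theta = 1 - \min_{u,v} P^r_{uv}/\pi(v)$ is well-defined and lies in $[0,1)$ (strictly less than $1$ since all $P^r_{uv} > 0$ and $\pi$ has full support).

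Next, I would unpack the minorization condition $P^r(u,v) \geq (1-\theta)\pi(v)$ into a kernel decomposition
\begin{equation*}
P^r = (1-\theta)\,\mathbf 1 \pi^\top + \theta\, Q,
\end{equation*}
where $Q$ is defined by $Q(u,v) = (P^r(u,v) - (1-\theta)\pi(v))/\theta$. By construction $Q$ is a stochastic kernel. Using stationarity of $\pi$ under $P^r$, a one-line calculation shows $\pi Q = \pi$. Then for any probability measure $\mu$,
\begin{equation*}
\mu P^r - \pi = (1-\theta)\pi + \theta \mu Q - (1-\theta)\pi - \theta \pi Q = \theta\,(\mu - \pi) Q.
\end{equation*}
Since Markov kernels are non-expansive in total variation, I would conclude
\begin{equation*}
\|\mu P^r - \pi\|_{TV} \;\leq\; \theta \,\|\mu-\pi\|_{TV} \;\leq\; \theta.
\end{equation*}

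Finally, I would iterate and handle non-multiples of $r$. Applying the contraction $k$ times yields $\|\mu P^{kr} - \pi\|_{TV} \leq \theta^k$. For arbitrary $t$, write $t = kr + s$ with $0 \leq s < r$; using $\pi P^s = \pi$ and non-expansion of $P^s$ in TV, we get $\|\mu P^t - \pi\|_{TV} \leq \theta^k$. Since $k \geq t/r - 1$, we obtain
\begin{equation*}
\|\mu P^t - \pi\|_{TV} \;\leq\; \theta^{t/r - 1} \;=\; \frac{1}{\theta}\bigl(\theta^{1/r}\bigr)^t \;=\; C\,\alpha^t,
\end{equation*}
which is the desired bound. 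Taking $\mu = \delta_v$ and maximizing over $v$ gives exactly the theorem statement. The main obstacle, such as it is, is being careful with the minorization step: verifying that $Q$ is a genuine stochastic kernel and that $\pi$ is stationary for $Q$ (so that the contraction argument applies). Once the decomposition is set up, the rest is an algebraic iteration.
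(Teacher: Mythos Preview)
The paper does not supply its own proof of this statement: it is listed in the appendix of additional preliminaries purely as a cited result (Theorem~4.9 of the Freedman reference) and is invoked as a black box later in \autoref{claim:exponential-decrease}. So there is nothing in the paper to compare your argument against.

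That said, your Doeblin minorization proof is correct and is in fact the standard textbook route to this bound. The decomposition $P^r = (1-\theta)\mathbf{1}\pi^\top + \theta Q$ is valid precisely because the definition of $\theta$ makes $Q$ entrywise nonnegative, and your verification that $\pi Q = \pi$ (hence the contraction factor is exactly $\theta$) is the key step. The only cosmetic caveat is the degenerate case $\theta = 0$, in which $P^r$ already equals $\mathbf{1}\pi^\top$ and the stated constants $C = 1/\theta$, $\alpha = \theta^{1/r}$ are ill-defined; but then the left-hand side vanishes for all $t \ge r$, so the existence claim in the first sentence of the theorem still holds trivially.
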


The following theorem is a Chernoff-Hoeffding bound for Discrete-Time Markov Chains:
\begin{theorem}[Theorem 3.1. from \citep{chung12}]
\label{thm:chernoff-markov}
    Let $M$ be an ergodic Markov Chain with state space $[n]$ and stationary distribution $\pi$. Let 
    $$
    \tmix = \tmix(\varepsilon) = \min\{t:\max_{x\in\Delta_n}||xM^t-\pi||_{TV} \leq \varepsilon\}
    $$ 
    be its $\varepsilon$-mixing time for $\varepsilon \leq 1/8$. Let $(V_1,...,V_t)$ denote a $t$-step random walk on $M$ starting from an initial distribution $\phi$ on $[n]$, i.e. $V_1\to \phi$. For every $i \in [t]$, let $f_i :[n]\to[0,1]$ be a weight function at step $i$ such that $\mathbb{E}_{v\leftarrow \pi}[f_i(v)]=\mu$ for all $i$. Define the total weight of the walk by:
    $$
    X := \sum\limits_{i=1}^t f_i(V_i)
    $$
    There exists some constant $c$ such that:
    \begin{align}
        \Pr[X \geq (1+\delta)\mu t] &\leq \begin{cases}
            c||\phi||_\pi\cdot\exp(-\delta^2\mu t/(72T)),&\text{for }0\leq \delta\leq 1\\
            c||\phi||_\pi\cdot \exp(-\delta\mu t/(72T)),&\text{for }\delta >1
        \end{cases}\\
        \Pr[X\leq (1-\delta)\mu t]&\leq c||\phi||_\pi\exp(-\delta^2 \mu t/(72T)),\text{ for } 0\leq \delta \leq 1
    \end{align}
where $||\cdot||_\pi$ is the $\pi$-norm induced by the inner product under the $\pi$-kernel:
$$
\langle u,v\rangle_\pi = \sum\limits_{i \in [n]}\frac{u_i v_i}{\pi(i)}
$$
\end{theorem}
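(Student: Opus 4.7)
The plan is to establish the lower bound on the ``barbell''-like instance shown in Figure \ref{fig:barbell}, taking $u = u_n$ (the star center) and $v$ to be any non-bridge vertex of the clique $K_n$. The argument proceeds in three stages: (i) compute the parameters $\tmix$ and $\pi(v)$; (ii) show that a single hitting-time sample $T$ has variance matching $\tmix^2/\pi(v)^2$; (iii) invoke a central-limit--type anti-concentration bound to convert that variance into a sample-complexity lower bound.

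For the parameters, note that $m = n^2 + (n-1) + \binom{n}{2} = \Theta(n^2)$, so $\pi(v) = \deg(v)/(2m) = \Theta(1/n)$. The natural ``middle-of-the-path'' cut has a single boundary edge and volume $\Theta(n^2)$ on either side, so the conductance satisfies $\phi = \Theta(1/n^2)$; combining this with the standard bound $\tmix = \Omega(1/\phi)$ gives $\tmix = \Omega(n^2)$, and hence $\tmix^2/\pi(v)^2 = \Omega(n^6)$.

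For the variance, I would decompose $T$ by the natural excursion structure. Whenever the walker is at $u_n$ it bounces in the star for a time $G \sim \mathrm{Geom}(1/(n^2+1))$ before escaping to $u_{n-1}$, giving $\E[G]=\Theta(n^2)$ and $\Var(G)=\Theta(n^4)$. A Gambler's-ruin calculation on the path of length $n-1$ shows the walker reaches $u_1$ before returning to $u_n$ with probability $\Theta(1/n)$, and from $u_1$ it hits $v$ in the clique before re-entering the path with constant probability. Hence the number $N$ of star-escapes before $v$ is hit is geometric with parameter $q=\Theta(1/n)$, so $\E[N]=\Theta(n)$ and $\Var(N)=\Theta(n^2)$. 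Writing $T = \sum_{i=1}^N G_i + R$ with $R\ge 0$ collecting the path and clique contributions, the law of total variance applied to the compound geometric sum yields
\[
\Var\!\Bigl(\sum_{i=1}^N G_i\Bigr) \;=\; \E[N]\,\Var(G) \;+\; \E[G]^2\,\Var(N) \;=\; \Theta(n\cdot n^4)+\Theta(n^4\cdot n^2) \;=\; \Theta(n^6),
\]
and a short conditioning argument shows $\Var(T) \ge \Omega(n^6) = \Omega(\tmix^2/\pi(v)^2)$.

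For the sample-complexity lower bound, let $T_1,\dots,T_r$ be i.i.d.\ copies of $T$ and $\bar T_r$ their average. The third absolute moment of $T$ is of order $O(n^9)$, matching $\sigma^3$, so the Berry-Esseen inequality gives
\[
\Pr\!\bigl(|\bar T_r - H_G(u,v)| > \varepsilon\bigr) \;\ge\; 2\bigl(1-\Phi(\varepsilon\sqrt{r}/\sigma)\bigr) \;-\; O(1/\sqrt{r}).
\]
Whenever $r = o(\tmix^2/(\pi(v)^2\varepsilon^2)) = o(\sigma^2/\varepsilon^2)$, the argument of $\Phi$ tends to $0$ and the right-hand side tends to $1$, so no constant-probability success is possible; this yields the stated $r = \Omega(\tmix^2/(\pi(v)^2 \varepsilon^2))$ bound. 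The main obstacle will be the anti-concentration step: one must carefully bound the third absolute moment of the compound-geometric sum $\sum_{i=1}^N G_i$ so that the Berry-Esseen error term is indeed $o(1)$ in the relevant regime, and rule out cancellations between $N$ and $R$ that might collapse $\Var(T)$ below $\Theta(n^6)$. A Paley--Zygmund bound tailored to $\sum_{i=1}^N G_i$ is a viable backup if the moment route proves awkward.
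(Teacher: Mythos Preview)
Your proposal does not address the stated theorem at all. The statement is Theorem~\ref{thm:chernoff-markov}, a Chernoff--Hoeffding concentration inequality for sums of bounded functionals along a Markov chain trajectory, quoted verbatim from \cite{chung12}. The paper does not prove it; it is listed in the preliminaries as an imported tool and invoked later (e.g., in the proof of Lemma~\ref{lem:meeting} and in Appendix~\ref{appx:walk-sampling}).

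What you have written is instead a sketch of Theorem~\ref{thm:main-lower-bound-main}, the sample-complexity lower bound for walk-sampling estimators. Even setting aside the mismatch, note that your choice of endpoints differs from the paper's: the paper takes $u=u_1$ (clique side) and $v=u_n$ (star center), for which $\pi(v)=\Theta(1)$ and $\tmix=\Theta(n^3)$, whereas you take $u=u_n$ and $v$ in the clique, giving $\pi(v)=\Theta(1/n)$ and $\tmix=\Omega(n^2)$. Both parameter choices target $\Omega(n^6/\varepsilon^2)$, but the paper's argument avoids Berry--Esseen entirely: it uses a direct anti-concentration bound for sums of geometrics (Lemma~\ref{lemma:geometric-anti-concentration}) together with a relaxation lemma (Lemma~\ref{lemma:anti-relaxation} and Corollary~\ref{cor:relax-anti-concentration-average}) that decouples the compound sum $\sum_{i=1}^N A_i$ into $\mu_{\mathcal A}\cdot N$. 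Your Berry--Esseen route is plausible but, as you note, requires controlling third moments and ruling out variance cancellation with the residual $R$; the paper's decoupling lemma sidesteps both issues.

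If your intent was indeed to prove the cited Chernoff bound, that would require the spectral or coupling machinery of \cite{chung12} and is entirely unrelated to barbell graphs or hitting-time variance.
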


\section{Proof of \autoref{lem:meeting}}
\label{appx:lem-meeting-proof}
\begin{lemma}
    Let $(X_t)_{t \ge 0}$ and $(Y_t)_{t \ge 0}$
    be two random walks starting at states
    $X_0 = u$ and $Y_0 = v$. Let
    $T = \min \{ t \ge 0 : X_t = Y_t \}$
    be the meeting time of the two random
    walks. Then,
    \[
        \Pr[T > t] \le
       O\left(
        \frac{1}{\sqrt{\pi_G(u) \pi_G(v)}}
        \exp\left(
        -\frac{\| \pi_G \|_2^2 t}{72 \tmix}
        \right)
        \right)
    \]
\end{lemma}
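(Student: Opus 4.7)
The plan is to apply the Markov chain Chernoff bound (Theorem 3.1 of \citet{chung12}, restated in \autoref{appx:additional-prelims}) to the product chain on $G \times G$, using the diagonal indicator as the function $f$. Concretely, define the joint process $Z_t = (X_t, Y_t)$, which is a random walk on the Kronecker product graph $G \times G$ started from $(u,v)$. Since the two walks are independent, the stationary distribution of the product chain factors as $\pi_{G \times G}(w, w') = \pi_G(w) \pi_G(w')$. Aperiodicity and irreducibility of the walk on $G$ (which we assume) pass to the product chain, so the concentration bound applies.

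Next, take the weight function $f(x, y) = \mathds{1}_{[x = y]}$, which lies in $\{0, 1\} \subseteq [0,1]$. Its stationary mean is
\[
    \mu = \mathbb E_{(w, w') \sim \pi_{G \times G}}[f(w, w')] = \sum_{w \in V} \pi_G(w)^2 = \| \pi_G \|_2^2 .
\]
The meeting time event $\{T > t\}$ is exactly the event that $X_s \ne Y_s$ for every $0 \le s \le t$, i.e.\ $\sum_{s=1}^{t} f(Z_s) = 0$. This is the extreme lower tail with $\delta = 1$ in Chung et al.'s bound, giving
\[
    \Pr[T > t] \le \Pr\left[ \sum_{s=1}^{t} f(Z_s) \le 0 \right]
    \le c \, \| \phi \|_{\pi_{G \times G}} \exp\!\left( -\frac{\mu t}{72 \, \tmix(G \times G)} \right),
\]
where $\phi$ is the initial distribution of $Z_0$, namely a point mass on $(u,v)$.

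For a point mass at $(u,v)$, the $\pi$-norm is
\[
    \| \phi \|_{\pi_{G \times G}} = \sqrt{\frac{1}{\pi_G(u) \pi_G(v)}} ,
\]
which produces the leading factor of the claimed bound. The only remaining input is the mixing time of the product chain: by independence, two independent copies of a chain with mixing time $\tmix$ themselves mix in time $O(\tmix)$ (a standard union-bound argument on total variation of product measures), so $\tmix(G \times G) = O(\tmix)$, and the constant can be absorbed into the $72$ in the exponent (adjusting the implicit $O(\cdot)$). Combining these ingredients yields the stated inequality.

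The main technical subtlety I expect is the passage from the mixing time of $G$ to that of $G \times G$: one must verify that the product chain is indeed ergodic (using aperiodicity of the walk on $G$, so that the product chain on $G \times G$ is irreducible and aperiodic) and that its mixing time is within a constant factor of $\tmix(G)$. Everything else is routine bookkeeping: identifying $\mu$, $\|\phi\|_\pi$, and reading off the exponent from the Chernoff bound with $\delta = 1$.
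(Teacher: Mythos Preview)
Your proposal is correct and follows essentially the same route as the paper: model $(X_t,Y_t)$ as a walk on the Kronecker product $G\times G$, take $f(x,y)=\mathds 1_{[x=y]}$ so that $\mu=\|\pi_G\|_2^2$, and apply the lower-tail Markov-chain Chernoff bound of \citet{chung12} with $\delta=1$, reading off $\|\phi\|_{\pi_{G\times G}}=1/\sqrt{\pi_G(u)\pi_G(v)}$ from the point-mass start. The only cosmetic difference is that the paper asserts $\tmix(G\times G)=\tmix(G)$ outright, whereas you (more carefully) argue $\tmix(G\times G)=O(\tmix)$ via a product/union-bound argument and absorb the constant; your flagging of the ergodicity of the product chain is a point the paper leaves implicit.
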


\begin{proof}
    Note that $Z_t = (X_t, Y_t) \in V \times V$
    can be modeled as a random walk on the
    Kronecker product $G \times G$, where
    $\pi_{G \times G}(x, y) = \pi(x) \pi(v)$
    and $\tmix(G \times G) = \tmix(G)$.
    In order to apply a Markov Chain Chernoff bound (\autoref{thm:chernoff-markov}),
    we define the weight function
    $f(x, y) = 1$ if $x = y$ and $f(x, y) = 0$,
    otherwise.
    Let $F = \sum_{\tau=0}^t f(X_i, Y_i)$
    and observe that $t < T$
    if and only if $F = 0$.
    Since $\pi_{G \times G}(x, y) = \pi_G(x) \pi_G(y)$
    we also have
    $\mu = \| \pi_G \|_2^2$.
    If $\delta = 1$, then \autoref{thm:chernoff-markov} gives:
    \[
        \Pr[T > t] = \Pr[F = 0]
        = \Pr[F \ge (1 - \delta) \mu t] \le
        O\left(
        \frac{1}{\sqrt{\pi_G(u) \pi_G(v)}}
        \exp\left(
        -\frac{\| \pi_G \|_2^2 t }{72 \tmix}
        \right)
        \right) .
    \]
\end{proof}

\section{Proof of \autoref{thm:cutoff-algo-analysis}}
\label{appx:proof-cutoff-analysis}
In this Section we prove \autoref{thm:cutoff-algo-analysis}.
\begin{theorem}
Algorithm \ref{alg:local-hitting-times} returns an estimate $\widehat{h}_G(s,t)$ such that with probability at least $9/10$ it holds that:
$$
\left|\widehat{h}_G(s,t)-H_G(s,t)\right| \leq \varepsilon
$$
Its runtime is bounded by
$$
O\left(\frac{\ell^4}{\varepsilon^2\cdot\pi_G(v)^2}\log\ell\right),\,\text{ where}\quad
\ell = \frac{\log\frac{n}{\varepsilon-\varepsilon\lambda}}{\log\frac{1}{\lambda}}
$$
for the spectral gap $\lambda :=\max\{|\lambda_2|,|\lambda_n|\}$ of $G$.
\end{theorem}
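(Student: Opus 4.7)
My plan is to decompose the proof into three parts: a \emph{truncation} error bound, a \emph{sampling} error bound, and a runtime calculation. The starting point is the identity
\[
H_G(u,v) \;=\; \sum_{i=0}^{\infty}\!\Big(\mathbf 1-\tfrac{1}{\pi(v)}\mathbf 1_{v}\Big)^{\!\top}\! W^{i}\chi_{uv}
\;=\; \frac{1}{\pi(v)}\sum_{i=0}^{\infty}\big(\Pr[v\!\to^{i}\!v] - \Pr[u\!\to^{i}\!v]\big),
\]
which follows from \autoref{lemma:hitting-time-decomposition} by expanding the matrix-vector product and using $\mathbf 1^{\top}\chi_{uv}=0$. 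Reversibility then lets me rewrite each $\Pr[u\!\to^{i}\!v]$ in terms of a walk \emph{started at} $v$, so the quantities $\widehat{p_{i,v}},\widehat{p_{i,u}}$ estimated in the algorithm are unbiased for the $i$-th summand.

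The first substantive step is to bound the tail $\sum_{i\ge \ell}(\mathbf 1-\frac{1}{\pi(v)}\mathbf 1_{v})^{\!\top}W^{i}\chi_{uv}$. Plugging in the spectral expansion $W=\sum_{j}\lambda_j w_j w_j^{\top}$ from the proof of \autoref{lemma:hitting-time-decomposition}, the $j=1$ component cancels against $w_1^{\top}\chi_{uv}=0$, so every remaining term picks up a factor $\lambda_j^{i}$ whose absolute value is at most $\lambda^{i}$. Bounding the two test vectors crudely in $\ell_2$ by $O(\sqrt{n}/\pi(v))$ and geometrically summing yields a tail of magnitude $O\!\big(n\,\lambda^{\ell}/(1-\lambda)\big)$; the stated choice $\ell=\log(n/(\varepsilon-\varepsilon\lambda))/\log(1/\lambda)$ is precisely what makes this at most $\varepsilon/2$. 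This truncation step is where I expect the main technical care to be needed, because the test vector $\mathbf 1-\frac{1}{\pi(v)}\mathbf 1_{v}$ has a large coordinate that must be handled without losing additional factors of $1/\pi(v)$; I would extract the $\mathbf 1$ piece explicitly (it contributes zero to $W^{i}\chi_{uv}$ for $i\ge 1$ since $\mathbf 1^{\top}W = \pi^{\top}/\pi_{\max}$-type cancellations occur after normalization), and bound only the $\mathbf 1_{v}$ contribution.

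The second step is to bound, for each fixed $i<\ell$, the additive error $|(\widehat{p_{i,v}}-\widehat{p_{i,u}}) - (\frac{1}{\pi(v)}\Pr[v\!\to^{i}\!v]-\frac{1}{\pi(v)}\Pr[u\!\to^{i}\!v])|$ by $\varepsilon/(2\ell)$. Because $T_u$ and $T_v$ are each binomial sums of $r$ independent walks from $v$ and the renormalization factor is $\frac{2m}{\deg(v)}=1/\pi(v)$, the difference $\widehat{p_{i,v}}-\widehat{p_{i,u}}$ is an average of $r$ independent random variables lying in $[-1/\pi(v),1/\pi(v)]$. A Hoeffding bound then gives failure probability $\le 2\exp(-r\,\pi(v)^{2}\varepsilon^{2}/(8\ell^{2}))$; substituting the chosen $r=32\ell^{2}\log(40\ell)/(\varepsilon^{2}\pi(v)^{2})$ makes this $\le 2/(40\ell)^{4}$. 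A union bound over the $\ell$ indices leaves total sampling failure probability at most $1/10$, which together with the deterministic $\varepsilon/2$ truncation bound delivers the claimed $9/10$ guarantee by the triangle inequality.

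Finally, the runtime is dominated by performing $r$ random walks of length at most $\ell$ in each of the $\ell$ iterations; each step of each walk costs $O(1)$ using the oracle access. The total is $O(r\cdot \ell \cdot \ell) = O(\ell^{4}\log\ell/(\varepsilon^{2}\pi_G(v)^{2}))$, matching the stated bound. The only non-routine piece of the argument is controlling the norm of $\mathbf 1-\frac{1}{\pi(v)}\mathbf 1_v$ in the truncation step; everything else is standard concentration and bookkeeping.
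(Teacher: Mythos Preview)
Your three-part decomposition—spectral truncation at $\ell$, per-iteration Hoeffding with a union bound, and the $O(r\ell^{2})$ runtime count—is exactly the paper's argument; even the constants in $r$ and the $2\ell$-event union bound match (the paper bounds $\widehat{p_{i,u}}$ and $\widehat{p_{i,v}}$ separately to accuracy $\varepsilon/(4\ell)$ rather than their difference to $\varepsilon/(2\ell)$, which is cosmetically different but equivalent).

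One correction to your truncation sketch: the cancellation of the $\mathbf 1$-piece is simply $\mathbf 1^{\top}W=\mathbf 1^{\top}$, because $W=P^{\top}$ is column-stochastic, so $\mathbf 1^{\top}W^{i}\chi_{uv}=\mathbf 1^{\top}\chi_{uv}=0$ for every $i\ge 0$; there is no ``$\pi^{\top}/\pi_{\max}$-type'' normalization involved. The paper uses this exact identity (in the sampling step) and, for the truncation, bounds the whole inner product $(\mathbf 1-\tfrac{1}{\pi(v)}\mathbf 1_v)^{\top}w_jw_j^{\top}\chi_{uv}$ by $\|w_jw_j^{\top}\chi_{uv}\|_{1}\le\sqrt{n}$ directly, arriving at a tail of $n^{3/2}\lambda^{\ell}/(1-\lambda)$ and the stated $\ell$. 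Your concern about stray $1/\pi(v)$ factors is harmless here anyway, since $1/\pi(v)\le 2m\le n^{2}$ only shifts $\ell$ by a constant inside the logarithm.
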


\begin{proof}
It is easy to see that the algorithm runs in $\widetilde{O}(r\ell^2)$, which gives us the claimed runtime. We therefore just need to argue about the algorithm's correctness. Analogously to the effective resistance approach taken in \citep{peng2021local}, we want to find a cut-off point for the series in \autoref{eq:hitting-time-decomp}. That is, we want to find as small an $\ell(\varepsilon)$ as possible such that:
\begin{align*}
\left|H_G(u,v)-\sum_{i=0}^{\ell-1}\left({\bf 1}-\frac{1}{\pi(v)}{\bf 1}_{v}\right)^{\top}W^{i}\chi_{uv}\right| \leq \frac{\varepsilon}{2} 
\end{align*}
We first spectrally decompose the $i$-th power of $W$: 
$$
W^i = \sum\limits_{j=1}^n \lambda_j^i w_j w_j^T
$$
as before. Then, we have that:
\begin{align*}
&\left|\sum_{i=\ell}^{\infty}\left({\bf 1}-\frac{1}{\pi(v)}{\bf 1}_{v}\right)^{\top}W^{i}\chi_{uv}\right|\\
&=\left|\sum_{i=\ell}^{\infty}\left({\bf 1}-\frac{1}{\pi(v)}{\bf 1}_{v}\right)^{\top}\left(\sum\limits_{j=1}^n \lambda_j^i w_j w_j^T\right)\chi_{uv}\right|\\
&=
\left|\sum_{i=\ell}^{\infty}\sum\limits_{j=2}^n\left({\bf 1}-\frac{1}{\pi(v)}{\bf 1}_{v}\right)^{\top}\left( \lambda_j^i w_j w_j^T\right)\chi_{uv}\right|\tag{as $w_1^T \chi_{st} = 0$}\\
&\leq
\left|\sum_{i=\ell}^{\infty}\lambda^i\sum\limits_{j=2}^n\left({\bf 1}-\frac{1}{\pi(v)}{\bf 1}_{v}\right)^{\top}\left( w_j w_j^T\right)\chi_{uv}\right|\tag{Definition of $\lambda$}
\end{align*}
We give a straightforward bound for the main term. If $j \in \{2,...,n\}$:
$$
\left({\bf 1}-\frac{1}{\pi(v)}{\bf 1}_{v}\right)^{\top}\left( w_j w_j^T\right)\chi_{uv} \leq ||w_j w_j^T \chi_{uv}||_1 \leq \sqrt{n}
$$
since we can choose $w_j$ so that $||w_j||_2 = 1$. Thus:
\begin{align*}
\left|H_G(s,t)-\sum_{i=0}^{\ell-1}\left({\bf 1}-\frac{1}{\pi(t)}{\bf 1}_{t}\right)^{\top}W^{i}\chi_{st}\right| \leq \left|\sum_{i=\ell}^{\infty}\lambda^i n\sqrt{n} \right| \leq \frac{\lambda^\ell}{1-\lambda}\cdot n\sqrt{n}
\end{align*}
since $\lambda < 1$. Now, we aim to set $\lambda$ such that:
$$
\frac{\lambda^\ell}{1-\lambda}\cdot n\sqrt{n} \leq \frac{\varepsilon}{2} 
$$
It suffices, thus, to set our threshold to:
$$
\ell = \frac{\log\left(\frac{4n\sqrt{n}}{\varepsilon-\varepsilon\lambda}\right)}{\log(1/\lambda)} = O\left(\frac{\log\frac{n}{\varepsilon-\varepsilon\lambda}}{\log(1/\lambda)}\right)
$$
Therefore we only need to ensure that:
\begin{align}
\label{eq:cutoff-error}
\left|\widehat{h}_G(u,v)-\sum_{i=0}^{\ell-1}\left({\bf 1}-\frac{1}{\pi(v)}{\bf 1}_{v}\right)^{\top}W^{i}\chi_{uv}\right| \leq \frac{\varepsilon}{2}
\end{align}
and then by triangle inequality we would be done. Our algorithm approximates, for each $i \in \{0,1,...,\ell-1\}$ the following term:
\begin{align*}
\left({\bf 1}-\frac{1}{\pi(v)}{\bf 1}_{v}\right)^{\top}W^{i}\chi_{uv}
&=
\cancel{{\bf 1}^{\top}W^{i}{\bf 1}_{u}}
-
\frac{1}{\pi(v)}{\bf 1}_{v}^{\top}W^{i}{\bf 1}_{u}
-
\cancel{{\bf 1}^{\top}W^{i}{\bf 1}_{v}}
+
\frac{1}{\pi(v)}{\bf 1}_{v}^{\top}W^{i}{\bf 1}_{v}\\
&=\frac{1}{\pi(v)}\left({\bf 1}_{v}^{\top}W^{i}{\bf 1}_{v}-{\bf 1}_{v}^{\top}W^{i}{\bf 1}_{u}\right)
\end{align*}
where the cancellations in the first equality follow from the fact that $\mathbf{1}^\top W^i = \mathbf{1}^\top$. Now we have that:
$$
{\bf 1}_{v}^{\top}W^{i}{\bf 1}_{v} = \Pr[v\to^i v]\,\,\text{ and }\,\,{\bf 1}_{v}^{\top}W^{i}{\bf 1}_{u}=\Pr[v\to^i u]
$$
where $\Pr[s\to^i t]$ is the probability a random walk starting from $s$ reaches $t$ in $i$ steps. We estimate these probabilities by performing $r$ independent random walks starting at $v$ and letting $T_{u}$ and $T_{v}$ be the number of those walks ending up in $u$ and $v$ respectively. Then, we estimate:
$$
\widehat{p_{i,u}} = \frac{2m}{\deg(v)}\cdot \frac{T_u}{r}\quad\text{and}\quad\widehat{p_{i,v}} = \frac{2m}{\deg(v)}\cdot \frac{T_v}{r}
$$
We know that as a sum of indicators we have:
$$
\mathbb{E}[T_s] = r\cdot \Pr[t\to^i s]
$$
and so the Hoeffding bound gives:
\begin{align*}
\Pr\left[\left|\widehat{p_{i,s}}-\frac{1}{\pi(t)}\Pr[t\to^i s]\right|\geq \frac{\varepsilon}{4\ell}\right]&=\Pr\left[\left|T_s-\mathbb{E}[T_s]\right| \geq r\cdot\frac{\varepsilon\deg(t)}{8\ell m}\right]\\
&\leq 2\exp\left(\frac{-2 \varepsilon^2\pi(v)^2}{64\ell^2}\cdot r\right)\\
&\leq\frac{1}{20\ell}
\end{align*}
when we set:
$$
r \geq \frac{32\ell^2}{\pi(v)^2 \varepsilon^2}\log(40\ell)
$$
Through a simple use of the union bound over $2\ell$ sums, we conclude that \autoref{eq:cutoff-error} holds with  probability at least $9/10$, which concludes our theorem.
\end{proof}

\section{A Walk Sampling Algorithm}
\label{appx:walk-sampling}
In this section we analyze a simple algorithm for estimating the hitting time $H_G(s,t)$: sampling a collection of random walks from $s$ of various lengths and counting how many of them hit $t$. The analysis uses a Chernoff bound for Markov Chains \citep{chung12}.
\begin{theorem}
Let $M$ be an ergodic Markov chain with stationary distribution $\pi$. Let $s$ and $t$ be two different states. There exists an algorithm calculating the hitting time $H_G(s,t)$ to within an absolute error of $\varepsilon$ with probability at least $\frac{1}{n}$ that uses 
$$
\widetilde{O}\left(\frac{T^2}{\pi(t)^2 \varepsilon^2}\right)
$$
random walk samples of length at most $\widetilde{O}(T/\pi(t))$, where $T$ is the $\varepsilon$-mixing time of $M$.
\end{theorem}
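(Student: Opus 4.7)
The plan is to analyze the following simple algorithm: fix a truncation length $L = \Theta(\tmix \log(n/\varepsilon)/\pi(t))$ and a sample count $r = \Theta(L^2 \log n / \varepsilon^2)$, draw $r$ independent random walks of length $L$ starting at $s$, and for the $i$-th walk let $\tau_i$ be the first time $t$ is visited, capped at $L$. Return $\widehat H = \frac{1}{r}\sum_{i=1}^r \tau_i$. The analysis rests on the decomposition
\[
    |\widehat H - H_G(s,t)| \le |\widehat H - \E[\tau_1]| + |\E[\tau_1] - H_G(s,t)|,
\]
where each term is driven below $\varepsilon/2$.

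For the bias term, using the identity $H_G(s,t) = \sum_{k \ge 0} \Pr[T_t > k]$, where $T_t$ denotes the first hitting time of $t$, the truncation error equals $\sum_{k \ge L} \Pr[T_t > k]$. To bound $\Pr[T_t > k]$, I would invoke the Markov Chain Chernoff bound (Theorem~\ref{thm:chernoff-markov}) with the weight $f(v) = \ind_{[v=t]}$, whose stationary mean is $\mu = \pi(t)$: the event $T_t > L$ is exactly the event that the total weight $X$ of an $L$-step walk equals zero, so setting $\delta = 1$ in the lower-tail inequality yields
\[
    \Pr[T_t > L \mid X_0 = s] \le \frac{c}{\sqrt{\pi(s)}} \exp\!\left(-\frac{\pi(t)\, L}{72\,\tmix}\right).
\]
The same bound (with $\pi(s)$ replaced by $\pi_{\min}$) holds from any starting state, so partitioning $[L, \infty)$ into chunks of length $L$ and applying the strong Markov property gives $\Pr[T_t > jL] \le p^j$ with $p = O(\exp(-\pi(t) L / (72\,\tmix)) / \sqrt{\pi_{\min}})$. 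The truncation tail is thus at most $L p / (1 - p)$, which is at most $\varepsilon / 2$ under the choice $L = \widetilde \Theta(\tmix / \pi(t))$ whose hidden logarithms absorb $\log(n/\varepsilon)$ and $\log(1/\pi_{\min})$.

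For the concentration term, each $\tau_i$ lies in $[0,L]$, so Hoeffding's inequality gives $\Pr[|\widehat H - \E[\tau_1]| > \varepsilon/2] \le 2 \exp(-r \varepsilon^2 / (2L^2))$. Setting this at most $1/n$ requires $r = \Omega(L^2 \log n / \varepsilon^2) = \widetilde O(\tmix^2 / (\pi(t)^2 \varepsilon^2))$, matching the claimed sample and walk-length complexities. A triangle inequality combined with a union bound over the two error events completes the proof.

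The main obstacle is the bias analysis: the Markov Chain Chernoff bound only controls the probability of failing to visit $t$ within a single $L$-window and carries a $\|\phi\|_\pi$ prefactor that depends on the starting distribution. Summing the full tail $\sum_{k \ge L}$ forces one to re-apply the bound from arbitrary intermediate states, which is what drives the $\log(1/\pi_{\min})$ factor into the choice of $L$; once this is absorbed into the $\widetilde{O}$ notation, the rest is standard.
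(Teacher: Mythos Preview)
Your proof is correct and follows the same overall strategy as the paper---bound the hitting-time tail via the Markov chain Chernoff inequality, then concentrate with Hoeffding---but your bias analysis takes an unnecessary detour. The bound of Theorem~\ref{thm:chernoff-markov} already yields
\[
    \Pr[T_t > k \mid X_0 = s] \;=\; \Pr[F^{(k)} = 0] \;\le\; \frac{c}{\sqrt{\pi(s)}}\exp\!\left(-\frac{\pi(t)\,k}{72\,T}\right)
\]
for \emph{every} $k$, with the prefactor depending only on the initial state $s$; the paper simply sums this geometric series (in the equivalent form $\sum_{i \ge \ell} i\cdot\Pr[T_t = i]$) and never needs to restart from an intermediate state, so no $\pi_{\min}$ appears. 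Your ``obstacle'' is therefore illusory: the chunking-plus-strong-Markov argument works but is strictly more laborious. On the other hand, your truncate-then-Hoeffding step is arguably cleaner than the paper's device of leaving the walks uncapped, conditioning on the event $E$ that all $r$ of them hit $t$ within $\ell$ steps, and then separately bounding $|\E[\bar X] - \E[\bar X \mid E]|$.
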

\begin{proof}
Recall the definition of the hitting time:
$$
H_{st}=\sum_{i=0}^{\infty}i\cdot\Pr\left[T_1(t) = i \mid X_0 = s)\right] ,
$$
We wish to identify a threshold $\ell$ such that: 
$$
\sum_{i=l}^{\infty}i\cdot\Pr\left[T_1(t) = i \mid X_0 = s)\right] < \varepsilon
$$
\noindent
In the context of Theorem \ref{thm:chernoff-markov}, let
$f(u)=1$ if $u = t$ and $f(u)=0$
otherwise. Then, $\mu = \mathbb{E}_{X \sim \pi}[f(X)]
= \Pr_{X \sim \pi}[X = t]
= \pi(t)$. Suppose $F^{(i)} = \sum_{j\leq i}f(X_j)$. Our distribution $\phi$ has weight $0$ for all points other than $s$, so $||\phi||_\pi = \sqrt{\langle \phi,\phi\rangle_{\pi}} = \sqrt{\frac{1}{\pi(s)}}$ and thus we have:
\begin{align*}
    \Pr[F^{(i)} \le (1-\delta) \mu i]
    & \le \frac{c}{\sqrt{\pi(s)}}
    \exp\left(-\delta^2
      \frac{\mu i}{72 T}\right) \iff\\
    \Pr[F^{(i)} \le 0]
    & \le \frac{c}{\sqrt{\pi(s)}}
    \exp\left(-
      \frac{\pi(t) i}{72 T}\right)
\end{align*}
where we chose $\delta=1$.
For all $i$, we have $\Pr[F^{(i)}=0] \ge \Pr[T_1(t)={i+1} \mid X_0=s]$ and thus
\begin{align*}
    \sum_{i=\ell-1}^\infty (i+1) \cdot
        \Pr[T_1(t)=(i+1) \mid X_0=s]
    &\le 
    \sum_{i=\ell-1}^\infty (i+1) \cdot
        \Pr[F^{(i)} = 0] \\
    &\le 
    \frac{c}{\sqrt{\pi(s)}} \sum_{i=\ell-1}^\infty (i+1) \cdot
        \exp\left(-\frac{\pi(t)i}{72T}\right) \\
    &=\frac{c}{\sqrt{\pi(s)}} \sum_{i=\ell-1}^\infty (i+1) \cdot
        \alpha^i\\
    &=\frac{c}{\sqrt{\pi(s)}}\left(\frac{\alpha^{\ell-1}}{(1-\alpha)^2}+\frac{(\ell-1)\alpha^{\ell-1}}{1-\alpha}\right)\\
    &\leq \frac{2c\ell}{\sqrt{\pi(s)}}\cdot\frac{\alpha^{\ell-1}}{1-\alpha}
\end{align*}
for $\alpha=\exp(-\pi(t) / 72T) \ll 1$, where for the derivation we used differentiation on the geometric series. We seek some $\ell$ that makes the tail have weight at most $\frac{\varepsilon}{r\ell}$, for reasons we will discuss shortly, where $r = \ell^2 / \varepsilon^2$:
\begin{align*}
\frac{2c\ell}{\sqrt{\pi(s)}}\cdot\frac{\alpha^{\ell-1}}{1-\alpha} \leq \frac{\varepsilon}{r\ell} \iff \ell^4\alpha^{\ell-1} &\leq \frac{\varepsilon^3(1-\alpha)\sqrt{\pi(s)}}{2c}\\
(\ell-1)\ln(\alpha)+4\ln(\ell)&\leq \ln\frac{\varepsilon^3(1-\alpha)\sqrt{\pi(s)}}{2c}\\
\ell &\geq 1+\frac{1}{\ln\alpha}\left(\ln\frac{\varepsilon^3(1-\alpha)\sqrt{\pi(s)}}{2c}-4\ln\ell\right)
\end{align*}
Since we must have that $\ell \geq 1$, it suffices to set:
\begin{align*}
    \ell \geq \max\left\{1,1+\frac{\ln\frac{\varepsilon^3(1-\alpha)\sqrt{\pi(s)}}{2c}}{\ln \alpha}\right\}=\Omega\left(\frac{T}{\pi(t)}\ln\frac{\sqrt{\pi(s)}}{\varepsilon^3}\right)
\end{align*}
This means that it suffices to take random walks of length approximately equal to the $\varepsilon$-mixing time. Next, we need to figure out how many of those random walks we need. 
Assume now we perform $r$ independent random walks from $s$
to $t$ and record the number of steps
as the random variables $X_1, \dots, X_r$.
Let us use $E$ to denote the
event that $X_i \le \ell$
for all $i$.
By another application of
Theorem~\ref{thm:chernoff-markov}
and our choice of $\ell$,
\[
    \Pr[X_i > \ell] =
    \Pr[F^{(\ell)} \le 0]
    \le \frac c {\sqrt{\pi(s)}}
    \exp\left( - \frac{\pi(t) \ell}{72 T} \right)
    \le \frac{\varepsilon}{r\ell}
\]
and thus
$\Pr[\overline E] \le r\cdot \frac{\varepsilon}{r\ell} = \frac{\varepsilon}{\ell}$
by a union bound.
By the law of total expectation,
\begin{align*}
    \left|
    \mathbb{E}\left[
    \frac 1 r
    \sum_{i=1}^r X_i
    \right]
    -
    \mathbb{E}\left[ \frac 1 r \sum_{i=1}^r X_i \middle| E\right]
    \right| &=
    \left|
    \E\left[
    X_1
    \right]
    -
    \E\left[X_1 \mid E\right]
    \right| \\ &=
    \Pr[\overline E] \cdot \left|
    \E\left[
    X_1
    \mid E \right] -
    \E\left[
    X_1
    \middle| \overline E \right]
    \right| \\
    &\le \ell \Pr[\overline E] +
    \Pr[\overline E]
    \E\left[
    X_1
    \mid \overline E \right] \\
    &= \ell \Pr[\overline E] +
    \sum_{i=\ell+1}^\infty i \cdot
        \Pr[X_1 = i \cap \overline E] \\
    &= \ell \Pr[\overline E] +
    \underbrace{\sum_{i=\ell+1}^\infty i \cdot
        \Pr[X_1 = i]}_{\le \varepsilon}\tag{*}\\
    &\leq 2\varepsilon\
\end{align*}
On the other hand, by a Hoeffding bound taken after conditioning on $E$, we have that:
\[
    \Pr\left[\left|\frac 1 r
    \sum_{i=1}^r X_i
    -\E\left[
    \frac 1 r
    \sum_{i=1}^r X_i \middle| E\right]\right|
    \geq \varepsilon \middle| E
    \right]
    \le 2
    \exp\left(
    -\frac{2 r \varepsilon^2}{\ell^2}
    \right)
\]
If we choose
\[
    r = O\left(
    \frac {\ell^2\lg n} {\varepsilon^2}
    \right)
    = \tilde O\left(
    \frac{T^2}{\pi(t)^2 \varepsilon^2}
    \right)
\]
then, in combination with (*) we get that with probability at least $1-\frac{1}{n}$:
$$
\left|
    \frac 1 r
    \sum_{i=1}^r X_i
    -
    \mathbb{E}\left[ \frac 1 r \sum_{i=1}^r X_i\right]
    \right| \leq 3\varepsilon
$$
and that concludes our proof.
\end{proof}

\section{Lower Bound to the Number of Sampled Walks}
\label{appx:lower-bound}
A natural question is whether the upper bound established by our analysis of the walk sampling algorithm is tight. In this section we answer this question in the affirmative, showing that the number of random walk samples required to estimate the hitting time is $\Omega(\frac{\tmix^2}{\pi(t)^2\varepsilon^2})$ for a certain class of graphs. This result can be interpreted in the sense that there is no ``free lunch'' in estimating the hitting time of a Markov Chain, as there are always pathological graphs that require many samples. 

Our lower bound applies to a class of ``barbell'' graphs with one star and one clique connected by a path. The main idea is that we can model the hitting time from the clique to the start by the product $\Theta(n^2)\cdot \text{Geom}(1/n)$. Arguing about the anti-concentration of geometric-like distributions we arrive at the desired result.

\subsection{Preliminaries}
Our analysis starts with the following standard anti-concentration bounds on the Binomial Distribution:

\begin{lemma}[Anti-concentration of upper tail, \citep{mousavi2010tight}]
\label{lemma:upper-tail-anti}
Let $X_1,...,X_m$ be iid Bernoulli random variables with parameter $p$. If $p \leq 1/4$, then for any $t \geq 0$ we have: 
\begin{align}
    \Pr\left[\sum\limits_{i=1}^m X_i \ge mp+t\right] \geq \frac{1}{4}\exp\left(-\frac{2t^2}{mp}\right)
\end{align}
\end{lemma}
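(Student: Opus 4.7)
The plan is to isolate enough binomial mass just above $mp+t$ by combining a Stirling-based pointwise lower bound on the PMF with a summation over a short window whose length is tuned to cancel the $1/\sqrt{mp(1-p)}$ factor produced by Stirling. Writing $S_m := \sum_{i=1}^m X_i$, I would first apply Stirling's formula $n! = \sqrt{2\pi n}(n/e)^n e^{\theta_n/(12n)}$ with $\theta_n\in(0,1)$ to show that, for every $k\in\{1,\dots,m-1\}$ with $q := k/m$,
\[
    \Pr[S_m = k] = \binom{m}{k} p^k (1-p)^{m-k} \;\geq\; \frac{C_1}{\sqrt{m\, q(1-q)}}\exp\!\bigl(-m\,D(q\|p)\bigr),
\]
for some absolute constant $C_1$, where $D(q\|p) = q\ln(q/p)+(1-q)\ln((1-q)/(1-p))$ is the binary KL divergence.

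Next I would bound the exponent. Since $D(p\|p) = \partial_q D(p\|p) = 0$ and $\partial_q^2 D(q\|p) = 1/(q(1-q))$, Taylor's theorem gives $D(q\|p) = (q-p)^2/(2\xi(1-\xi))$ for some $\xi\in[p,q]$. Combined with $p\le 1/4$ and the monotonicity of $x(1-x)$ on $[0,1/2]$, this yields $\xi(1-\xi)\ge p(1-p)\ge 3p/4$ for any $k$ in a moderate window above $mp$, hence
\[
    m\,D(q\|p) \;\le\; \frac{(k-mp)^2}{2mp(1-p)} \;\le\; \frac{2(k-mp)^2}{3mp}.
\]

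With these ingredients in place, I would set $k^\star := \lceil mp+t\rceil$ and $W := \lceil \sqrt{mp(1-p)}\rceil$ and write
\[
    \Pr[S_m \geq mp+t] \;\geq\; \sum_{k=k^\star}^{k^\star+W-1}\Pr[S_m = k] \;\geq\; \frac{W\cdot C_1}{\sqrt{2mp(1-p)}}\exp\!\Bigl(-\tfrac{4t^2}{3mp} - \tfrac{4}{3}\Bigr),
\]
using the Stirling lower bound, the KL estimate, the crude expansion $(t+W)^2\le 2t^2+2W^2\le 2t^2+2mp(1-p)+O(1)$, and $q(1-q)\le 2p(1-p)$ throughout the window (justified when $W$ is much smaller than $m/2$). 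The $W$ factor cancels the $\sqrt{mp(1-p)}$ prefactor, and weakening $4/3 \le 2$ in the exponent delivers the claimed form $\tfrac{1}{4}\exp(-2t^2/mp)$ after calibrating $C_1$.

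The main obstacle will be chasing numerical constants all the way to the sharp prefactor $\tfrac{1}{4}$: naive constant tracking through Stirling, the KL Taylor bound, and the $(t+W)^2$ expansion yields a constant noticeably smaller than $\tfrac{1}{4}$, so further tightening is needed, for instance via Robbins' sharper $\sqrt{2\pi n}(n/e)^n e^{1/(12n+1)} \le n! \le \sqrt{2\pi n}(n/e)^n e^{1/(12n)}$, or by splitting on $t\le \sqrt{mp(1-p)}$ versus $t > \sqrt{mp(1-p)}$ and optimizing the window $W$ separately in each regime so that one does not have to pay both the expansion $(t+W)^2$ and the constant $4/3$ at once. A secondary boundary issue, namely $mp+t$ close to $m$, reduces by passing to $m-S_m\sim\mathrm{Bin}(m,1-p)$ to a downward deviation from the mean of a Bernoulli sum with success probability $\ge 3/4$, where the claimed lower bound becomes essentially trivial and can be verified directly.
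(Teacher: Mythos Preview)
The paper does not prove this lemma; it is stated with a citation to \citet{mousavi2010tight} and used as a black box in the proof of \autoref{lemma:geometric-anti-concentration}. There is no in-paper argument to compare against.

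On its own merits, your outline is the standard route to binomial lower-tail anti-concentration: a Stirling-based pointwise lower bound on the PMF, a quadratic upper bound on the KL divergence via Taylor expansion, and a sum over a window of width $\Theta(\sqrt{mp(1-p)})$ to cancel the $1/\sqrt{mp(1-p)}$ prefactor. The obstacle you flag---pushing the absolute constant all the way to $1/4$---is indeed the only real work, and your suggested refinements (Robbins' sharper Stirling constants, a case split on the size of $t$) are the right kind of tools. One technical point to tighten: the assertion $q(1-q)\le 2p(1-p)$ ``throughout the window'' is not implied by $W\ll m/2$; it requires $q=k/m$ to stay close to $p$, which breaks down once $t$ is of order $mp$ or larger (since then $q$ can be far from $p$ while $p(1-p)$ is small). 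In that large-$t$ regime you should instead exploit the considerable slack in the target exponent $2t^2/(mp)$ relative to your $4t^2/(3mp)$, so the case split you propose is better placed at $t\asymp mp$ than at $t\asymp\sqrt{mp(1-p)}$.
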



We use this result to give a bound on the lower tail of the geometric distribution:
\begin{lemma}
    \label{lemma:geometric-anti-concentration}
    Let $Y_1, Y_2, \dots, Y_k \sim \mathrm{Geom}(1 / \mu)$
    and $Y = \sum_{i=1}^k Y_i$. Let $\overline Y = \frac 1 k Y$.
    Then if $\varepsilon$ is a positive constant with $\varepsilon \leq \mu/2$, we have:
    \[
            \Pr\left[\overline Y \le \mu- \varepsilon\right] \geq\frac{1}{4}\exp\left(-\frac {4k \varepsilon^2}{\mu^2} \right)
    \]
\end{lemma}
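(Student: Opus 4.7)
The plan is to reduce the geometric lower-tail event to a binomial upper-tail event via the classical duality between sums of Geom$(p)$ and Bin$(n,p)$ variables, and then invoke \autoref{lemma:upper-tail-anti}. Writing $p = 1/\mu$, the sum $Y = Y_1 + \cdots + Y_k$ of $k$ independent Geom$(p)$ variables is exactly the number of Bernoulli$(p)$ trials needed to accumulate $k$ successes, so for every integer $n$,
\[
    \Pr[Y \leq n] \;=\; \Pr\bigl[\mathrm{Bin}(n, p) \geq k\bigr].
\]

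I would then take $n = \lfloor k(\mu - \varepsilon)\rfloor$, which (up to a harmless rounding absorbable into constants) identifies the event $\{\overline Y \leq \mu - \varepsilon\}$ with $\{\mathrm{Bin}(n,p) \geq k\}$. The binomial here has mean $np = k(1 - \varepsilon/\mu)$, so requiring at least $k$ successes amounts to exceeding the mean by exactly $t := k - np = k\varepsilon/\mu$. Assuming $p = 1/\mu \leq 1/4$, which is the hypothesis required by \autoref{lemma:upper-tail-anti}, that lemma yields
\[
    \Pr\bigl[\mathrm{Bin}(n,p) \geq np + t\bigr] \;\geq\; \tfrac{1}{4}\exp\!\left(-\tfrac{2t^2}{np}\right).
\]

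It remains to simplify the exponent. Substituting $t = k\varepsilon/\mu$ and $np = k(1-\varepsilon/\mu)$ gives
\[
    \frac{2t^2}{np} \;=\; \frac{2k\varepsilon^2/\mu^2}{1 - \varepsilon/\mu} \;\leq\; \frac{4k\varepsilon^2}{\mu^2},
\]
where the final inequality uses the hypothesis $\varepsilon \leq \mu/2$ to bound $1/(1-\varepsilon/\mu) \leq 2$. Chaining the two displays produces exactly the stated bound. The main obstacle I anticipate is not analytic but one of bookkeeping: ensuring that the hypothesis $p \leq 1/4$ of \autoref{lemma:upper-tail-anti} is genuinely available in the intended application (in the barbell construction $\mu$ plays the role of a large return time, so this is mild), and verifying that the integer floor in $n = \lfloor k(\mu-\varepsilon)\rfloor$ shifts the event by at most a single unit, which does not affect the form of the inequality. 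Both are routine once the geometric-to-binomial duality has been invoked.
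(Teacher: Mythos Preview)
Your proposal is correct and follows essentially the same route as the paper: reduce the lower-tail geometric event to an upper-tail binomial event via the standard duality, apply \autoref{lemma:upper-tail-anti} with $t = k\varepsilon/\mu$, and then use $\varepsilon \le \mu/2$ to bound $1/(1-\varepsilon/\mu) \le 2$ in the exponent. You are in fact slightly more careful than the paper, which silently ignores both the integer floor and the $p \le 1/4$ hypothesis of \autoref{lemma:upper-tail-anti}.
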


\begin{proof}
    Let us denote with $\mathrm{Bin}(r)$ the sum of
    $r$ Bernoulli random variables with probability $1 / \mu$. The event of having $Y \leq k(\mu - \varepsilon)$ corresponds to performing at most $k(\mu-\varepsilon)$ such Bernoulli trials before observing $k$ successes. Alternatively, the number of successes in the first $k(\mu-\varepsilon)$ Bernoulli trials is at least $k$. So:
    \begin{align*}
        \Pr\left[\overline Y \le \mu - \varepsilon\right] =
        \Pr\left[\mathrm{Bin}(k(\mu - \varepsilon)) \ge k \right]
    \end{align*}
    Let 
    $\nu_- = \E[\mathrm{Bin}(k(\mu - \varepsilon))] = k - k \frac \varepsilon \mu$. Since $\mu \geq 2\varepsilon$, we have that $\mu^2 - \varepsilon\mu \geq \frac{\mu^2}{2}$, and so by our upper tail anti-concentration bound in \autoref{lemma:upper-tail-anti} we have: 
    \begin{align*}
        \Pr\left[\mathrm{Bin}(k(\mu - \varepsilon)) \ge k \right]
        &=\Pr\left[\mathrm{Bin}(k(\mu - \varepsilon)) \ge \nu_- + k \frac \varepsilon \mu \right] \\
        &\ge\frac{1}{4}\exp\left( - \frac{2k^2\varepsilon^2}{\mu^2}\cdot\frac{1}{k-k\varepsilon/\mu} \right) \\ 
        &=\frac{1}{4}\exp\left( -\frac{2k\varepsilon^2}{\mu^2 - \varepsilon\mu} \right) \\ 
        &\geq
        \frac{1}{4}\exp\left(-\frac {4k \varepsilon^2}{\mu^2} \right)
    \end{align*}
    Therefore, we get that:
    \begin{align*}
        \Pr\left[|\overline Y - \mu| \ge \varepsilon\right] \geq&\frac{1}{4}\exp\left(-\frac {4k \varepsilon^2}{\mu^2} \right)
    \end{align*}
    as desired. 
\end{proof}

We will also use the following ``relaxation lemma'' to decouple a random process into two easier-to-analyze processes and lower bound each anti-concentration separately. 

\begin{lemma}[Anti-Concentration Relaxation]
\label{lemma:anti-relaxation}
    Let $\cal T$ be a distribution over $\mathbb N_{>0}$
    with mean $\mu_{\cal T} \ge 1$
    and $\cal A$ be a right skewed
    distribution over $\mathbb R_{\ge 0}$
    with mean $\mu_{\cal A} > 0$.
    Further, let $T \sim \cal T$ and $A_1, A_2, \dots \sim \cal A$
    be independent samples.
    Define the random variables $X = \sum_{t=1}^T A_t$ and
    $Y = \mu_{\cal A} T$ with means
    $\mu = \E[X] = \E[Y] = \mu_{\cal T} \mu_{\cal A}$. Then we have:
    $$
    \Pr\left[ X \le \mu - \varepsilon\right]
        \ge \frac{1}{2}\Pr\left[Y \le \mu-\varepsilon\right]
    $$
\end{lemma}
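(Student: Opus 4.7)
The plan is to condition on the value of $T$ and couple $X$ to a sum of fixed length via the non-negativity of the $A_t$'s. I would introduce the threshold $k^* = \lfloor (\mu - \varepsilon)/\mu_{\cal A} \rfloor$ so that $\Pr[Y \le \mu - \varepsilon] = \Pr[\mu_{\cal A} T \le \mu - \varepsilon] = \Pr[T \le k^*]$, reducing the problem to a comparison between $X$ and a single fixed-length sum.

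First, I would bound
\[
\Pr[X \le \mu - \varepsilon] \ge \Pr[X \le \mu - \varepsilon,\ T \le k^*] = \Pr[T \le k^*] \cdot \Pr[X \le \mu - \varepsilon \mid T \le k^*].
\]
On the event $T \le k^*$, the non-negativity of the $A_t$'s gives the monotone coupling $X = \sum_{t=1}^T A_t \le \sum_{t=1}^{k^*} A_t =: S_{k^*}$, and since $T$ is independent of the $A_t$'s this yields
\[
\Pr[X \le \mu - \varepsilon \mid T \le k^*] \ge \Pr[S_{k^*} \le \mu - \varepsilon].
\]

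The main obstacle -- and the only nontrivial step -- is to show $\Pr[S_{k^*} \le \mu - \varepsilon] \ge 1/2$. By definition of $k^*$ we already have $\E[S_{k^*}] = k^*\mu_{\cal A} \le \mu - \varepsilon$, so it suffices to know that the median of $S_{k^*}$ is at most its mean. I would interpret the right-skewness hypothesis on $\cal A$ precisely as the statement that every convolution of $\cal A$ with itself has this median-at-most-mean property, which is the natural extension preserved under summing non-negative right-skewed random variables and is routinely checkable for the concrete geometric-type distributions that arise in the barbell lower-bound construction, e.g.\ via the binomial anti-concentration estimate in \autoref{lemma:upper-tail-anti}. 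Granting this, $\Pr[S_{k^*} \le \E[S_{k^*}]] \ge 1/2$ and hence $\Pr[S_{k^*} \le \mu - \varepsilon] \ge 1/2$ since $\mu - \varepsilon \ge \E[S_{k^*}]$.

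Chaining these inequalities produces $\Pr[X \le \mu - \varepsilon] \ge \tfrac{1}{2}\Pr[T \le k^*] = \tfrac{1}{2}\Pr[Y \le \mu - \varepsilon]$, as required. All other steps in the argument are straightforward consequences of non-negativity, independence of $T$ from the $A_t$'s, and the definition of $k^*$; the content of the lemma is really packed into the formal meaning of ``right skewed'' and how it transfers from $\cal A$ to $S_{k^*}$.
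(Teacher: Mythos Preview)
Your proposal is correct and follows essentially the same route as the paper: condition on $T$ being below the threshold $(\mu-\varepsilon)/\mu_{\cal A}$, use non-negativity of the $A_t$'s to dominate $X$ by a fixed-length sum, factor via independence, and invoke the median-below-mean interpretation of right-skewness to extract the $1/2$. Your version is in fact slightly cleaner than the paper's, since you take the integer floor $k^*$ rather than writing $\sum_{t=1}^{\mu_{\cal T}-\varepsilon/\mu_{\cal A}} A_t$ with a possibly non-integer upper limit, and you are explicit that the substantive content lies in how ``right skewed'' transfers from $\cal A$ to $S_{k^*}$---a point the paper leaves implicit.
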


\begin{proof}
    Let $X := \sum_{t=1}^{T} A_t$, where $T \sim {\cal T}$ and $A_t \sim {\cal A}$ are all mutually independent random variables. We have the following sequence of bounds:
    \begin{align*}
       \Pr\left[X \le \mu - \varepsilon\right]
        &\ge \Pr\left[T \le \mu_{\cal T}-\frac{\varepsilon}{\mu_{\cal A}} \land X \le \mu - \varepsilon\right] \\
        &= \Pr\left[T \le \mu_{\cal T} -\frac{\varepsilon}{\mu_{\cal A}}\land
            \sum_{t=1}^{T} A_t\le \mu - \varepsilon\right]\\
        &\ge \Pr\left[T \le \mu_{\cal T}-\frac{\varepsilon}{\mu_{\cal A}} \land
            \sum_{t=1}^{\mu_{\cal T}-\frac{\varepsilon}{\mu_{\cal A}}} A_t\le \mu - \varepsilon\right]
        & \textrm{(because $T \le \mu_{\cal T}$)} \\
        &= \Pr\left[T \le \mu_{\cal T}-\frac{\varepsilon}{\mu_{\cal A}}\right] \cdot \Pr \left[\sum_{t=1}^{\mu_{\cal T}-\frac{\varepsilon}{\mu_{\cal A}}} A_t\le \mu - \varepsilon\right]
        & \textrm{(independence)}
    \end{align*}
    Finally, because $\mathcal{A}$ is right-skewed, we can write:
    \begin{align*}
    \Pr\left[X \le \mu - \varepsilon\right] &\geq \Pr[Y\leq \mu-\varepsilon]\cdot \Pr \left[\sum_{t=1}^{\mu_{\cal T}-\frac{\varepsilon}{\mu_{\cal A}}} A_t\le \mu_{\cal A}\left(\mu_{\cal T} - \frac{\varepsilon}{\mu_{\cal A}}\right)\right]\\
    &=\Pr[Y\leq \mu-\varepsilon]\cdot \Pr \left[\sum_{t=1}^{\mu_{\cal T}-\frac{\varepsilon}{\mu_{\cal A}}} A_t\le \mathbb{E}\left[\sum_{t=1}^{\mu_{\cal T}-\frac{\varepsilon}{\mu_{\cal A}}} A_t\right]\right]\\
    &\geq \frac{1}{2}\Pr[Y\leq \mu-\varepsilon]
    \end{align*}
    as desired. 
\end{proof}


As a corollary, consider applying the relaxation \autoref{lemma:anti-relaxation} to the setting of an average of $r$ random variables $X_i$. We can change the definition of $\cal T$ to relax the anti-concentration in that setting as well:
\begin{corollary}
\label{cor:relax-anti-concentration-average}
Let $\cal T$ be a distribution over $\mathbb N_{>0}$
    with mean $\mu_{\cal T} \ge 1$
    and $\cal A$ be a real, right-skewed 
    distribution
    with mean $\mu_{\cal A} > 0$.
    Further, let $T \sim \cal T$ and $A_1, A_2, \dots \sim \cal A$
    be independent samples.
    Define the random variables $X = \sum_{t=1}^T A_t$ and
    $Y = \mu_{\cal A} T$ with means
    $\mu = \E[X] = \E[Y] = \mu_{\cal T} \mu_{\cal A}$. Then, suppose we have $r$ random variables $X_1,...,X_r$, all independently and identically distributed. We have:
    $$
    \Pr\left[\frac{1}{r}\sum\limits_{i=1}^r X_i \le \mu - \varepsilon\right]
        \ge \frac{1}{2}\Pr\left[\frac{1}{r}\sum\limits_{i=1}^r Y_i \le \mu-\varepsilon\right]
    $$
    where $Y_1,...,Y_r$ are independently defined as $Y_i \sim \mu_{\cal A}\cdot {\cal T}$.
\end{corollary}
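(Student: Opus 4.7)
The plan is to reduce this corollary to the Anti-Concentration Relaxation Lemma itself, by viewing the $r$ independent trials as a single amalgamated process. Concretely, I would introduce $\tilde T := \sum_{i=1}^{r} T_i$, which takes values in $\mathbb{N}_{>0}$ (since each $T_i \ge 1$) and whose mean is $r\mu_{\cal T} \ge 1$; call its distribution $\tilde{\cal T}$. Together, $\tilde{\cal T}$ and $\cal A$ satisfy the hypotheses of the Anti-Concentration Relaxation Lemma, with effective mean $\tilde\mu := r\mu_{\cal T}\mu_{\cal A} = r\mu$.

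The second step is to exhibit $\tilde X := \sum_{i=1}^{r} X_i$ in the form required by the lemma. I would concatenate the $r$ independent $\cal A$-sequences used to build the individual $X_i$'s into a single stream $\tilde A_1, \tilde A_2, \ldots$ (the first $T_1$ samples feeding $X_1$, the next $T_2$ samples feeding $X_2$, and so on). By construction, $\tilde X = \sum_{k=1}^{\tilde T} \tilde A_k$, and because the original $A_t^{(i)}$ are mutually independent and independent of the $T_i$, the reordered stream $(\tilde A_k)_k$ is still iid from $\cal A$ and independent of $\tilde T$. Thus $(\tilde T, (\tilde A_k))$ is a legitimate instance of the lemma's setup.

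Finally, applying the lemma to the pair $(\tilde{\cal T}, \cal A)$ with tolerance $r\varepsilon$ yields
\[
\Pr\bigl[\tilde X \le \tilde\mu - r\varepsilon\bigr]
\ \ge\
\tfrac{1}{2}\,\Pr\bigl[\mu_{\cal A}\tilde T \le \tilde\mu - r\varepsilon\bigr].
\]
Dividing by $r$ inside each probability, the left side becomes $\Pr\bigl[\tfrac 1 r \sum_i X_i \le \mu - \varepsilon\bigr]$, and $\mu_{\cal A}\tilde T / r = \tfrac 1 r \sum_i \mu_{\cal A} T_i$ is equal in distribution to $\tfrac 1 r \sum_i Y_i$, so the right side matches the claim. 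The only delicate point, and the one I expect to be the main (though essentially cosmetic) obstacle, is justifying rigorously that the concatenated stream $(\tilde A_k)$ is genuinely iid from $\cal A$ and independent of $\tilde T$; this is a standard measure-theoretic consequence of the joint independence of the underlying variables and of the fact that the reindexing depends only on $(T_i)$, so no genuine new probabilistic argument is needed beyond the lemma itself.
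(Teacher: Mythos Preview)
Your proposal is correct and follows essentially the same approach as the paper: aggregate the $r$ independent trials into a single instance by summing the $T_i$'s, then invoke the Anti-Concentration Relaxation Lemma once on the aggregated process. The only cosmetic difference is that the paper absorbs the factor $1/r$ into a rescaled distribution $\mathcal{A}' = \tfrac{1}{r}\mathcal{A}$ and applies the lemma with tolerance $\varepsilon$, whereas you keep $\mathcal{A}$ unscaled and apply the lemma with tolerance $r\varepsilon$; these are equivalent.
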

\begin{proof}
    Consider a random variable $T' = \sum\limits_{i=1}^r T_i$ where $T_i \sim \cal T$. Suppose $T' \sim \cal T'$. Then:
    $$
    \frac{1}{r}\sum\limits_{i=1}^r X_i \equiv \sum\limits_{i=1}^{T'}A_i'
    $$
    where $A_i' \sim \frac{1}{r}\cal A$ is right skewed as scaling by a positive value does not affect skewness. Since $\mu_{\cal T'} = r\mu_{\cal T}$ and $\mu_{\cal A'} = \frac{1}{r}\mu_{\cal A}$, we can apply \autoref{lemma:anti-relaxation} to get:
    \begin{align*}
    \Pr\left[\frac{1}{r}\sum\limits_{i=1}^r X_i \le \mu - \varepsilon\right] &\geq \frac{1}{2}\Pr\left[\frac{1}{r}\sum\limits_{i=1}^r Y_i \leq \mu-\varepsilon\right]
    \end{align*}
\end{proof}

\subsection{The Lower Bound Proof}

We now proceed to proving our lower bound.
\begin{theorem}
\label{thm:main-lower-bound}
Let $G$ be a graph and $u,v$ be two different vertices of it. Suppose an algorithm is able to estimate $H_G(u,v)$ within additive error $\varepsilon$ with constant probability of success by taking the average over $r$ random walk samples of unbounded length. Then we must have that: 
$$
r = \Omega\left(\frac{\tmix^2}{\pi(v)^2\varepsilon^2}\right)
$$
in the worst case, where $\tmix$ is the mixing time of the graph.
\end{theorem}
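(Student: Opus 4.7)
The plan is to instantiate the ``barbell''-like graph of Figure~\ref{fig:barbell} and show that on this graph, any estimator that averages $r$ hitting-time samples fails with constant probability unless $r$ is on the order of $\tmix^2/(\pi(v)^2\varepsilon^2)$. Fix $u$ to be a vertex in the clique $K_n$ and let $v$ be any one of the $n^2$ star leaves attached to $u_n$. This choice is dictated by three desiderata: $\pi(v)$ must be small (achieved because $v$ has degree one and $m = \Theta(n^2)$, giving $\pi(v) = \Theta(1/n^2)$); $\tmix$ must be large (achieved by the $\Theta(1/n^2)$-conductance bottleneck across the $u_1 u_2$ edge, which gives $\tmix = \Omega(n^2)$ via Cheeger); and $H_G(u,v)$ must factor into a geometric-like product so that the anti-concentration lemmata from \autoref{appx:lower-bound} apply.

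The crucial structural step is to decompose each sample $X_i$ via the strong Markov property at the star center $u_n$. After the first hit of $u_n$, every subsequent visit to $u_n$ is an independent ``opportunity'' to walk straight to $v$ with probability $1/(n^2+1)$. Thus the number of visits to $u_n$ before success is $T_i \sim \mathrm{Geom}(\Theta(1/n^2))$, and the lengths of the intermediate excursions from $u_n$ back to $u_n$ are i.i.d.~draws from a distribution $\cal A$. Writing $X_i \approx \sum_{t=1}^{T_i} A_{i,t}$ with $A_{i,t} \sim \cal A$ puts us directly in the setting of \autoref{cor:relax-anti-concentration-average}, with $\mu_{\cal T} = \Theta(n^2)$, $\mu_{\cal A} = \Theta(H_G(u,v)/\mu_{\cal T}) = \Theta(n^2)$, and $\mu = \mu_{\cal T}\mu_{\cal A} = H_G(u,v)$.

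Applying that corollary bounds the probability that $\bar{X} = \tfrac{1}{r}\sum_i X_i$ falls below its mean by $\varepsilon$ by at least half of the analogous event for $\tfrac{\mu_{\cal A}}{r}\sum_i T_i$. Choosing deviation $\delta = \varepsilon/\mu_{\cal A}$ in \autoref{lemma:geometric-anti-concentration} (valid provided $\varepsilon \le \mu_{\cal A}\mu_{\cal T}/2$) yields $\Pr[\bar{X} \le H_G(u,v) - \varepsilon] \ge \tfrac{1}{8}\exp(-4r\varepsilon^2/(\mu_{\cal A}^2\mu_{\cal T}^2))$. For this probability to fall below any fixed constant (so that the averaging estimator can succeed with constant probability) we must have $r = \Omega(\mu_{\cal A}^2\mu_{\cal T}^2/\varepsilon^2) = \Omega(H_G(u,v)^2/\varepsilon^2)$. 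The lower bound follows from observing that for this graph $H_G(u,v) = \Theta(n^4) = \Theta(\tmix/\pi(v))$, which one verifies by decomposing the hitting time into a clique-exit phase, a biased-path-traversal phase, and the star-excursion phase analyzed above.

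The main obstacles I expect are twofold. First, the relaxation lemma requires $\cal A$ to be \emph{right-skewed}; this should hold because the overwhelming majority of excursions from $u_n$ are length-$2$ trips to a leaf (mean drawn below a heavy upper tail generated by the rare $\approx 1/n^2$ probability of descending into the path and wandering into $K_n$), but verifying skewness formally needs a direct calculation of the median versus the mean of the excursion time. Second, the decomposition $X_i = \sum_{t=1}^{T_i} A_{i,t}$ ignores the initial transient from $u$ to the first visit of $u_n$; this must be handled either by absorbing it into a single additional additive term whose variance is lower order, or by restarting the analysis at $u_n$ and incurring only a negligible shift in the mean. Neither of these is conceptually difficult, but both require care to preserve the exact constants demanded by Lemmata~\ref{lemma:geometric-anti-concentration} and \ref{lemma:anti-relaxation}.
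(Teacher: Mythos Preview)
Your choice of $u$ (a clique vertex) and $v$ (a star leaf) does not yield the claimed bound; the two ``obstacles'' you flag at the end are in fact fatal, not cosmetic.

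First, the excursion mean is wrong by a polynomial factor. The expected return time to $u_n$ is $1/\pi(u_n)=2m/\deg(u_n)=\Theta(1)$, since the star center has degree $n^2+1$ in a graph with $m=\Theta(n^2)$ edges. Hence $\mu_{\cal A}=\Theta(1)$, not $\Theta(n^2)$, and $H_G(u_n,v)=\mu_{\cal T}\mu_{\cal A}=\Theta(n^2)$. Meanwhile the ``initial transient'' $H_G(u,u_n)$ is the classical lollipop hitting time $\Theta(n^3)$, which \emph{dominates} rather than being lower order; thus $H_G(u,v)=\Theta(n^3)$, not $\Theta(n^4)$, and your identity $\mu_{\cal A}=H_G(u,v)/\mu_{\cal T}$ fails precisely because the transient carries the bulk of the expectation. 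Second, Cheeger gives only $\tmix=\Omega(n^2)$; the actual value is $\tmix=\Theta(n^3)$ (the maximum hitting time on this graph is $\Theta(n^3)$, and Aldous's characterization pins the mixing time there). With $\pi(v)=\Theta(1/n^2)$ and $\tmix=\Theta(n^3)$ the target you must establish is $\tmix^2/(\pi(v)^2\varepsilon^2)=\Theta(n^{10}/\varepsilon^2)$, whereas the anti-concentration route can deliver at most $\Omega\bigl(H_G(u,v)^2/\varepsilon^2\bigr)=\Omega(n^{6}/\varepsilon^2)$---four powers of $n$ short.

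The paper instead takes $u=u_1$ and $v=u_n$, the star \emph{center}, so that $\pi(v)=\Theta(1)$ and the target drops to $\Theta(n^6/\varepsilon^2)$. The decomposition is placed at $u_1$: from $u_1$ the walk reaches $u_n$ before entering the clique with probability $1/n$, giving $t_Y\sim\mathrm{Geom}(1/n)$, and each failed attempt is an excursion from $u_1$ into either the path or the clique with mean length $\Theta(n^2)$. Now $H_G(u_1,u_n)=\Theta(n^3)=\Theta(\tmix/\pi(v))$ on the nose, there is no transient to absorb, and the $\Omega(n^6/\varepsilon^2)$ anti-concentration bound matches the target.
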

\begin{proof}
Let $G$ be a ``barbell''-like graph, as in \autoref{fig:barbell}, consisting of a cliques $Y=\{y_1,...,y_n\}$ and a star $S = \{v_1,...,v_{n^2},x\}$ connected via a path $P:=(x=u_n\leftrightarrow u_{n-1}\leftrightarrow\cdots\leftrightarrow u_1\leftrightarrow y_1)$

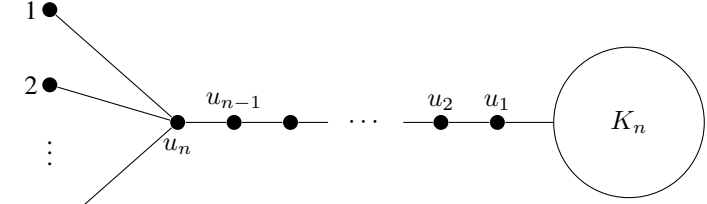
\begin{figure}[ht]
\centering
\begin{tikzpicture}
    \node[circle, fill=black, inner sep=2pt, label={[label distance=-0.5pt]270:$u_n$}] (u1) at (2, 0) {};

    \node[circle, fill=black, inner sep=2pt, label={[label distance=-2pt]180:1}] (v1) at (0.3, 1.5) {};
    \node[circle, fill=black, inner sep=2pt, label={[label distance=-2pt]180:2}] (v2) at (0.3, 0.5) {};
    \node[circle, fill=black, inner sep=2pt, label={[label distance=-2pt]180:$n^2$}] (vn) at (0.3, -1.5) {};
    \node[] at (0.3, -0.3) {\vdots};

    \draw (v1) -- (u1);
    \draw (v2) -- (u1);
    \draw (vn) -- (u1);

    \node[circle, fill=black, inner sep=2pt, label={[label distance=-2pt]90:$u_{n-1}$}] (u2) at (2.75, 0) {};
    \node[circle, fill=black, inner sep=2pt] (u3) at (3.5, 0) {};
    \node[circle, fill=black, inner sep=2pt, label={[label distance=-2pt]90:$u_{2}$}] (un1) at (5.5, 0) {};
    \node[circle, fill=black, inner sep=2pt, label={[label distance=-2pt]90:$u_{1}$}] (un) at (6.25, 0) {};

    \node[inner sep=14.5pt, draw=black, circle] (knr) at (8, 0) {$K_n$};

    \draw (u1) -- (u2) -- (u3) -- (4, 0);
    \draw (5, 0) -- (un1) -- (un) -- (knr);
    \node[] at (4.5, 0) {$\cdots$};
\end{tikzpicture}
\caption{A `barbell''-like graph: a star, a path and a clique
}
\label{fig:barbell}
\end{figure}

Suppose we want to estimate $H_G(u_1,u_n)$. Our estimation procedure samples $r$ random walks from $u_1$ and averages the time they take to reach $u_n$. Let $t_i$ be the time it takes for the $i$-th random walk to reach $u_n$. Then our estimator:
$$
\text{est} := \frac{1}{r}\sum\limits_{k=1}^r t_k
$$
is an unbiased estimator of the hitting time. 

We claim that $r = \Omega(n^6/\varepsilon^2)$ samples are required to guarantee that:
$$
|\text{est}-H_G(u_1,u_n)| \leq \varepsilon
$$
By a Theorem of Feige \citep{feige1995tight} for the hitting time in lollipop graphs, we know that the maximum hitting time from any vertex in $P \cup Y$ to any other vertex is $\Theta(n^3)$. If the random walk starts from some vertex in $S$, then it takes it an expected $\Theta(n^2)$ steps until it reaches $u_{n-1}$. By a standard ``drunkard's walk'' argument we conclude that the expected hitting time to any vertex in $Y$ is $\Theta(n^3)$. Therefore, by a Theorem of Aldous \citep{aldous-fill-2014, aldous1982some} stating that the mixing time is characterized by the maximum hitting time, we get that $\tmix = \Theta(n^3)$. We also know that $\pi(u_n) = \Theta(1)$, and therefore the $\Omega(n^6/\varepsilon^2)$ lower bound gives us the desired $\Omega(\tmix^2/(\pi(v)^2\varepsilon^2))$ lower bound.

For the remainder of the proof we focus on proving that $r \geq \Omega(n^6/\varepsilon^2)$ is required. Our proof relies on modeling the random walk from $u_1$ to $u_n$ through a chain of independent samples from appropriate distributions. Consider a random walk starting at $u_1$ and let $p$ be the probability that it reaches $u_n$ \textit{before entering the clique $Y$}. It is a standard calculation that $p = 1/n$: Let $p_i$ be the probability a random walk reaches $u_1$ before entering $Y$ if it starts from $u_{i}$. By conditioning on the first step of the random walk, we get the recurrence:
\begin{align*}
    p_i = \frac{1}{2}(p_{i-1}+p_{i+1})
\end{align*}
with boundary conditions $p_1 = 1$ and $p_{n+1}= 0$. We solve this by $p_i = \frac{n-i+1}{n}$, which gives the expression $p = p_n = \frac{1}{n}$. Starting from $u_1$ and conditioned on the event that it does not reach $u_n$, the random walk can either follow the path, or become trapped in the clique. As a result, we can model the time $t$ it takes for a random walk starting at $u_1$ to reach $u_n$ as:
$$
t \sim \sum\limits_{i=1}^{t_Y}A_i
$$
where $t_Y \sim \text{Geom}(1/n)$ and:
\begin{align*}
A_i &\sim \begin{cases}
    \text{return time to $u_1$ in path $P$},&\text{with probability $\frac{1}{2}$}\\
    \text{return time to $u_1$ from clique $Y$},&\text{with probability $\frac{1}{2}$}
\end{cases}
\end{align*}
where the $A_i$ random variables are all mutually independent. Both return times in the distribution of $A_i$ can be modeled as geometric random variables. Thus, their mixture is right skewed, so we can invoke  \autoref{cor:relax-anti-concentration-average} to get that:
\begin{align*}
\Pr\left[\frac{1}{r}\sum\limits_{i=1}^r t_i \le \mu - \varepsilon\right] \geq \frac{1}{2}\Pr\left[\frac{1}{r}\sum\limits_{i=1}^r G_i \leq n-\frac{\varepsilon}{\Theta(n^2)}\right]
\end{align*}
with $G_i \sim \text{Geom}(1/n)$. By \autoref{lemma:geometric-anti-concentration} we have that
\begin{align*}
\Pr\left[\frac{1}{r}\sum\limits_{i=1}^r G_i \leq n-\frac{\varepsilon}{\Theta(n^2)}\right] &\geq \frac{1}{4}\exp\left(-\frac {4r \varepsilon^2}{n^6}\right)
\end{align*}
Since we seek an algorithm that makes the error at most $\varepsilon$ with at least a constant probability, we must have that $r = \Omega(n^6/\varepsilon^2)$, as initially claimed. 
\end{proof}

\section{Local Algorithms via Mixing Time Testing}
\label{appx:local-algs-via-testing}
In this section we expand upon our previous discussion on using property testing for hitting time estimation.

\subsection{Choosing a cutoff using local mixing times}
For simplicity, we focus on the calculation of effective resistances. Recall the following definition:
\begin{definition}[Effective Resistance]
The effective resistance between vertices $s,t \in V$ is:
\begin{align}
    R_G(s,t) = \chi_{s,t}^T L^+ \chi_{s,t}
\end{align}
where the Laplacian is defined as $L := D-A$, with $D$ being the degree diagonal matrix and $A$ being the adjacency matrix.
\end{definition}
Yoshida and Peng \citep{peng2021local} show that the effective resistance can be written as a series with a light tail.

\begin{lemma}[\citep{peng2021local}]
It holds that:
\begin{align}
    \label{eq:eff-res-series}
    R_G(s,t) = \chi_{st}^\top\sum\limits_{i=0}^{\infty}P^i D^{-1} \chi_{st}
\end{align}
\end{lemma}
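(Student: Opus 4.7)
The plan is to leverage the symmetric normalization of the Laplacian in order to connect $L^+$ with a Neumann-type series in $P$. First, I would write $L = D - A$ in the symmetrically normalized form $L = D^{1/2}(I - N) D^{1/2}$ where $N := D^{-1/2} A D^{-1/2}$ is the normalized adjacency matrix. Since $L$ is symmetric and $D^{1/2}$ is invertible, taking pseudo-inverses gives $L^+ = D^{-1/2} (I - N)^+ D^{-1/2}$.

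Next, I would spectrally decompose $N = \sum_{j=1}^n \mu_j w_j w_j^\top$ using an orthonormal eigenbasis, where $\mu_1 = 1$ and $w_1 = D^{1/2}\mathbf{1}/\sqrt{2m}$. Under the standing assumption that $G$ is connected and the random walk is aperiodic, the remaining eigenvalues satisfy $|\mu_j| < 1$ for $j \ge 2$, so we may expand
\begin{align*}
(I - N)^+ = \sum_{j=2}^n \frac{1}{1 - \mu_j} w_j w_j^\top = \sum_{i=0}^\infty \sum_{j=2}^n \mu_j^i w_j w_j^\top = \sum_{i=0}^\infty \bigl( N^i - w_1 w_1^\top \bigr)
\end{align*}
exactly as in \autoref{lemma:hitting-time-decomposition}. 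Substituting into the expression for $L^+$ and pre/post-multiplying by $\chi_{st}$ gives
\begin{align*}
R_G(s,t) = \chi_{st}^\top \sum_{i=0}^\infty D^{-1/2} \bigl( N^i - w_1 w_1^\top \bigr) D^{-1/2} \chi_{st}.
\end{align*}

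The rank-one correction term vanishes: $w_1^\top D^{-1/2} \chi_{st} = \frac{1}{\sqrt{2m}} \mathbf{1}^\top \chi_{st} = 0$ since $\chi_{st} = \mathbf{e}_s - \mathbf{e}_t$. Finally, using the conjugation identity $N^i = D^{1/2} P^i D^{-1/2}$ (which follows from $N = D^{1/2} P D^{-1/2}$ because $P = D^{-1}A$), we simplify $D^{-1/2} N^i D^{-1/2} = P^i D^{-1}$. Combining these observations yields $R_G(s,t) = \chi_{st}^\top \sum_{i=0}^\infty P^i D^{-1} \chi_{st}$, as claimed.

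The main subtlety I anticipate is justifying the convergence of the Neumann series for $(I-N)^+$: one must verify that the contribution from the trivial eigenvalue $\mu_1 = 1$ is killed by the $\chi_{st}$ factor (handled by the orthogonality computation above) and that aperiodicity rules out the pathological eigenvalue $-1$ on bipartite graphs. Everything else is essentially bookkeeping with the symmetric normalization, so the argument is short once the spectral decomposition is set up.
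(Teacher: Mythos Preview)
The paper does not prove this lemma itself; it is stated with a citation to \citep{peng2021local}. Your argument is essentially the same spectral-decomposition-plus-Neumann-series approach the paper uses for the analogous hitting-time identity in \autoref{lemma:hitting-time-decomposition}, and it is correct in outline. Working with the symmetric matrix $N$ is in fact cleaner than what the paper does there, since it decomposes $W=AD^{-1}$ directly as if it were symmetric.

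There is one technical slip worth flagging: the operator identity $L^+ = D^{-1/2}(I-N)^+D^{-1/2}$ is not literally true. The range of $D^{-1/2}(I-N)^+D^{-1/2}$ is $\{y:\sum_v d_v y_v=0\}$, whereas the range of $L^+$ is $\mathbf 1^\perp$; these coincide only for regular graphs. What \emph{is} true is that for $x\perp\mathbf 1$ one has $L^+x = D^{-1/2}(I-N)^+D^{-1/2}x + c\,\mathbf 1$ for some scalar $c$ (solve $D^{1/2}(I-N)D^{1/2}z=x$ and note the solution is determined only up to $\ker L=\mathrm{span}(\mathbf 1)$). Since $\chi_{st}^\top\mathbf 1=0$, this correction vanishes in the quadratic form $\chi_{st}^\top L^+\chi_{st}$, so your conclusion survives. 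You just need to phrase the pseudoinverse step at the level of the quadratic form rather than as an operator identity; everything else goes through as written.
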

At this point, we depart from the exposition of \citep{peng2021local} and bound the tail of this series by using a different quantity: the $\varepsilon$-$(s,t)$-mixing time, which we define as follows:
\begin{definition}
Let $s,t \in V$ and $\varepsilon > 0$. Then the $\varepsilon$-$(s,t)$-mixing time is defined as the minimum number of steps required so that random walks starting from $s$ and $t$ have roughly the same distribution over the states:
\begin{align}
    t_{\min}^{(\varepsilon)}(s,t):=\min\{i\in\mathbb{N}\mid ||\mathbf{1}_sP^i-\mathbf{1}_t P^i||_1 \leq \varepsilon\}
\end{align}
\end{definition}
We prove that the $\varepsilon$-$(s,t)$-mixing time can be used as the cutoff point instead of $\lambda$ for irreducible and aperiodic Markov Chains.
\begin{lemma}
\label{lem:local-mix-time-cutoff}
Assume that our Markov chain is ergodic with convergence parameters $(\alpha,C)$. Let 
$$
\ell := t^{(\varepsilon/4)}_{\min} + \frac{ \log \left( \frac{C}{\varepsilon' (1 - \alpha)} \right) }{ -\log \alpha }
$$
Then it holds that:
\begin{align}
\left|R_G(s,t)-\sum\limits_{i=0}^\ell \chi_{st}^\top P^i D^{-1}\chi_{st}\right| \leq \frac{\varepsilon}{2}
\end{align}
\end{lemma}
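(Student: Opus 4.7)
The plan is to bound the truncation error $T := \left|\sum_{i > \ell} \chi_{st}^\top P^i D^{-1}\chi_{st}\right|$ and show that $T \leq \varepsilon/2$. First I would pass from the scalar to a vector-norm bound: since every vertex has degree at least one, $\|D^{-1}\chi_{st}\|_\infty \leq 1$, so termwise
\[
    |\chi_{st}^\top P^i D^{-1}\chi_{st}| \leq \|\chi_{st}^\top P^i\|_1 \cdot \|D^{-1}\chi_{st}\|_\infty \leq \|\chi_{st}^\top P^i\|_1,
\]
reducing the problem to bounding $\sum_{i > \ell} \|\chi_{st}^\top P^i\|_1$. Indices $i \leq t^* := t_{\min}^{(\varepsilon/4)}(s,t)$ lie inside the partial sum and contribute nothing to the tail.

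The core of the argument uses the local mixing time to seed a small-norm signed vector and then exploits the convergence bound of \autoref{thm:convergence_rate} for geometric decay. Set $w := \chi_{st}^\top P^{t^*}$. By the definition of $t^*$ we have $\|w\|_1 \leq \varepsilon/4$, and since $\chi_{st}^\top \mathbf{1} = 0$ and $P$ is row-stochastic, $w \cdot \mathbf{1} = 0$. Decompose $w = w^+ - w^-$ into positive and negative parts; the zero-sum property forces $\|w^+\|_1 = \|w^-\|_1 = \|w\|_1/2$. The normalized components $\hat{w}^\pm := w^\pm/\|w^\pm\|_1$ are probability distributions, so \autoref{thm:convergence_rate} yields $\|\hat{w}^\pm P^j - \pi\|_{TV} \leq C\alpha^j$, equivalently $\|w^\pm P^j - \|w^\pm\|_1 \pi\|_1 \leq 2C\alpha^j\|w^\pm\|_1$. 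The two contributions to $\pi$ have equal weight and therefore cancel when we subtract, giving
\[
    \|\chi_{st}^\top P^{t^*+j}\|_1 = \|wP^j\|_1 \leq 2C\alpha^j\|w\|_1 \leq \tfrac{C\varepsilon}{2}\alpha^j.
\]

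Summing the geometric series over $j \geq \ell - t^* + 1$ gives
\[
    T \leq \sum_{j=\ell - t^* + 1}^\infty \tfrac{C\varepsilon}{2}\alpha^j = \frac{C\varepsilon\,\alpha^{\ell - t^* + 1}}{2(1-\alpha)}.
\]
The choice of $\ell$ in the lemma is calibrated precisely so that the residual factor $C\alpha^{\ell - t^*}/(1-\alpha)$ is at most a constant (the constant being governed by $\varepsilon'$), bringing the tail below $\varepsilon/2$. Solving $\frac{C\alpha^{\ell-t^*+1}}{1-\alpha} \leq 1$ gives exactly the exponent $(\ell - t^*) \gtrsim \log(C/(1-\alpha))/(-\log\alpha)$ that matches the stated formula.

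The main obstacle is the middle step: \autoref{thm:convergence_rate} is stated for probability distributions converging to $\pi$, while we need to control the decay of the signed vector $\chi_{st}^\top P^i$. The zero-sum decomposition $w = w^+ - w^-$ is essential here, because it is what allows Theorem~\ref{thm:convergence_rate} to be invoked on each part individually and the two $\pi$ contributions to cancel without loss. Without this cancellation one would only recover a non-decaying bound of order $\varepsilon/4$ for every $i \geq t^*$, which is not summable and would fail to yield a finite tail estimate.
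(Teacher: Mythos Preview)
Your proof is correct, and it takes a genuinely different route from the paper. The paper works at the scalar level: it proves a separate Claim that $f(i):=|\chi_{st}^\top P^{i}D^{-1}\chi_{st}|=O(\alpha^{i})$ by expanding the bilinear form into four entries of $P^{i}$ and using the entrywise consequence of \autoref{thm:convergence_rate}, $|P^{i}_{uw}-\pi_w|\le C\alpha^{i}$, so that the $\pi$-contributions cancel and $f(i)\le 4C\alpha^{i}$ (with a constant independent of~$\varepsilon$). The local mixing time there only supplies the crude bound $f(i)\le\varepsilon'$ for $i\ge t_{\min}^{(\varepsilon')}$, which the paper then adds to the geometric tail; the $t_{\min}$ term in $\ell$ is essentially slack. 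By contrast, you work at the vector level: you use $t_{\min}^{(\varepsilon/4)}$ to make $\|\chi_{st}^\top P^{t^*}\|_1\le\varepsilon/4$, then split the resulting signed, zero-sum vector into two probability measures and apply \autoref{thm:convergence_rate} to each (valid by convexity since the theorem is stated for point masses), so that the two $\pi$ limits cancel and you obtain the multiplicatively sharper decay $\|\chi_{st}^\top P^{t^*+j}\|_1\le \tfrac{C\varepsilon}{2}\alpha^{j}$. Your argument thus makes the role of the local mixing time substantive rather than vestigial, and it avoids the entrywise bookkeeping in the paper's Claim; the price is that you need the (easy) extension of \autoref{thm:convergence_rate} from point masses to general initial distributions.
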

\begin{proof}
If we truncate the series at \autoref{eq:eff-res-series} after index $\ell$, the error we sustain is at most:
$$
\left|\sum\limits_{i=\ell}^\infty\chi_{st}^\top P^i D^{-1}\chi_{st}\right| \leq \sum\limits_{i=\ell}^\infty |\chi_{st}^\top P^i D^{-1}\chi_{st}|
$$
Suppose $\varepsilon' = \varepsilon/4$. Now, we have by Hölder's inequality that: 
\begin{align*}
|\chi_{st}^\top P^i D^{-1} \chi_{st}| &\leq ||\chi_{st}^\top P^i||_1\cdot ||D^{-1}\chi_{st}||_\infty \\
&\leq ||\mathbf{1}_sP^i-\mathbf{1}_t P^i||_1\cdot \left(\frac{1}{\min_{v\in V}\deg(v)}-\frac{1}{\max_{v\in V}\deg(v)}\right)\\
&\leq \varepsilon'
\end{align*}
for $\ell \geq t^{(\varepsilon')}_{\min}(s,t)$. We further claim that this term decreases at an exponential rate for ergodic Markov Chains. 
\begin{claim}
\label{claim:exponential-decrease}
It holds that $f(i):=|\chi_{st}^\top P^{i}D^{-1}\chi_{st}|=O(\alpha^{i})$ for some constant $0< \alpha < 1$. 
\end{claim}
\begin{proof}
We have that:
$$
f(i) := |x_{st}^\top P^i D^{-1} x_{st}| \leq |P^i_{ss}-2P^i_{st}+P^i_{tt}|
$$
Let $\eta = O(\alpha^t)$. Since our Markov Chain is ergodic, \autoref{thm:convergence_rate} gives that for any $w \in V$ we have $P^i_{uw} = (P^i 1_u)_w \leq \pi_w-\eta$ and $P^i_{uw} \geq \pi_w -\eta$, so:
$$
f(i) \leq |\pi_{u}+\eta - \pi_u - \pi_v + 2\eta + \pi_v + \eta| = 4\eta = O(\alpha^t)
$$
\end{proof}
With \autoref{claim:exponential-decrease} in place, we can bound the error term as follows:
\begin{align*}
\left|\sum\limits_{i=\ell}^\infty\chi_{st}^\top P^i D^{-1}\chi_{st}\right| < \varepsilon' +C \sum\limits_{i=\ell}^\infty \alpha^i \leq \varepsilon'+\frac{C\alpha^\ell}{1-\alpha}
\end{align*}
Setting $\ell > \frac{ \log \left( \frac{C}{\varepsilon' (1 - \alpha)} \right) }{ -\log \alpha }$ bounds the error by $2\varepsilon' = \frac{\varepsilon}{2}$.
\end{proof}
\begin{remark}
In general, this approach suffers from the same ``locality'' issue as the original algorithm of \citep{peng2021local} which requires knowing $\lambda$. In fact, the rate $\alpha$ is closely connected to $\lambda$ and they are both connected to the overall mixing time of the Markov chain via Cheeger's inequality. Even with a good estimate of $t_{\min}^{\varepsilon/4}$ we are unable to bound the error in really degenerate cases where $\alpha$ is very close to $1$, if we do not know $\alpha$ itself, which, like $\lambda$, is a global property. Therefore, for the remainder of this section we shall assume that $\ell = \Theta(t_{\text{min}^{\varepsilon/4}})$, ignoring such degenerate cases. 
\end{remark}

\subsection{Guessing $t_{\min}^{(\varepsilon)}$ by property testing}

Following \autoref{lem:local-mix-time-cutoff}, we are finally ready to design a new algorithm for estimating $R_G(s,t)$, one that does not require knowledge of $\lambda$. The idea is to provide an upper bound for $t_{\min}^{(\varepsilon)}(s,t)$ by using a property tester. The algorithm of \citep{batu2013testing} can be used to decide whether a Markov chain is close to mixing or not. The notion of ``almost-mixing'' used by that algorithm relates to our localized mixing:
\begin{definition}
A Markov Chain is $(\varepsilon,t)$-mixing if there exists some distribution $s$ such that for all $u \in V$ we have:
$$
||\mathbf{1}_u P^t-s||_1 \leq \varepsilon
$$
\end{definition}
This kind of mixing implies $\varepsilon$-$(s,t)$-mixing;
\begin{lemma}
If a Markov chain is $(\varepsilon,t)$-mixing, then for any states $u,v \in V$ we have $t_{\min}^{(2\varepsilon)}(s,t) \leq t$
\end{lemma}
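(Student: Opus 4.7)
The plan is to prove this by a direct triangle inequality on the $\ell_1$ norm, exploiting the fact that the $(\varepsilon, t)$-mixing condition supplies a single common reference distribution $s$ that is simultaneously close to $\mathbf{1}_u P^t$ and $\mathbf{1}_v P^t$.

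First, I would unpack the hypothesis: since the chain is $(\varepsilon, t)$-mixing, there exists a distribution $s$ such that for \emph{every} state $w \in V$, $\|\mathbf{1}_w P^t - s\|_1 \leq \varepsilon$. In particular, this holds simultaneously for $w = u$ and $w = v$. Next, I would add and subtract $s$ inside the target quantity and apply the triangle inequality for the $\ell_1$ norm:
\[
    \|\mathbf{1}_u P^t - \mathbf{1}_v P^t\|_1
    \;\leq\; \|\mathbf{1}_u P^t - s\|_1 + \|s - \mathbf{1}_v P^t\|_1
    \;\leq\; \varepsilon + \varepsilon = 2\varepsilon .
\]
By the definition of $t_{\min}^{(2\varepsilon)}(u,v)$ as the minimum index $i$ achieving an $\ell_1$ distance of at most $2\varepsilon$ between the two walk distributions, this immediately gives $t_{\min}^{(2\varepsilon)}(u,v) \leq t$.

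There is essentially no obstacle here; the only subtlety is conceptual, namely that $(\varepsilon, t)$-mixing is a \emph{uniform} statement (a single witness distribution $s$ works for all starting states), which is precisely what makes the triangle inequality go through for \emph{any} pair $u,v$ rather than just for a specific one. I would therefore keep the proof to just a few lines, stating the hypothesis for the two specific starting states $u$ and $v$ and then applying the triangle inequality as above.
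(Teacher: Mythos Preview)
Your proposal is correct and matches the paper's proof essentially line for line: both invoke the witness distribution from the $(\varepsilon,t)$-mixing definition and apply the triangle inequality for the $\ell_1$ norm to bound $\|\mathbf{1}_u P^t - \mathbf{1}_v P^t\|_1$ by $2\varepsilon$. The only cosmetic difference is that the paper names the reference distribution $q$ rather than $s$.
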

\begin{proof}
Suppose $q$ is the distribution that is referenced by the $(\varepsilon,t)$-mixing definition. By the triangle inequality:
$$
||\mathbf{1}_s P^t - \mathbf{1}_v P^t||_1 \leq ||\mathbf{1}_s P^t - q||_1 + ||\mathbf{1}_tP^t - q||_1 \leq 2\varepsilon
$$
\end{proof}
As a result, if we can find some $t$ such that our Markov Chain is $(\varepsilon/2, t)$-mixing, we would have an upper bound to $t_{\min}^{(\varepsilon)}$. We can do this via binary search, assuming that we possess an oracle for deciding approximate mixing. The problem of deciding whether a Markov Chain is $(\varepsilon,t)$-mixing can be reduced to the problem of testing whether two distributions are close to each other in $\ell_1$-distance. The algorithm provided by \citep{batu2013testing} provides the following guarantee:
\begin{theorem}[Mixing Test \citep{batu2013testing}]
\label{thm:mixing-tester}
Let \( M \) be a Markov chain. Suppose we are given a tester $\mathcal{T}$ for closeness of distributions in $\ell_1$ norm with time complexity \( T(n, \varepsilon, \delta) \) and distance gap \( f(\varepsilon) \) that takes as input sample oracles to distributions $p, q \in \Delta([n])$ and outputs, with probability at least $1-\delta$:
\begin{itemize}
    \item ``accept'' if $||p-q||_1 \leq f(\varepsilon)$
    \item ``reject'' if $||p-q||_1 > \varepsilon$
\end{itemize}

Then there exists a tester $\mathcal{T}_{\text{mixing}}$ with time complexity \( O(n t \cdot T(n, \varepsilon, \delta/n)) \) such that:

\begin{itemize}
    \item If \( M \) is \( (f(\varepsilon)/2, t) \)-mixing, then \( \Pr[\text{M is accepted}] > 1 - \delta \),
    \item If \( M \) is not \( (\varepsilon, t) \)-mixing, then \( \Pr[\text{M is accepted}] < \delta \).
\end{itemize}
\end{theorem}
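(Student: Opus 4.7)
The plan is to reduce $(\varepsilon, t)$-mixing testing to $n$ invocations of the given closeness tester $\mathcal{T}$. The key structural observation is that the existential condition in the definition of $(\varepsilon, t)$-mixing---the existence of some common distribution $s$ with $\|\mathbf{1}_u P^t - s\|_1 \le \varepsilon$ for every $u$---can be both certified and refuted by pairwise comparisons against a single fixed reference distribution via the triangle inequality. Concretely, pick an arbitrary reference state $u_0 \in V$, and for each $u \in V$ run $\mathcal{T}$ on the pair $(\mathbf{1}_{u_0} P^t,\, \mathbf{1}_u P^t)$ with parameters $(\varepsilon, \delta/n)$. Samples from each $\mathbf{1}_u P^t$ are obtained by simulating independent random walks of length $t$ starting at $u$. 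Accept if and only if all $n$ invocations accept.

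For completeness, suppose $M$ is $(f(\varepsilon)/2, t)$-mixing with witness distribution $s$. Then for every $u$, the triangle inequality gives
\[
\|\mathbf{1}_u P^t - \mathbf{1}_{u_0} P^t\|_1 \le \|\mathbf{1}_u P^t - s\|_1 + \|s - \mathbf{1}_{u_0} P^t\|_1 \le f(\varepsilon)/2 + f(\varepsilon)/2 = f(\varepsilon),
\]
so each call lies in the accept regime of $\mathcal{T}$ and accepts with probability at least $1 - \delta/n$. A union bound over the $n$ calls then yields overall acceptance probability at least $1 - \delta$. For soundness, if $M$ is not $(\varepsilon, t)$-mixing, then in particular the choice $s := \mathbf{1}_{u_0} P^t$ fails to witness mixing, so there exists $u^\star$ with $\|\mathbf{1}_{u^\star} P^t - \mathbf{1}_{u_0} P^t\|_1 > \varepsilon$. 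The invocation of $\mathcal{T}$ on $u^\star$ therefore rejects with probability at least $1 - \delta/n$, so the overall algorithm accepts with probability strictly less than $\delta$.

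The runtime is straightforward: each sample from $\mathbf{1}_u P^t$ is produced by one length-$t$ random walk and costs $O(t)$ time; each of the $n$ tester invocations consumes $T(n, \varepsilon, \delta/n)$ samples, for a total of $O(n\, t\, T(n, \varepsilon, \delta/n))$. The main subtlety, and the reason the guarantee is stated with an asymmetric gap, is precisely the halving from $\varepsilon$ to $f(\varepsilon)/2$ in the completeness side: this factor of two is exactly the triangle-inequality slack consumed when replacing the existentially quantified witness $s$ by the concrete reference $\mathbf{1}_{u_0} P^t$, and cannot be avoided by this reduction. A secondary point to verify is that successive random walks produce genuinely independent samples from $\mathbf{1}_u P^t$---this is immediate since each walk is restarted at $u$---so the distribution-oracle assumption needed by $\mathcal{T}$ is satisfied without further work.
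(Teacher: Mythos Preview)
The paper does not supply its own proof of this theorem; it is quoted as a known result from \citet{batu2013testing}. Your proposal correctly reconstructs the standard reduction from that reference: fix a reference state $u_0$, run the closeness tester on the $n$ pairs $(\mathbf{1}_{u_0}P^t,\mathbf{1}_u P^t)$ with confidence boosted to $1-\delta/n$, accept iff all calls accept, and simulate each oracle sample by a length-$t$ walk. The completeness argument via the triangle inequality (which explains the factor $1/2$ in $f(\varepsilon)/2$) and the soundness argument via the specific failed witness $s=\mathbf{1}_{u_0}P^t$ are both correct, as is the runtime accounting. One minor wording issue: the theorem statement calls $T(n,\varepsilon,\delta)$ a \emph{time} complexity while you treat it as a \emph{sample} complexity in the final paragraph; this is harmless because the number of oracle queries is trivially bounded by the running time, so the $O(t)$ per-sample simulation cost still yields $O(n\,t\,T(n,\varepsilon,\delta/n))$ overall.
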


We can use the $\ell_1$ closeness tester of \citep{chan2014optimal} as the tester $\mathcal{T}$. As shown in \citep{chan2014optimal}, this is an optimal sample complexity for any tester.
\begin{theorem}[Closeness test \citep{chan2014optimal}]
\label{thm:closeness}
There exists a tester that runs in $O(\max\{n^{2/3}\varepsilon^{-4/3},n^{1/2}\varepsilon^{-2}\})$ time that, with probability at least $2/3$, distinguishes between $p=q$ and $||p-q||_1 \geq \varepsilon$. 
\end{theorem}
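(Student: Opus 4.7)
The plan is to reduce $\ell_1$-closeness to an $\ell_2$-closeness test built from a collision-style unbiased estimator, combined with a flattening/bucketing step to handle non-flat distributions. First I would Poissonize the sampling by drawing $m \sim \mathrm{Poi}(s)$ samples from each of $p$ and $q$, so that the per-symbol counts $X_i$ and $Y_i$ become independent with $X_i \sim \mathrm{Poi}(s p_i)$ and $Y_i \sim \mathrm{Poi}(s q_i)$. Then I would form the statistic $Z = \sum_{i=1}^n \bigl((X_i - Y_i)^2 - X_i - Y_i\bigr)$, which is an unbiased estimator of $s^2 \|p - q\|_2^2$.

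The second step is a careful variance computation for $Z$, leveraging the per-coordinate independence provided by Poissonization. The bound takes the form $\mathrm{Var}(Z) = O\bigl(s^2 (\|p\|_2^2 + \|q\|_2^2) + s^3 \|p-q\|_2^2 (\|p\|_2 + \|q\|_2)\bigr)$ plus lower-order cross terms. Chebyshev then shows that $s = O\bigl(\sqrt{\|p\|_2^2 + \|q\|_2^2}/\eta^2\bigr)$ samples suffice to distinguish $p = q$ from $\|p - q\|_2 \ge \eta$ with constant probability. Via Cauchy--Schwarz, setting $\eta = \varepsilon/\sqrt{n}$ already gives an $\ell_1$-tester using $O(\sqrt{n}/\varepsilon^2)$ samples in the ``flat'' regime $\|p\|_2^2, \|q\|_2^2 = O(1/n)$.

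To handle general $p, q$, I would introduce a threshold $\tau$ and split the coordinates into a heavy part ($p_i + q_i \ge \tau$) and a light part ($p_i + q_i < \tau$). The contribution of the heavy elements to $\|p - q\|_1$ can be estimated directly (empirical frequencies plus a Cauchy--Schwarz aggregation) using $\widetilde{O}(1/(\tau \varepsilon^2))$ samples, since there are at most $O(1/\tau)$ such elements. For the light part, we have $\|p\|_2^2 + \|q\|_2^2 \le \tau$, so the $\ell_2$-tester above detects any $\|p - q\|_2 \ge \varepsilon/\sqrt{n}$ using $O(\sqrt{\tau} \cdot n/\varepsilon^2)$ samples. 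Choosing $\tau$ to balance the two costs (roughly $\tau \asymp n^{-2/3}$) and doing the corresponding min/max with the flat-case bound yields the claimed $O\bigl(\max\{n^{2/3}\varepsilon^{-4/3}, n^{1/2}\varepsilon^{-2}\}\bigr)$ complexity.

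The main obstacle is the variance bound on $Z$: the contributions $\mathrm{Var}((X_i - Y_i)^2)$ couple $(p_i - q_i)^2$ with $p_i + q_i$, so making them tight requires a careful coordinate-wise grouping rather than a naive $\ell_\infty$ bound; it is this tightness that produces the $n^{2/3}$ exponent instead of a weaker $n^{3/4}$. A secondary subtlety is that the samples used to identify which coordinates are heavy must be kept independent from those feeding $Z$, which forces a clean splitting of the sample budget across the two stages and a separate union bound over the heavy-coordinate detection.
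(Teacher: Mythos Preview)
The paper does not prove this theorem; it is quoted as a black-box result from \cite{chan2014optimal} and used only as an ingredient in the mixing tester. So there is no ``paper's own proof'' to compare against beyond the citation itself.

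That said, your sketch has a real gap in the $\varepsilon$-dependence. Carry out your balancing: with threshold $\tau$, the light part has $\|p_{\mathrm{light}}\|_2^2 \le \tau$, and Cauchy--Schwarz over the (up to) $n$ light coordinates forces $\eta = \varepsilon/\sqrt{n}$, so the $\ell_2$-tester costs $\Theta(\sqrt{\tau}\,n/\varepsilon^2)$. The heavy part costs $\Theta(1/(\tau\varepsilon^2))$ as you say. Equating gives $\tau = n^{-2/3}$ and total cost $\Theta(n^{2/3}/\varepsilon^2)$, not $n^{2/3}\varepsilon^{-4/3}$. The problem is that a fixed threshold cannot exploit the interplay between the sample budget and the flatness you achieve: knowing $\|p-q\|_\infty \le \tau$ on the light part gives no improvement over the raw Cauchy--Schwarz bound $\|p-q\|_2 \ge \|p-q\|_1/\sqrt{n}$.

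What \cite{chan2014optimal} actually do is different from a heavy/light split. They draw a preliminary Poisson$(k)$ sample and, for each arrival on coordinate $i$, \emph{split} that coordinate into two virtual copies; this produces flattened distributions $p',q'$ on support $n+O(k)$ with $\|p'\|_2,\|q'\|_2 = O(1/\sqrt{k})$ while preserving $\|p'-q'\|_1 = \|p-q\|_1$. Running your $\ell_2$-statistic on the flattened pair then needs $s = O\bigl((n+k)/(\sqrt{k}\,\varepsilon^2)\bigr)$ samples; choosing $k$ on the order of $s$ and solving $s^{3/2} \asymp n/\varepsilon^2$ yields $s = O(n^{2/3}\varepsilon^{-4/3})$, with the $\sqrt{n}/\varepsilon^2$ term arising once $k \ge n$. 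Your Poissonized statistic $Z$ and its variance analysis are the right engine; it is the flattening step that must be adaptive (splitting proportional to the final sample budget) rather than a static threshold, and that is precisely what recovers the $\varepsilon^{-4/3}$ exponent.
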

Since $f(\varepsilon) = 0$, combining \autoref{thm:mixing-tester} and \autoref{thm:closeness}, along with a repetitive boosting argument gives us an algorithm $\mathcal{T}_{\text{mixing}}$ that runs in $O(t\cdot n^{5/3}\varepsilon^{-2}\log(1/\delta))$ time and can, with probability at least $1-\delta$ distinguish whether a markov chain is $(\varepsilon,t)$-mixing or not. 

Putting everything together, we consider \textit{binary searching} for $t$ such that $M$ is $(\varepsilon,t)$-mixing. We know that $1 \leq t \leq D$, where $D$ is the diameter of the graph. Using $\mathcal{T}_{\text{mixing}}$ with a boosted success probability of $1-\frac{\delta}{\log(D)}$ we conclude the following theorem:
\begin{theorem}
There exists a (local) algorithm that determines the smallest $t$ such that a Markov Chain $M$ with $n$ states is $(\varepsilon,t)$-mixing. The algorithm runs in $O(D\log(D)n^{5/3}\varepsilon^{-2}\log(1/\delta))$ time, where $D$ is the diameter of the graph.
\end{theorem}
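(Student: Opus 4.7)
The plan is to perform a binary search over $t \in [1, D]$ for the smallest value at which the Markov chain $M$ is certified $(\varepsilon, t)$-mixing, where each query in the binary search invokes the tester $\mathcal{T}_{\text{mixing}}$ obtained by plugging the $\ell_1$-closeness tester of \autoref{thm:closeness} into the reduction of \autoref{thm:mixing-tester}. Because being $(\varepsilon,t)$-mixing is essentially monotone in $t$ (total-variation distance to the stationary only shrinks as the chain continues to evolve), the set of valid $t$ is upward-closed and the binary search is well defined. The diameter $D$ is a safe right endpoint: a simple random walk on a connected graph spreads over the state space in a polynomial of $D$ and $n$ steps, and we can default to reporting $t = D$ at the boundary if no smaller value is certified.

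For a single query with parameters $(\varepsilon, t, \delta')$, the discussion preceding the theorem already establishes that $\mathcal{T}_{\text{mixing}}$ runs in $O(t \cdot n^{5/3}\varepsilon^{-2}\log(1/\delta'))$ time. To control the error uniformly across the binary search, I would boost each of the $O(\log D)$ queries to failure probability $\delta' = \delta/\log D$, and then union-bound over them to obtain overall failure probability at most $\delta$. The per-query time becomes $O(t \cdot n^{5/3}\varepsilon^{-2}\log(\log(D)/\delta))$, and multiplying by $O(\log D)$ queries with the worst case $t = D$ yields the claimed bound
\[
O\!\left( D \log(D)\, n^{5/3}\, \varepsilon^{-2}\, \log(1/\delta) \right),
\]
absorbing $\log\log D$ into $\log(1/\delta)$ in the usual way.

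The main subtlety, and the step I expect to have to argue most carefully, is that the closeness tester of \autoref{thm:closeness} has the degenerate gap $f(\varepsilon) = 0$, so $\mathcal{T}_{\text{mixing}}$ formally distinguishes only between chains that are $(0, t)$-mixing and chains that are not $(\varepsilon, t)$-mixing, leaving an intermediate regime where its output is unconstrained. I would handle this by redefining the target of the search to be the smallest $t$ at which $\mathcal{T}_{\text{mixing}}$ accepts: monotonicity ensures this value is well defined and is sandwiched between the true $(0,t)$-mixing time and the true $(\varepsilon,t)$-mixing time, which is exactly what the downstream use in \autoref{lem:local-mix-time-cutoff} needs to upper bound $t_{\min}^{(\varepsilon)}(s,t)$. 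The remainder of the proof is the union bound over binary-search queries together with the runtime accounting above.
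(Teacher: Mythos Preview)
Your proposal is correct and follows essentially the same approach as the paper: binary search over $t \in [1, D]$, invoking $\mathcal{T}_{\text{mixing}}$ at each step with success probability boosted to $1 - \delta/\log D$, then union-bounding over the $O(\log D)$ queries to obtain the stated runtime. If anything, you are more explicit than the paper about the monotonicity of $(\varepsilon,t)$-mixing and the $f(\varepsilon)=0$ gap issue, both of which the paper simply glosses over.
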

As a consequence, this algorithm gives an upper bound to $t_{\min}^{(\varepsilon)}$ and can therefore be used in the context of \autoref{lem:local-mix-time-cutoff} to approximate the effective resistance without knowledge of spectral gap $\lambda$. 

\section{Additional Experimental Results}
\label{appx:additional-experiments}
In this section we deposit additional experimental results and information pertaining to our algorithms.

\subsection{Synthetic Experiments Setup Details}
\autoref{table:networks} contains information about the networks we used for our synthetic experiments in the main body of the paper.
\begin{table}[h]
  \caption{Synthetic Networks}
  \label{table:networks}
  \centering
  \begin{tabular}{lll}
    \toprule
    Name & $n$ & $m$ \\
    \midrule
    Erdos-Renyi $p=0.01$ & $1\ 000$  & $5\ 007.4 \pm 32.0$     \\
    Barabasi-Albert $k=10$ & $1\ 000$ & $9\ 900$      \\
    Communities $p_{\mathrm{inter}}=0.01, p_{\mathrm{intra}}=0.05$ & $1\ 000$ & $9\ 021.4 \pm 51.8$  \\
    Football & $115$ & $613$ \\
    Facebook & $4\ 039$ & $88\ 234$ \\
    \bottomrule
  \end{tabular}
\end{table}





\subsection{Benefits from parallelism}
Our algorithms are highly parallelizable because their local nature lends itself to multi-threaded computation. Through parallelization we obtain even better performance in computing the hitting time compared to the exact solver, as shown in \autoref{table:parallelization}.
Here, we compute hitting times on the Facebook network
using $10000$ random walks.

\begin{table}
  \caption{Speedup through parallelism.
  We show the running time in seconds.}
  \label{table:parallelization}
  \centering
  \vspace{1mm}
  \begin{tabular}{lllll}
    \toprule
    Algorithm / Number of cores
    & $1$ & $5$ & $10$ & $20$ \\
    \midrule
    Meeting Time Algorithm &
    $0.554 \pm 0.071$ &
    $0.377 \pm 0.065$ &
    $0.335 \pm 0.026$ &
    $0.359 \pm 0.018$ \\
    Exact Algorithm & 
    $2.165 \pm 1.270$ &
    $3.556 \pm 2.047$ &
    $1.255 \pm 0.026$ &
    $1.238 \pm 0.024$ \\
    \bottomrule
  \end{tabular}
\end{table}

\subsection{Ablation studies}
We present additional ablation studies evaluating runtime and relative estimation error as functions of the number of random walks, using the Football and Facebook networks. We also perform the same experiment for our synthetic datasets. In the Football network, our cutoff algorithm under-performs the others, likely due to an overestimated $\lambda$ parameter—highlighting its sensitivity to hyperparameter tuning. In contrast, \autoref{alg:ht-meeting} appears more robust. We also visualize the correlation between hitting times and two pair sampling measures, illustrating how our method differs from simpler uniform sampling.

\begin{figure}[h]
    \centering
    \includegraphics[width=0.33\linewidth]{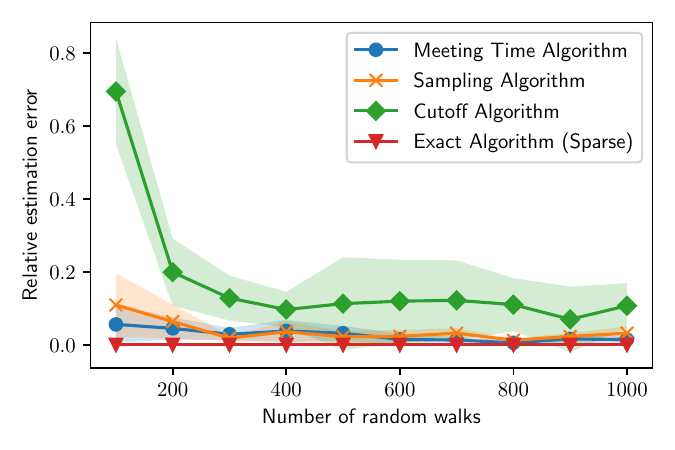}~
    \includegraphics[width=0.33\linewidth]{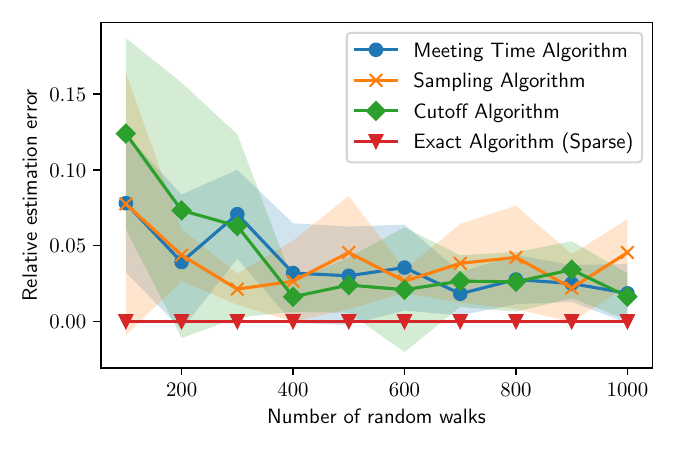} \\
    
    \includegraphics[width=0.33\linewidth]{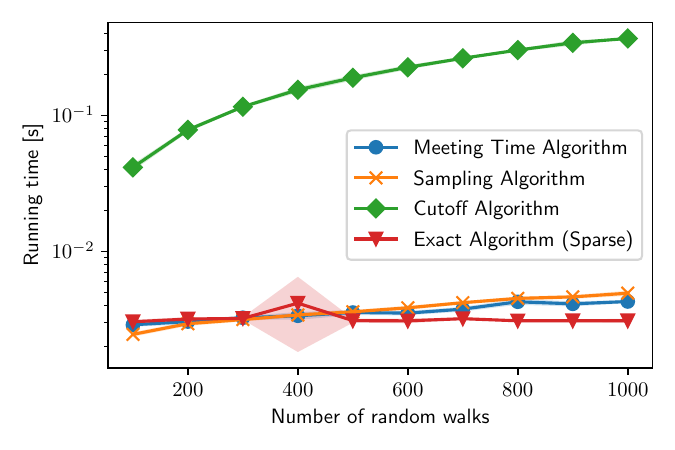}~
    \includegraphics[width=0.33\linewidth]{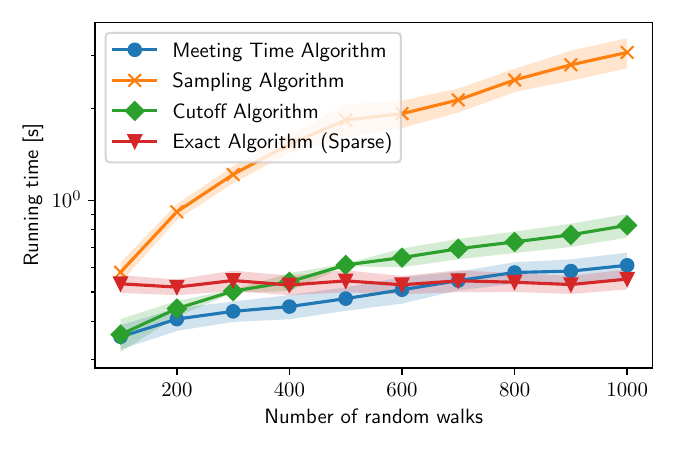} \\
    
    \caption{Ablation study for the number
    of random walks in the Football and Facebook networks.}
    \label{fig:ablation-1}
\end{figure}

\begin{figure}
    \centering
    \includegraphics[width=0.33\linewidth]{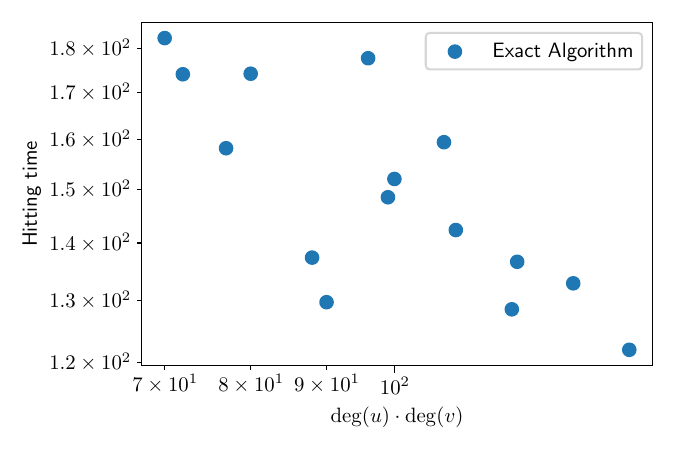}~
    \includegraphics[width=0.33\linewidth]{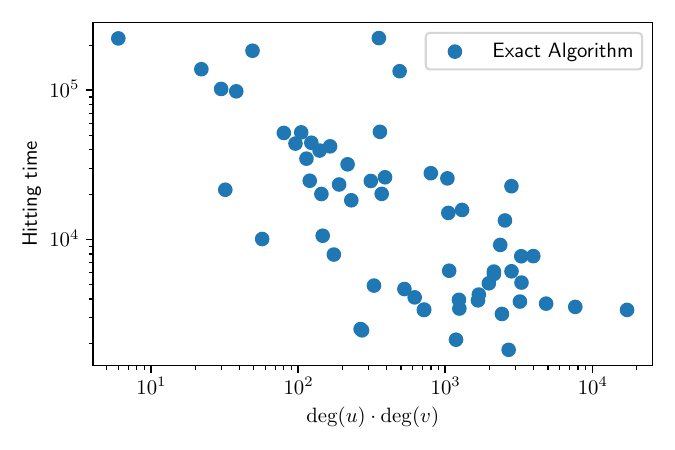} \\
    
    \includegraphics[width=0.33\linewidth]{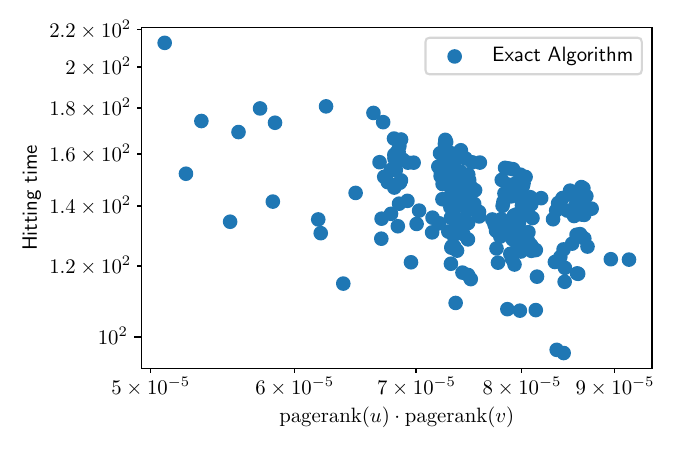}~
    \includegraphics[width=0.33\linewidth]{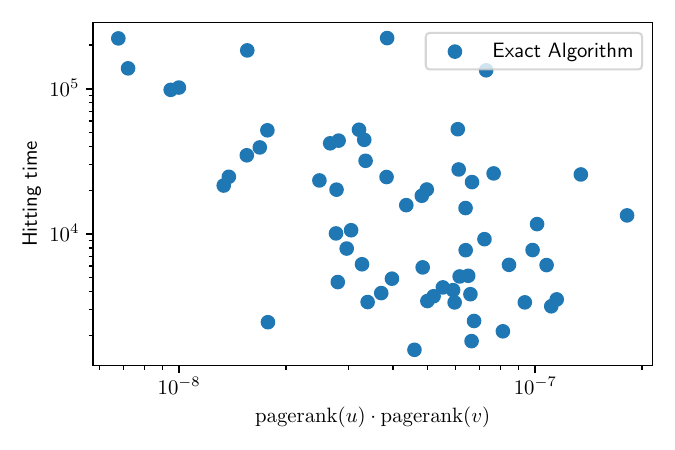} \\
    
    \caption{Correlation between the
    hitting time and the degree product
    $\deg(u) \cdot \deg(v)$ (top)
    and the product of pagerank
    centralities $\mathrm{pagerank}(u)
    \cdot \mathrm{pagerank}(v)$ (bottom)
    on the Football (left) and
    Facebook (right) networks.
    }
    \label{fig:ht-corr}
\end{figure}

\begin{figure}
    \centering
    \includegraphics[width=0.33\linewidth]{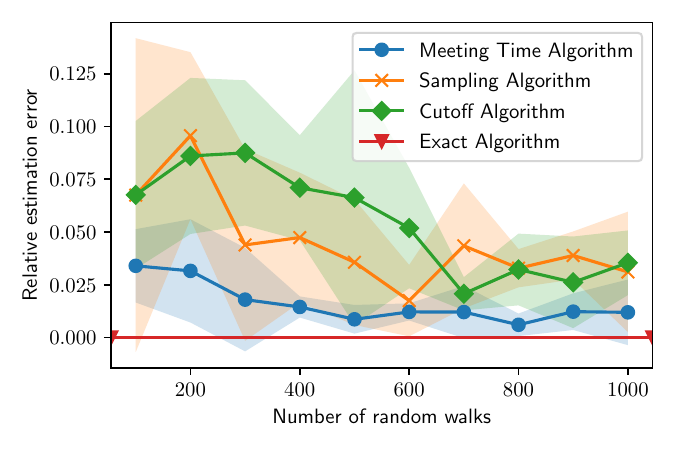}~
    \includegraphics[width=0.33\linewidth]{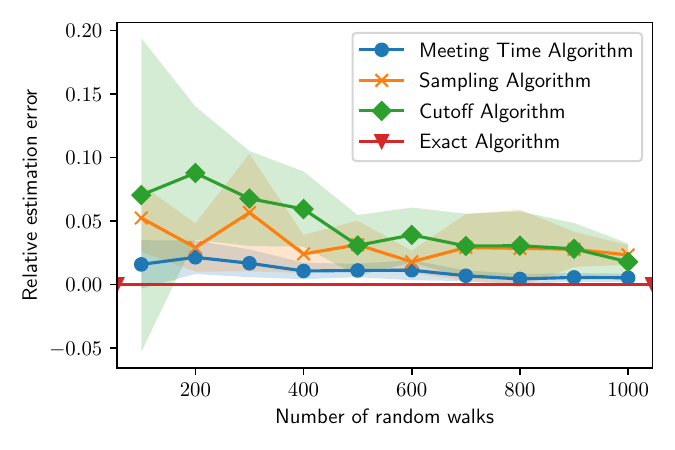}~
    \includegraphics[width=0.33\linewidth]{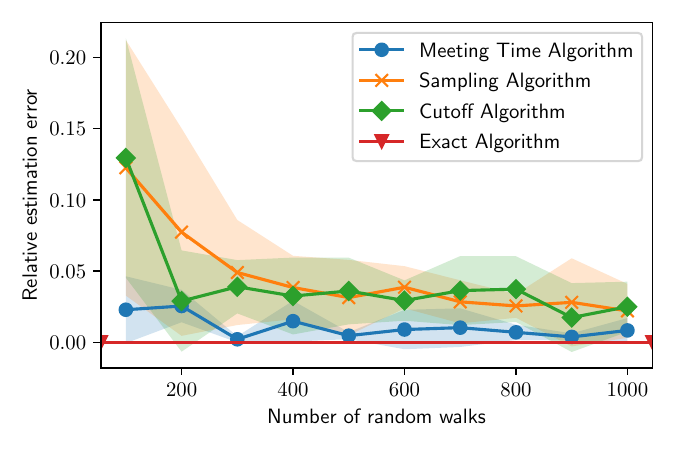} \\
    
    \includegraphics[width=0.33\linewidth]{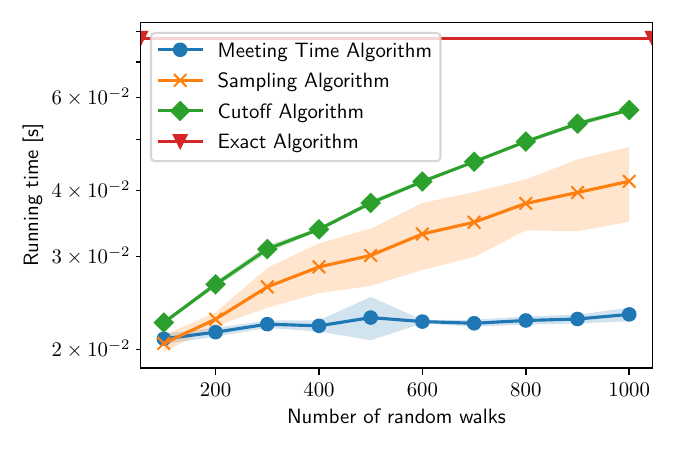}~
    \includegraphics[width=0.33\linewidth]{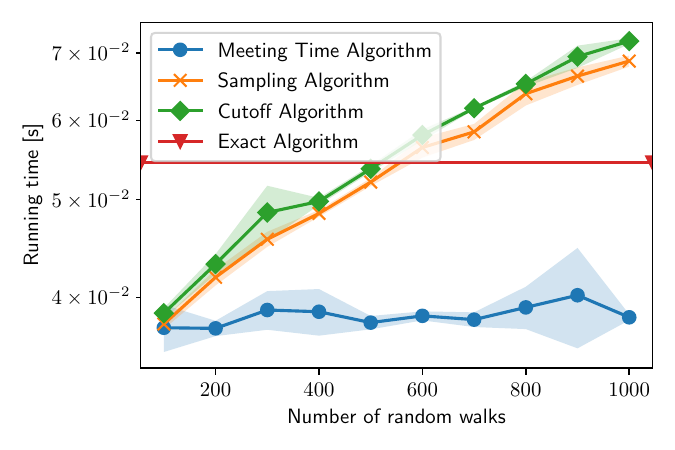}~
    \includegraphics[width=0.33\linewidth]{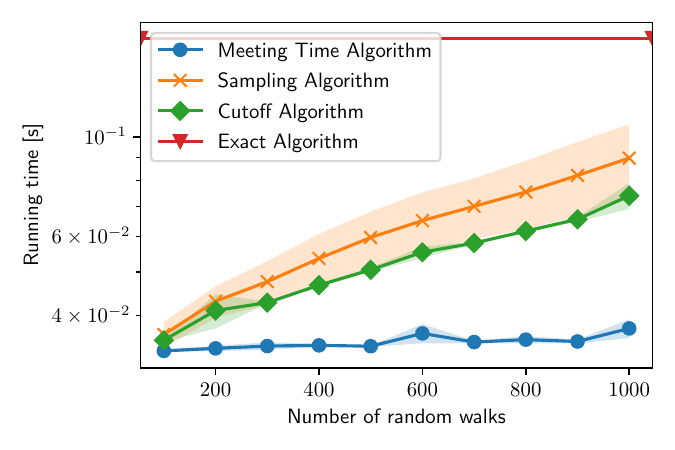} \\
    
    \includegraphics[width=0.33\linewidth]{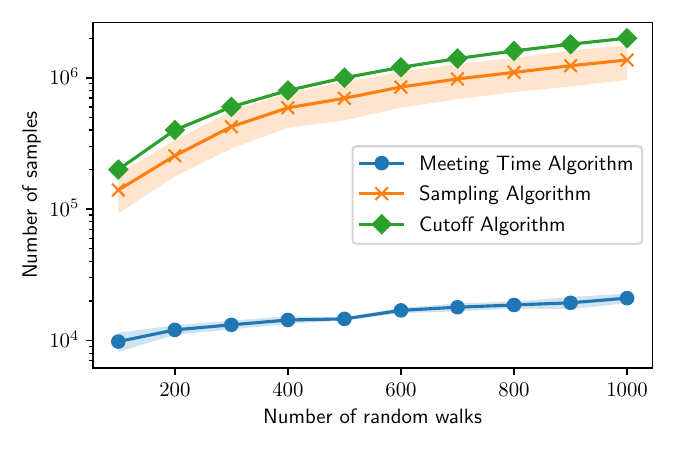}~
    \includegraphics[width=0.33\linewidth]{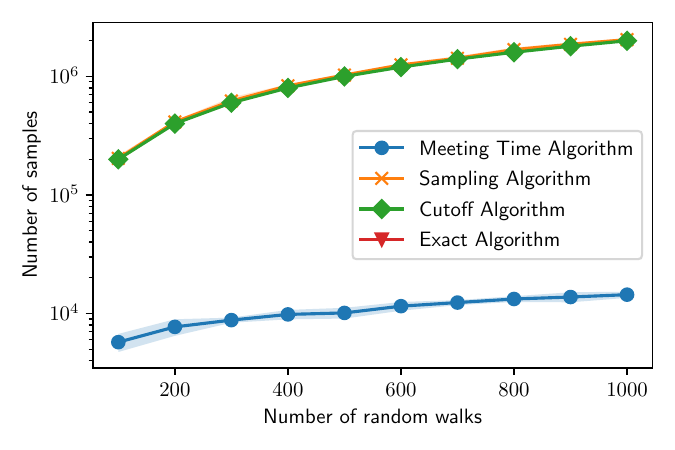}~
    \includegraphics[width=0.33\linewidth]{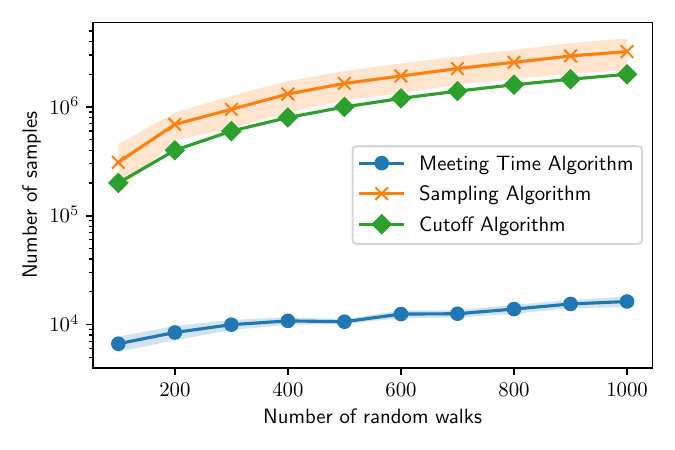}
    
    \caption{Ablation study for the number
    of random walks in synthetic datasets. 
    }
    \label{fig:enter-label}
\end{figure}

\begin{figure}
    \centering
    \includegraphics[width=0.33\linewidth]{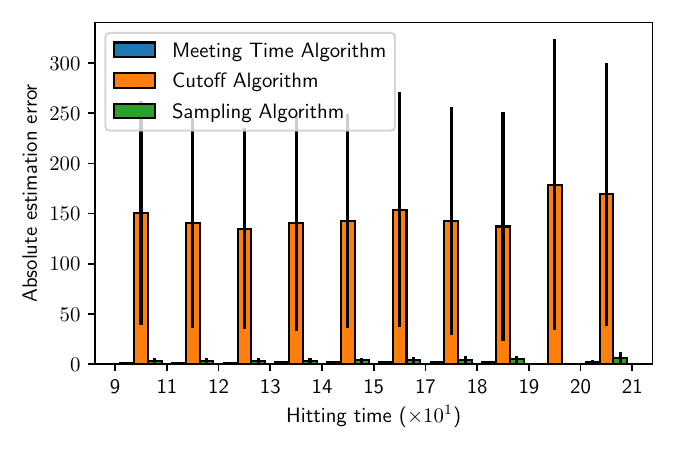}~
    \includegraphics[width=0.33\linewidth]{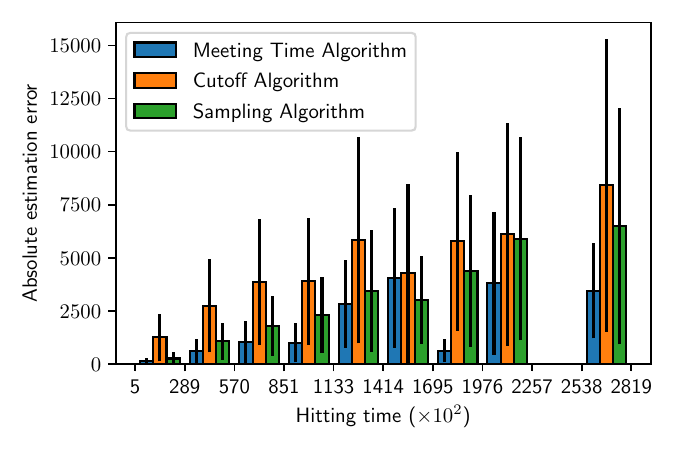}
    \caption{Absolute estimation error
    for uniformly sampled pairs on the
    Football (left) and Facebook (right)
    networks.
    The estimation errors are grouped by
    the true hitting time.
    }
    \label{fig:enter-label}
\end{figure}

\end{document}